\def\dOi{10(2:3)2014}
\subjclass{F.4.1, F.2.2,  I.2.4,  H.2.3, H.2.4}
\theoremstyle{plain}
\newcommand{\calC}{\mathcal{C}}
\newcommand{\calD}{\mathcal{D}}
\newcommand{\calH}{\mathcal{H}}
\newcommand{\calI}{\mathcal{I}}
\newcommand{\calK}{\mathcal{K}}
\newcommand{\calL}{\mathcal{L}}
\newcommand{\calO}{\mathcal{O}}
\newcommand{\calP}{\mathcal{P}}
\newcommand{\calQ}{\mathcal{Q}}
\newcommand{\calR}{\mathcal{R}}
\newcommand{\calT}{\mathcal{T}}
\newcommand{\frakA}{\mathfrak{A}}
\newcommand{\frakB}{\mathfrak{B}}
\newcommand{\frakC}{\mathfrak{C}}
\newcommand{\frakF}{\mathfrak{F}}
\newcommand{\frakH}{\mathfrak{H}}
\newcommand{\frakI}{\mathfrak{I}}
\newcommand{\frakM}{\mathfrak{M}}
\newcommand{\frakQ}{\mathfrak{Q}}
\newcommand{\frakR}{\mathfrak{R}}
\newcommand{\frakS}{\mathfrak{S}}
\newcommand{\frakT}{\mathfrak{T}}
\newcommand{\NN}{\mathbb{N}}
\newcommand{\liff}{\leftrightarrow}
\newcommand{\limp}{\rightarrow}
\newcommand{\mathsc}[1]{\text{\textsc{#1}}} 
\newcommand{\PTime}{\ensuremath{\mathsc{Ptime}}\xspace}
\newcommand{\ExpTime}{\ensuremath{\mathsc{ExpTime}}\xspace}
\newcommand{\NExpTime}{\ensuremath{\mathsc{NExpTime}}\xspace}
\newcommand{\twoExpTime}{\ensuremath{\mathsc{2ExpTime}}\xspace}
\newcommand{\FOtwo}{\ensuremath{\mathsf{FO}^2}\xspace}
\newcommand{\Ctwo}{\ensuremath{\mathsf{C}^2}\xspace}
\newcommand{\GCtwo}{\ensuremath{\mathsf{GC}^2}\xspace}
\newcommand{\GF}{\ensuremath{\mathsf{GF}}\xspace}
\newcommand{\GFO}{\ensuremath{\mathsf{GF}}\xspace}
\newcommand{\CGF}{\ensuremath{\mathsf{CGF}}\xspace}
\newcommand{\LFP}{\ensuremath{\mathsf{LFP}}\xspace}
\newcommand{\IFP}{\ensuremath{\mathsf{IFP}}\xspace}
\newcommand{\CQ}{\ensuremath{\mathsf{CQ}}\xspace}
\newcommand{\BCQ}{\ensuremath{\mathsf{BCQ}}\xspace}
\newcommand{\UCQ}{\ensuremath{\mathsf{UCQ}}\xspace}
\newcommand{\ACQ}{\ensuremath{\mathsf{ACQ}}\xspace}
\newcommand{\atype}{{\mathrm{atp}}}
\newcommand{\gbisim}{\sim_\mathrm{g}}
\newcommand{\gbisimn}[1]{\sim_\mathrm{g}^{#1}}
\newcommand{\downcl}[1]{{#1}\!{\downarrow}}
\newcommand{\covers}{\stackrel{\sim~}{\rightarrow}}
\newcommand{\longcovers}{\stackrel{\sim}{\longrightarrow}}
\newcommand{\Inv}{{\mathbb{I}}}
\newcommand{\Gam}{{\mathbb{G}}}
\newcommand{\GamDom}{{G}}
\newcommand{\canon}{{\mathrm{can}}}
\newcommand{\dom}{{\mathrm{dom}}}
\newcommand{\img}{{\mathrm{img}}}
\newcommand{\rhoext}{\hspace{0.1em}{\buildrel \rho \over \longrightarrow}\hspace{0.1em}}
\newcommand{\fatrho}{\rho\hspace{-0.565em}\rho\hspace{0.1em}}
\newcommand{\tup}[1]{{\overline{{#1}}}}  % tuples
\newcommand{\sib}{\equiv} % sibling relation
\newcommand{\pred}{\prec} % predecessor relation (was \to)
\newcommand{\predinv}{\succ}   % its inverse (was \leftarrow)
\begin{document}

%/////////////////////////////  T I T L E   P A G E  \\\\\\\\\\\\\\\\\\\\\\\\\\\\\\\\\\\

\newcommand{\nop}[1]{}             

\title[Querying the Guarded Fragment]{Querying the Guarded Fragment\rsuper*}

\author[V.~B\'{a}r\'{a}ny]{Vince B\'{a}r\'{a}ny\rsuper a}
\address{{\lsuper{a,c}}Department of Mathematics, Technische Universit\"at Darmstadt} 
% Institute of Informatics, Warsaw University, Banacha 2, 02-511 Warszawa, Poland
\email{\{vbarany,otto\}@mathematik.tu-darmstadt.de} % vbarany@mimuw.edu.pl
\thanks{{\lsuper a}Vince B\'{a}r\'{a}ny gratefully acknowledges funding received 
from ERC Starting Grant Sosna while affiliated with the Institute of Informatics 
at Warsaw University.}

\author[G.~Gottlob]{Georg Gottlob\rsuper b}
\address{{\lsuper b}Oxford University Computing Laboratory, Wolfson Building, Parks Rd., OX1 3QD Oxford, UK}
\email{georg.gottlob@comlab.ox.ac.uk}
\thanks{{\lsuper b}Georg Gottlob's research was funded by the European Research Council under the European 
  Community's Seventh Framework Programme (FP7/2007-2013) / ERC grant agreement no.~246858. 
  Georg Gottlob also gratefully acknowledges a Royal Society Wolfson Research Merit Award.}

\author[M.~Otto]{Martin Otto\rsuper c}
\address{\vspace{-18 pt}}
%\email{otto@mathematik.tu-darmstadt.de} 
\thanks{{\lsuper c}Martin Otto's research has been partially supported by DFG grant no.~OT~147/5-1.}

%% required for running head on odd and even pages, use suitable
%% abbreviations in case of long titles and many authors:

%% mandatory lists of keywords and classifications:

\keywords{guarded fragment, finite model theory, descriptive complexity, hypergraph covers, 
          conjunctive queries, tuple-generating dependencies,description logics}

\titlecomment{{\lsuper*}This is an improved and extended version of~\cite{BGOLics}.}

%% the abstract has to PRECEDE the command \maketitle:
%% be sure not to issue the \maketitle command twice!

\begin{abstract}
Evaluating a Boolean conjunctive query $q$ against a guarded first-order theory 
$\varphi$ is equivalent to checking whether $\varphi\wedge \neg q$ is unsatisfiable. 
This problem is relevant to the areas of database theory and description logic. 
Since $q$ may not be guarded, well known results about the decidability, complexity, 
and finite-model property of the guarded fragment do not obviously carry over to 
conjunctive query answering over guarded theories, and had been left open in general. 
By investigating finite guarded bisimilar covers of hypergraphs and relational structures, 
and by substantially generalising Rosati's finite chase, we prove for guarded 
theories $\varphi$ and (unions of) conjunctive queries $q$ that 
(i) $\varphi\models q$ iff  $\varphi \models_{\mathrm fin} q$, 
  that is, iff $q$ is true in every finite model of $\varphi$ and 
(ii) determining whether $\varphi\models q$ is {2EXPTIME}-complete. 
We further show the following results:   
(iii) the existence of polynomial-size conformal covers 
  of arbitrary hypergraphs;  
(iv) a new proof of the finite model property of the 
  clique-guarded fragment; 
(v) the small model property of the guarded fragment 
  with optimal bounds;  
(vi) a polynomial-time solution to the canonisation problem 
  modulo guarded bisimulation, which yields 
(vii) a capturing result for guarded bisimulation invariant %fragment of 
  PTIME.
\end{abstract}

%/////////////////////////////  M A K E   T I T L E  \\\\\\\\\\\\\\\\\\\\\\\\\\\\\\\\\\\

\maketitle

%////////////////////////////////   B  O  D  Y   \\\\\\\\\\\\\\\\\\\\\\\\\\\\\\\\\\\\\\\

%==========================================================================================
\section{Introduction} \label{sec:intro}
%==========================================================================================

The {\em guarded fragment} of first-order logic (\GFO), defined through 
the relativisation of quantifiers by atomic formulas, was introduced 
by Andr\'eka, van Benthem, and N\'emeti~\cite{ABN98JPL}, who proved that 
the satisfiability problem for $\GF$ is decidable. 
Gr\"adel~\cite{Gr99JSL} proved that every satisfiable guarded first-order sentence 
has a finite model, i.e., that $\GF$ has the {\em finite model property} (FMP). 
In the same paper, Gr\"adel also proved that satisfiability of $\GFO$-sentences 
is complete for \twoExpTime, and \ExpTime-complete for sentences involving 
relations of bounded arity. 
The guarded fragment has since been intensively studied and extended in various ways. 
For example, the {\em clique guarded fragment $(\CGF)$}~\cite{Gr99Clique} properly 
extends \GFO but still enjoys the finite model property as shown by Hodkinson~\cite{Hod02SL}, 
see also~\cite{HO03} for a simpler proof. 
Guardedness has emerged as a main new paradigm for decidability and other benign properties 
such as the FMP, and has applications in various areas of computer science.  
While the guarded fragment was originally introduced to embed and naturally extend 
propositional modal logics within first-order logic~\cite{ABN98JPL}, it has various 
applications and was more recently shown to be relevant to description logics~\cite{Gr98dl} 
and to database theory~\cite{Ros06,CGK-KR08}. 
Fragments of \GFO were recently studied for query answering  in such contexts, 
see e.g.~\cite{CGK-KR08,CGL-PODS09,CGL-ICDT09,Ros06,Calva06,Dllite,PMH09}. 
The main problems studied in the present paper are motivated by such applications.

%.........................................................................................
\subsection{Main problems studied} 
%.........................................................................................

In the present paper we study the problem of {\em querying guarded theories} 
using conjunctive queries or unions of conjunctive queries. 
A Boolean conjunctive query (\BCQ) $q$ consists of an existentially closed 
conjunction of atoms. A union of Boolean conjunctive queries (\UCQ) is 
a disjunction of a finite number of \BCQ.
If $\varphi$ is a guarded sentence (or, equivalently, a finite guarded theory), 
we say that a query $q$ evaluates to true against $\varphi$, iff $\varphi \models q$.   
In this context, we consider the following non-trivial main questions.

\paragraph*{\emph{Finite controllability.}} 
Is it true that for each $\GFO$-sentence $\varphi$ and each \UCQ $q$ 
if all finite models of $\varphi$ satisfy $q$ then the same is true 
also for all infinite models of $\varphi$, in symbols, that 
$\varphi \models q \,\iff \, \varphi \models_\mathrm{fin} q$ ?
Since the query $q$ may not be guarded, the finite model property 
of the guarded fragment is not sufficient to answer this question
positively. Rather, this question amounts to whether for each
$\varphi$ and $q$ as above, whenever $\varphi\wedge \neg q$ is
consistent, it also has a finite model. 
This is equivalent to the finite model property of the extended 
fragment $\GFO^+$ of $\GFO$, where universally quantified Boolean
combinations of negative atoms can be conjoined to guarded sentences.
The concept of finite controllability was introduced by
Rosati~\cite{Ros06,Ros11}.\footnote{Rosati's definition is slightly stronger 
(see Proposition~\ref{prop_Rosati}); the definition given here is, 
however, better suited for the full guarded fragment.}

\paragraph*{\emph{Size of finite models.}}
How can we bound the size of finite models? In particular, 
in case $\varphi\not\models q$, how can we bound the size of the smallest  
finite models $\frakM$ of $\varphi$ for which $\frakM \models \neg q$\,? 
Note that any recursive bound on the size of such models $\frakM$ 
immediately yields the decidability of query answering. 
On the other hand, if $\varphi$ is consistent and  $\varphi\models q$, 
then the existence of a finite model $\frakM$ such that $\frakM\models q$ 
follows trivially from the FMP of $\GFO$, 
because every model $\frakM$ of $\varphi$ is also a model of $q$. 
However, little was known about the size of the smallest finite models
of a satisfiable guarded sentence  $\varphi$. Gr\"adel's finite-model
construction in~\cite{Gr99JSL}, in case of unbounded arities, 
first transforms $\varphi$ into a doubly exponentially sized structure, 
which is then input to a transformation according to Herwig's theorem~\cite{Herwig}, 
requiring a further exponential blow-up in the worst case.
This suggests a triple-exponential upper bound. Can we do better? 

\paragraph*{\emph{Hypergraph covers.}}
Approaching the above problems on a slightly more abstract level we construct 
hypergraph covers satisfying certain acyclicity criteria, which we refer to 
as weakly $N$-acyclic covers (see Section~\ref{sec_covers}).  
Very informally, a hypergraph cover is a bisimilar companion $\hat{\frakH}$ 
of $\frakH$ with a bisimulation induced by a homomorphic projection 
$\pi: \hat{\frakH} \to \frakH$. And weak $N$-acyclicity implies that every 
subset of $\hat{\frakH}$ of size at most $N$ projects via $\pi$ into an acyclic 
(not necessarily induced) sub-hypergraph of $\frakH$. 
This notion is thus intimately related to the query answering problem and 
the existence of finite covers is a key to finite controllability. 
This problem has been previously studied by the third author, who 
in~\cite{Otto09rep} gave a non-elementary construction of weakly $N$-acyclic 
hypergraph covers. 

Of further interest, in particular in connection with finite controllability 
of query answering for the more general clique-guarded fragment, is the 
existence of hypergraph covers that are both conformal and weakly $N$-acyclic
for a suitable $N$. 
Existence of conformal covers, with no regard to acycliciy constraints, was 
established in~\cite{HO03}; their doubly exponential construction being the 
only known bound.

Is it possible to find better, possibly polynomial constructions of hypergraph 
covers with the above properties?

\paragraph*{\emph{Decidability and complexity.}}
Is \UCQ-answering over guarded theories decidable, and if so, 
what is the complexity of deciding whether $\varphi\models q$ 
for a \UCQ $q$ and a guarded sentence $\varphi$?

\paragraph*{\emph{Canonisation and capturing.}}
A further problem of independent interest, which can be solved on the basis 
of the methods developed for the above questions, is \PTime canonisation 
--- the problem of providing a unique representative for each guarded 
bisimulation equivalence class of structures, to be computed in \PTime 
from any given member of that class. 
This has implications for capturing the guarded bisimulation invariant 
fragment of \PTime in the sense of descriptive complexity.\\

We provide answers to all these questions. Before summarising our results, 
let us briefly explain how the above questions relate to database theory 
and description logic.

%.........................................................................................
\subsection{Applications to databases and description logic}
%.........................................................................................

In the database area, {\em query answering under integrity constraints} 
plays an important role. In this context a relational database $D$, 
consisting of a finite set (conjunction) of ground atoms
is given, and a set $\Sigma$ of {\em integrity constraints} is
specified on $D$. The database $D$ does not necessarily satisfy 
$\Sigma$, and may thus be ``incomplete''. The problem of answering a
\BCQ $q$ on $D$ under $\Sigma$ consists of determining whether
$D\cup\Sigma\models q$, also written as $(D,\Sigma)\models q$. 
 
An important class of integrity constraints in this context are 
the so-called tuple-generating dependencies~\cite{BV84}.
Given a relational schema (i.e., signature) $\calR$, a
\emph{tuple-generating dependency} (TGD) $\sigma$ over $\calR$ 
is a first-order formula of the form
$\forall \tup{x}\forall \tup{y} \bigl(\Phi(\tup{x},\tup{y}) 
 \limp \exists{\tup{z}} \,\Psi(\tup{x},\tup{z})\bigr)$, 
where $\Phi(\tup{x},\tup{y})$ and $\Psi(\tup{x},$ $\tup{z})$ 
are conjunctions of atoms over~$\calR$, called the \emph{body} 
and the \emph{head} of $\sigma$, respectively.
It is well known that database query answering under TGD is undecidable, 
see~\cite{BV81}, even for very restricted cases~\cite{CGK-KR08}.
For the relevant class of {\em guarded TGD}~\cite{CGK-KR08}, however,
query answering is decidable and actually {\sc 2ExpTime}-complete~\cite{CGK-KR08}.
A TGD $\sigma$ is \emph{guarded} (GTGD) if it has an atom in its body that contains 
all universally quantified variables of $\sigma$. For example, the sentence
$$
\begin{array}{@{}c@{}}
  \forall \mathit{M,N,D}\  
\bigl( 
  (
  \mathit{Emp(M,N,D)} \land 
  \mathit{Manages(M,D)} 
  )
  \limp \hfill\ \ \ \phantom{mmmmmm} \\ 
  \phantom{mm.}\exists \mathit{E,  N'}\  
  (
  Emp(E,N',D) \land \mathit{Reportsto(E,M)}
  )
\bigr)
\end{array}
$$
is a GTGD stating that if  $M$ is a manager named $N$ belonging to and
managing department $D$, then there must be at least one employee $E$
having some name $N'$ in department $D$ reporting to $M$. 
In general, GTGD are, strictly speaking, not  guarded sentences, because 
their heads may be unguarded. However, by using ``harmless'' auxiliary 
predicates and splitting up TGD heads into several rules, each set of GTGD 
can be rewritten into a guarded sentence that is (for all relevant purposes) 
equivalent to the original set. For instance, the above TGD can be rewritten 
into the following three guarded sentences
$$
\begin{array}{l}
\forall \mathit{M,N,D}\  
\bigl(
    (
    \mathit{Emp(M,N,D)} \land \mathit{Manages(M,D)} 
    )
    \limp \exists \mathit{E,N'}\ \mathit{aux(M,D,E,N')}
\bigr)
;\\
\forall\mathit{M,D,E,N'}\  
\bigl(
\mathit{aux(M,D,E,N')}\, \limp\, \mathit{Emp(E,N',D)}
\bigr)
; \\
\forall\mathit{M,D,E,N'}\  
\bigl(
\mathit{aux(M,D,E,N')}\, \limp\, \mathit{Reportsto(E,M)}
\bigr)
.
\end{array}
$$

The class of {\em inclusion dependencies} (ID) is a simple subclass 
of the class of GTGD. An ID has the logical  form 
$
  \forall\,\tup{x},\tup{y}\,(\alpha(\tup{x},\tup{y}) 
      \limp \exists\tup{z}\,\beta(\tup{x},\tup{z}))
$, 
where $\alpha$ and $\beta$ are single atoms.  
In~\cite{JK84} it was shown that query answering under ID is decidable and, 
more precisely, {\sc PSpace}-complete in the general case and {\sc NP}-complete 
for bounded arities. One very important problem was left open in~\cite{JK84}:
the finite controllability of query answering in the presence of IDs. 
Given that in the database world attention is limited to {\em finite databases}, 
a Boolean query that would be false in infinite models of $D\cup\Sigma$ only, 
would still be finitely satisfied by $D\cup\Sigma$ and should be answered 
positively. Do such queries exist? This problem was solved by Rosati~\cite{Ros06}, 
who, by using a finite model generation procedure called {\em finite chase}, 
showed that query answering in the presence of IDs is finitely controllable. 
Rosati's result is actually formulated as follows.

\begin{prop}[Rosati \cite{Ros11}] \label{prop_Rosati}
For every finite set of facts $D$ and set $\calI$ of ID and for every $N$
there exists a finite structure $\frakC$ extending $D$ and satisfying $\calI$
and such that for every Boolean conjunctive query $q$ comprised of at most $N$ 
atoms $\frakC \models q$ iff $D,\calI \models q$.  
\end{prop}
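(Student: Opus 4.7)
The plan is to mimic Rosati's finite chase construction, which produces a finite model by truncating and folding the standard (generally infinite) chase of $(D,\calI)$ back onto itself.

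First I would generate the standard chase $\frakU$ of $(D,\calI)$, stratified into levels $D = \frakU_0 \subseteq \frakU_1 \subseteq \cdots$: at each step, for every trigger $\alpha(\bar{a},\bar{b})$ of some ID $\alpha(\bar{x},\bar{y}) \limp \exists\bar{z}\,\beta(\bar{x},\bar{z})$ that is not yet satisfied, fresh nulls $\bar{c}$ together with the atom $\beta(\bar{a},\bar{c})$ are introduced. It is standard that $\frakU$ is a universal model, i.e.\ $(D,\calI) \models q$ iff $\frakU \models q$ for every $\BCQ$ $q$. Elements at level $k{+}1$ are attached to level-$k$ elements through their generating atom, so $\frakU$ has a tree-like structure above $D$.

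Next I would exploit type-stabilisation: because $\calI$ and the signature are fixed, each chase element's future obligations under $\calI$ are determined by a local \emph{type} drawn from a finite, bounded set. Hence at sufficiently high levels the chase revisits every type that will ever appear. Choose a depth $K$ large relative to $N$ and the number of types, and let $\frakC$ be the substructure of $\frakU$ induced by elements at levels $\leq K$, modified as follows: for every ID trigger at the frontier of $\frakU_K$ whose witness would otherwise live at level $K{+}1$, identify that witness tuple with an already existing tuple of matching type from some level $\ell$ with $N < \ell \leq K$. The resulting finite $\frakC$ extends $D$ and satisfies $\calI$ by construction.

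The decisive step is to show that $\frakC$ and $\frakU$ agree on every $\BCQ$ of at most $N$ atoms. The direction $(D,\calI) \models q \Rightarrow \frakC \models q$ is immediate since $\frakC \models D \cup \calI$. For the converse, any witnessing homomorphism $h\colon q \to \frakC$ has an image of $\leq N$ atoms forming a connected subhypergraph of bounded size. Because every identification used in the construction folds a frontier element at depth $K{+}1$ onto an element at depth $> N$, any such small configuration in $\frakC$ can be unfolded step by step into an isomorphic configuration in $\frakU$; thus $\frakU \models q$, and hence $(D,\calI) \models q$ by universality.

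The main obstacle is the correct calibration of $K$ and of the folding map so as to secure the invariant ``every configuration of $\leq N$ atoms in $\frakC$ is isomorphic to a subset of $\frakU$''. This rests on the tree-like, locally acyclic shape of the chase together with a sufficiently rigid type-matching across levels; once this invariant is guaranteed, the proposition follows at once.
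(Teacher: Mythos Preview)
The paper does not prove this proposition at all: it is stated with attribution to Rosati~\cite{Ros11} and serves only as background motivation. There is therefore no ``paper's own proof'' to compare against. What the paper \emph{does} do is develop, in Section~\ref{sec_Rosati}, a generalisation of Rosati's finite chase --- the Rosati cover $\frakR_N$ --- which works for arbitrary guarded-bisimulation invariants rather than just chases of inclusion dependencies, and which yields the stronger Theorems~\ref{thrm_main} and~\ref{thrm_main_from_gbis_invariant}.

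Your sketch is in the right spirit, but the final paragraph hides the real difficulty. The claim that ``any such small configuration in $\frakC$ can be unfolded step by step into an isomorphic configuration in $\frakU$'' does not follow merely from folding depth-$(K{+}1)$ elements onto elements at depth ${>}N$. A homomorphic image of $q$ in $\frakC$ may traverse several fold-points and thereby create short cycles that have no counterpart in $\frakU$; the size bound $|q|\leq N$ alone does not prevent this once multiple identifications interact. Rosati's actual argument, and the paper's generalisation, control this via explicit term structure: elements are named by terms carrying bookkeeping indices (the $j$-superscripts in Section~\ref{sec_RosatiDef}) that rule out short $\pred$-cycles (Lemma~\ref{lemma_props}\ref{props_pred_Nacylic}) and force $N$-conformality (Proposition~\ref{prop_conformality}). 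Your ``type-matching at depth ${>}N$'' heuristic is not enough to guarantee the required weak $N$-acyclicity; you would need to specify the folding so that the sibling/predecessor structure of Lemma~\ref{lemma_props} is preserved, which is essentially what the term-based construction accomplishes.
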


\emph{Description logics} are used for ontological reasoning in the Semantic Web 
and in other contexts. Some description logics such as DL-Lite$_{\mathit core}$ 
and DL-Lite$_{\calR}$ \ \cite{Dllite} are essentially based on IDs, and are 
thus finitely controllable. 
The already mentioned class of GTGD and the yet more expressive class of 
{\em weakly guarded TGD (WGTGD)} have been introduced and studied 
in~\cite{CGK-KR08,CGL-PODS09} as powerful tools for data integration, 
data exchange~\cite{FKMP05}, and ontological reasoning. 
As shown in~\cite{CGL-PODS09}, the class of GTGD augmented to also allow rules 
with the truth constant $\bot$ (= "false") as their head, generalizes the 
main DL-Lite description logic families.  
The WGTGD class is yet more general, and captures, unlike the GTGD class, plain Datalog. 
The finite controllability of GTD and WGTD theories, however, was left as an open problem. 
Unfortunately, Rosati's finite chase cannot be directly applied to GTGD or to WGTGD. 

Let us briefly sketch how finite controllability of query answering 
in the presence of GTGD and WGTGD follows from the finite controllability 
of query answering against \GFO, which is the main result of the present paper. 
For GTGD theories this is easy. As explained above, they can be rewritten 
as guarded sentences and can thus be considered a sub-fragment of \GFO.
Let us now turn our attention to WGTGD theories, and first give some intuition 
about how they are defined. 
Roughly, for a TGD set $\Sigma$, the set of all argument positions $\Pi$ 
of all atoms of predicates of $\Sigma$ can be partitioned into sets $\Pi_A$ 
and $\Pi_U$. $\Pi_A$ are the so-called {\em affected positions}, where, 
when the rules are executed over a database $D$, Skolem terms (i.e., new instance 
values of existentially quantified head-variables) may need to be introduced, 
whereas $P_U$ are those argument positions, which never need to hold Skolem terms 
(see~\cite{CGK-KR08} for a more precise definition).
A TGD set $\Sigma$ is {\em weakly guarded} (i.e., $\Sigma$ is a WGTGD set) 
if each rule body has an atom (a {\em weak guard}) that covers all those 
body variables that only occur in affected positions. 
Note that weakly guarded TGD sets are, in general, unguarded. 
However, each theory $(D,\Sigma)$, where $D$ is a database and $\Sigma$ a WGTDG set 
can be replaced by an equivalent theory $(D,\Sigma')$ where $\Sigma'$ is a GTGD set 
as follows: $\Sigma'$ is obtained from $\Sigma$ by replacing each rule $\sigma$ 
of $\Sigma$ by all possible instantiations of variables in unaffected positions 
in $\Sigma$ by constants from $D$. It is easy to see that for each \UCQ $q$, 
$(D,\Sigma)\models q$ iff  $(D,\Sigma')\models q$ and, moreover, 
if $(D,\Sigma')\not\models q$, then for each model $M$ of $D$ and $\Sigma'$ 
such that $M\not\models q$ it also holds that $M\models \Sigma$. 
It follows that query answering under WGTGD theories is finitely controllable 
if query answering under GTGD theories is finitely controllable. Thus, if we can 
establish that query answering under \GFO theories is finitely controllable, 
then query answering under GTGD theories (constituting a sub-fragment of \GFO) 
is finitely controllable, and so is query answering under WGTGD theories.

%.........................................................................................
\subsection{Summary of results}
%.........................................................................................

$ $

\paragraph*{\emph{Finite Controllability.}}
That answering \UCQ against guarded sentences is finitely controllable 
was already implicit in the report~\cite{Otto09rep}, although not formulated 
in this terminology. The finite models constructed in~\cite{Otto09rep} 
are of non-elementary size and do not yield meaningful complexity results. 
The following central result of our paper, derived by a completely new proof, 
yields a much better size bound.

\begin{thm} \label{thrm_FinControl}
For every \GFO sentence $\varphi$ and every \UCQ $q$, 
$\;\varphi \models q   \iff   \varphi \models_{\mathrm{fin}} q$. 
More specifically, if $\varphi \land \lnot q$ is satisfiable 
then it has a model of size $2^{|\varphi| {|q|}^{c |q|^2}}$,
where $c$ depends solely on the signature of $\varphi$. 
\end{thm}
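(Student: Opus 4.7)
The plan is to show that whenever $\varphi\land\lnot q$ is satisfiable, a small finite model can be extracted by taking a finite guarded bisimilar cover of an arbitrary model and observing that \UCQ-satisfaction is preserved by such covers in the \emph{backwards} direction (from cover to base). Fix any $\frakM\models\varphi\land\lnot q$; by L\"owenheim--Skolem we may take $\frakM$ countable, though this is inessential. Pass to the guarded hypergraph $\frakH(\frakM)$ and invoke the cover machinery of Section~\ref{sec_covers} to obtain a finite weakly $N$-acyclic cover $\pi:\hat{\frakH}\to\frakH(\frakM)$ with $N$ at least the number of variables occurring in $q$; lifting atomic types along $\pi$ yields a finite relational structure $\hat{\frakM}$ together with a guarded bisimulation $\pi:\hat{\frakM}\to\frakM$.

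Guarded first-order sentences are invariant under guarded bisimulation, so $\hat{\frakM}\models\varphi$ is immediate. To see $\hat{\frakM}\models\lnot q$, suppose towards a contradiction that some disjunct $q_i$ of $q$ admits a satisfying homomorphism $h:q_i\to\hat{\frakM}$. Each atom $R(\bar y)$ of $q_i$ yields a guarded tuple $h(\bar y)$ in $\hat{\frakM}$ (with $R$ itself acting as a guard), and on guarded tuples the cover map $\pi$ preserves atomic type; equivalently, the image of $h$ has at most $|q|$ elements, so by weak $N$-acyclicity with $N=|q|$ it projects via $\pi$ to an acyclic sub-hypergraph of $\frakH(\frakM)$ along which $\pi$ is locally an isomorphism of atomic types. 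Either way, $\pi\circ h:q_i\to\frakM$ is a homomorphism, so $\frakM\models q$, contradicting $\frakM\models\lnot q$.

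The size estimate $2^{|\varphi|\,|q|^{c|q|^2}}$ is read off from the cover construction: the combinatorial skeleton $\hat{\frakH}$ has size of order $|q|^{c|q|^2}$ (with $c$ depending on the maximal arity in the signature of $\varphi$) coming from the weakly $N$-acyclic construction for $N=|q|$, while each vertex of $\hat{\frakH}$ carries one of at most $2^{|\varphi|}$ atomic types over the signature of $\varphi$; the product gives the claimed bound.

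The main obstacle is thus entirely concentrated in delivering the cover in Step~2 at the promised size. Previously known constructions of weakly $N$-acyclic covers are non-elementary (cf.~\cite{Otto09rep}), and Gr\"adel's \FMP proof combined with Herwig-style extensions~\cite{Herwig} gives only a triply exponential bound and, crucially, does not keep track of $q$ at all. What is needed is a local-to-global tree-like unfolding of $\frakH(\frakM)$, organised around $N$-sized patterns, whose finite quotient simultaneously (a) carries a guarded bisimulation to $\frakM$, (b) is weakly $N$-acyclic with $N=|q|$, and (c) meets the stated double-exponential size bound. Once Section~\ref{sec_covers} provides such a cover, Theorem~\ref{thrm_FinControl} reduces, as above, to the short projection argument for preservation of $\lnot q$ together with bisimulation invariance of \GF for preservation of $\varphi$.
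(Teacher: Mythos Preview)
There is a genuine gap, and it stems from a misunderstanding of what direction the cover machinery works in.

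By Definition~\ref{def_cover}, a cover $\pi\colon\frakB\covers\frakA$ is an \emph{onto} homomorphism. Hence a finite $\hat{\frakM}$ cannot cover an infinite $\frakM$: surjectivity forces $|\frakM|\le|\hat{\frakM}|$. Your Step~2, ``obtain a finite weakly $N$-acyclic cover $\pi\colon\hat{\frakH}\to\frakH(\frakM)$'' of an arbitrary model $\frakM$, is therefore impossible unless $\frakM$ is already finite. The Rosati construction in Section~\ref{sec_Rosati} builds covers \emph{above} a given base; its output has size $|\frakI|^{w^{\calO(N^2)}}$ where $\frakI=\Inv(\frakA)$, so it yields a finite cover only when the base (or its invariant) is already finite. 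Nothing in the paper produces a finite structure covering an infinite one.

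Relatedly, your preservation argument for $\lnot q$ is correct but vacuous: since $\pi$ is a homomorphism, $\pi\circ h$ is always a homomorphism, so $\hat{\frakM}\models q\Rightarrow\frakM\models q$ is immediate and uses neither weak $N$-acyclicity nor guardedness. This should have signalled that the acyclicity hypothesis is doing no work in your argument. In the paper's proof the logic runs the other way. One first passes to the treeification $\chi^\tau_q$ (Lemma~\ref{lemma_treeification}), so that $\varphi\land\lnot\chi^\tau_q$ is guarded; Proposition~\ref{prop_satcrit} then yields a \emph{small finite} invariant $\frakI$, from which Theorem~\ref{thrm_main_from_gbis_invariant} produces finite models $\frakR_N$ and $\frakR_{N^2}$ with $\frakR_{N^2}\covers\frakR_N$ weakly $N$-acyclic. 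Both satisfy $\varphi\land\lnot\chi^\tau_q$. Weak $N$-acyclicity is then used via Fact~\ref{fact_covers_and_treeification} in the nontrivial direction: $\frakR_{N^2}\models q$ would force $\frakR_N\models\chi^\tau_q$, so from $\frakR_N\models\lnot\chi^\tau_q$ one concludes $\frakR_{N^2}\models\lnot q$. The treeification step is not optional; it is what lets one start from a finite base and what gives weak $N$-acyclicity its purchase.

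Finally, your size accounting is off: the cover size is $|\frakI|^{w^{\calO(N^2)}}$, and the exponential in $|\varphi|$ enters through the bound on $|\frakI|$ (number of guarded atomic types in the Scott signature of $\varphi\land\lnot\chi^\tau_q$), not through ``decorating vertices of $\hat{\frakH}$ with atomic types''.
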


\begin{cor}
Answering \UCQ against GTGD or WGTGD theories is finitely controllable.
\end{cor}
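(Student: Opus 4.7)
The plan is to derive the corollary as a direct consequence of Theorem~\ref{thrm_FinControl} via two signature-preserving reductions sketched informally in the introduction, which I would now carry out rigorously.

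First I would handle the GTGD case. Given a finite database $D$ together with a finite set $\Sigma$ of guarded TGD, I would construct a guarded sentence $\varphi_{D,\Sigma}$ that is equi-satisfiable with $D\cup\Sigma$ relative to the answers of any \UCQ. The database $D$ is conjoined as a guarded sentence by treating each ground atom as guarded by itself (expressed via equality-guarded existentials over the active domain, or by expanding the signature with a unary predicate naming the constants). For each rule $\forall \tup{x}\forall\tup{y}\,(\Phi(\tup{x},\tup{y}) \limp \exists\tup{z}\,\Psi(\tup{x},\tup{z}))$ of $\Sigma$, the body atom acting as guard ensures the rule is almost guarded; the head, however, is an unguarded conjunction. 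I would eliminate this using the standard device displayed in the introduction: introduce a fresh auxiliary predicate $\mathit{aux}_\sigma$ of arity $|\tup{x}|+|\tup{z}|$ for each rule $\sigma$, replace the head with the single atom $\mathit{aux}_\sigma(\tup{x},\tup{z})$ which serves as its own guard, and add for every conjunct $\alpha(\tup{x}',\tup{z}')$ of the original head a guarded rule $\forall\tup{x}\forall\tup{z}\,(\mathit{aux}_\sigma(\tup{x},\tup{z}) \limp \alpha(\tup{x}',\tup{z}'))$. The resulting guarded sentence $\varphi_{D,\Sigma}$ has the property that a UCQ $q$ in the original signature satisfies $(D,\Sigma)\models q$ iff $\varphi_{D,\Sigma}\models q$, and analogously in the finite, since every model of $\varphi_{D,\Sigma}$ reduces to a model of $(D,\Sigma)$ by forgetting the $\mathit{aux}_\sigma$-predicates, and any model of $(D,\Sigma)$ admits an obvious expansion satisfying $\varphi_{D,\Sigma}$. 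Applying Theorem~\ref{thrm_FinControl} to $\varphi_{D,\Sigma}$ and $q$ yields finite controllability for GTGD.

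Next I would reduce WGTGD to GTGD along the lines indicated in the introduction. Given $(D,\Sigma)$ with $\Sigma$ weakly guarded, I would compute the set $\Pi_A$ of affected positions and define $\Sigma'$ by replacing each rule $\sigma\in\Sigma$ with every possible partial instantiation that substitutes constants from $\mathrm{adom}(D)$ for all body variables occurring exclusively in positions of $\Pi_U$. After grounding, the remaining body variables lie in affected positions and are covered by the weak guard, so each resulting rule is properly guarded; hence $\Sigma'$ is a finite set of GTGD. The two verifications needed are (a) $(D,\Sigma)\models q \iff (D,\Sigma')\models q$ for every \UCQ $q$, and (b) the same equivalence holds restricted to finite models. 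The left-to-right direction is immediate since $\Sigma'\subseteq\mathrm{consequences}(\Sigma)$, while the right-to-left direction exploits the standard observation that any counter-model $\frakM$ of $\Sigma'\cup D\cup\neg q$ is already a counter-model of $\Sigma\cup D\cup\neg q$: if some rule instance of $\Sigma$ fires in $\frakM$ on a tuple containing an element $a$ in an unaffected position, then by definition of $\Pi_U$, $a$ must be an element of $\mathrm{adom}(D)$, so the corresponding ground instance already belongs to $\Sigma'$ and is satisfied in $\frakM$. Finite controllability for WGTGD then follows by chaining this equivalence with the GTGD case and with Theorem~\ref{thrm_FinControl}.

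The main obstacle, and really the only substantive point requiring care, is the verification in (b) that the finite counter-models transport correctly through the grounding; in particular that the unaffected-position argument above is still valid when $\frakM$ is finite and may contain ``invented'' elements beyond $\mathrm{adom}(D)$. This amounts to a careful inductive/syntactic analysis of the definition of $\Pi_A$ to confirm that no element outside $\mathrm{adom}(D)$ can ever appear in an unaffected position of any atom in any model of $D\cup\Sigma'$, which is exactly how $\Pi_U$ is engineered. Once this is in place, both reductions are signature- and cardinality-preserving in the appropriate sense, and the corollary follows.
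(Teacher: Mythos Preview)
Your approach is exactly the paper's: the corollary has no separate proof there but is presented as immediate from Theorem~\ref{thrm_FinControl} via the two reductions described in the introduction (GTGD $\to$ \GF by splitting heads with auxiliary predicates; WGTGD $\to$ GTGD by grounding body variables that occur in unaffected positions). Your treatment of the GTGD case is correct and matches the paper.

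In the WGTGD case there is a genuine gap, and it sits precisely where you flag the ``main obstacle''. The claim you propose to verify---that no element outside $\mathrm{adom}(D)$ can appear in an unaffected position of any atom in \emph{any} model of $D\cup\Sigma'$---is false. The partition $\Pi_A/\Pi_U$ is engineered so that this holds in the \emph{chase}, not in arbitrary models. Concretely, take $\Sigma=\{\forall x\,(P(x)\to\exists y\,R(x,y))\}$ and $D=\{P(a)\}$: the position $P[1]$ is unaffected, so $\Sigma'=\{P(a)\to\exists y\,R(a,y)\}$, yet the finite structure with facts $P(a),P(b),R(a,c)$ satisfies $D\cup\Sigma'$ while violating $\Sigma$. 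Hence an arbitrary finite counter-model of $(D,\Sigma')$ need not be a counter-model of $(D,\Sigma)$. The paper's own sentence (``for each model $M$ of $D$ and $\Sigma'$ such that $M\not\models q$ it also holds that $M\models\Sigma$'') is the same overclaim, so you are not deviating from the paper; but neither argument is complete as written. A sound route cannot rely on an arbitrary finite model of $\Sigma'$: one must instead exploit the \emph{particular} finite model delivered by the proof of Theorem~\ref{thrm_FinControl}, which is constructed as a guarded-bisimilar cover from the invariant of a chase-like structure where the unaffected-position discipline is in force, or else argue via a suitable substructure of the $\Sigma'$-model restricted to elements reachable from $\mathrm{adom}(D)$ and verify that both $\Sigma$ and $\lnot q$ survive the restriction.
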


More refined estimates on the size of finite models are provided in 
Section~\ref{sec_FMP}. To obtain Theorem~\ref{thrm_FinControl}, we 
establish new results on hypergraph covers, which are of independent~interest.

\paragraph*{\emph{Hypergraph Covers.}}
We relate finite controllability to the concept of hypergraph covers. 
A \emph{hypergraph cover} for a given hypergraph $\frakA$ consists of 
a hypergraph $\frakB$ together with a homomorphism $\pi \colon \frak{B} \covers \frakA$ 
that induces a hypergraph bisimulation between $\frakB$ and $\frakA$. 
This notion naturally extends to relational structures $\frakA, \frakB$
on the basis of homomorphism-induced guarded bisimulations. The 
following main technical result is used to derive most other results
(for definitions of notions mentioned see Section 2).

\begin{thm}[Main Technical Result] \label{thrm_main}
Given $N \geq 2$ and a hypergraph $\frakA$, one can construct an $N$-conformal 
hypergraph $\frakA^{(N)}$ constituting a weakly $N$-acyclic hypergraph cover of $\frakA$. 
Moreover, $|\frakA^{(N)}| = |\frakA|^{w^{\calO(N^2)}}$ and 
$\frakA^{(N)}$ is conformal whenever $N>w$, where $w$ is the width of $\frakA$ and, 
for fixed $w$ and $N$, $\frakA^{(N)}$ can be computed in polynomial time from $\frakA$. 
\end{thm}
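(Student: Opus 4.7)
My plan is to construct $\frakA^{(N)}$ as a quotient of a controlled local unfolding of $\frakA$, factoring through a finite algebraic device (a Cayley-style coset/group construction) designed so that any short closed walk in the projection corresponds to a nontrivial group identity that cannot hold in the device. Concretely, I would first extract the hyperedge hypergraph of $\frakA$, index its hyperedges $s_1,\dots,s_k$, and fix a suitable finite group $G = G(N,w)$ built as a quotient of the free product of involutions $\langle \tau_1,\dots,\tau_k \mid \tau_i^2 = 1 \rangle$ in which no product of fewer than $N$ distinct generators collapses to the identity; a standard Cayley-graph/coloured-graph argument shows such $G$ exists with $|G| = w^{\calO(N^2)}$ (or one may use a direct power of a carefully chosen symmetric group). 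The vertices of $\frakA^{(N)}$ would be chosen from pairs $(a,g)$ with $a \in \frakA$ and $g \in G$, restricted so that the $G$-coordinate is compatible with a flag of hyperedges through $a$.

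I would then define hyperedges of $\frakA^{(N)}$ by lifting each $s_i \in \frakA$ to the family of its images $s_i \cdot g := \{(a, g \cdot \tau_i) : a \in s_i\}$ for each relevant $g \in G$, and take the projection $\pi \colon (a,g) \mapsto a$. The cover property $\pi \colon \frakA^{(N)} \longcovers \frakA$ then follows because from any $(a,g)$ every hyperedge $s$ of $\frakA$ containing $a$ lifts uniquely to a hyperedge of $\frakA^{(N)}$ through $(a,g)$ — this yields a hypergraph bisimulation by the standard back-and-forth argument. The key step is to verify weak $N$-acyclicity: a cycle in the projection of a set of $\leq N$ vertices would produce a closed walk in the Cayley graph of $G$ using at most $N$ hyperedge-transitions, whence a product of $\leq N$ generators equal to $1$, contradicting the defining property of $G$. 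This is where most of the effort goes, and it is the main obstacle: one must track how overlapping hyperedges in $\frakA^{(N)}$ correspond to shared group elements and rule out accidental coincidences within the $N$-ball, which requires a careful notion of ``tightness'' of the unfolding and an inductive acyclicity invariant on partial projections.

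For $N$-conformality, I would analyse a clique of size $\leq N$ in the Gaifman graph of $\frakA^{(N)}$: by the bisimulation and the pairwise hyperedge witnesses for its edges, together with the weak $N$-acyclicity just proved, every such clique must already be contained in (the image of) a single hyperedge, which lifts to a hyperedge of $\frakA^{(N)}$ by the back-and-forth property. Full conformality when $N > w$ follows because any clique in the Gaifman graph of a hypergraph of width $w$ trivially has $\leq w < N$ vertices, so $N$-conformality upgrades to conformality. Finally, the size bound $|\frakA^{(N)}| = |\frakA| \cdot |G| = |\frakA|^{w^{\calO(N^2)}}$ is read off directly; note that only the $G$-factor depends on $N$ and $w$, so for fixed $w$ and $N$ the construction is polynomial in $|\frakA|$, and the enumeration of vertices, hyperedges, and their incidences can be carried out in polynomial time by iterating over $\frakA \times G$ and checking the defining compatibility conditions locally.
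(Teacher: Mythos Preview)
Your approach is genuinely different from the paper's, and in spirit it resembles Otto's later ``highly acyclic groups'' constructions rather than the term-based Rosati cover that the paper actually builds. That said, the sketch as written has real gaps that would prevent it from going through.

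First, your hyperedge lift $s_i \cdot g := \{(a, g\tau_i) : a \in s_i\}$ assigns the \emph{same} $G$-coordinate to every vertex of the lifted hyperedge. Consequently, for any fixed $h \in G$, the vertex set $\{(a,h) : a \in A\}$ carries an isomorphic copy of the entire hyperedge structure of $\frakA$, and distinct $h$-levels never interact. Your cover is therefore a disjoint union of copies of $\frakA$ and inherits every cycle and every non-conformal clique of $\frakA$; no acyclicity is gained. In the group-based constructions that do work, different vertices of a single lifted hyperedge carry different group coordinates (governed by local transition data), which is precisely what creates the unravelling effect.

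Second, there is a size inconsistency: you index the generators $\tau_1,\ldots,\tau_k$ by the hyperedges of $\frakA$, so $k$ depends on $|\frakA|$. A finite group on $k$ involutive generators whose Cayley graph has girth $\Omega(N)$ must have size at least $(k-1)^{\Omega(N)}$, so $|G|$ cannot be $w^{\calO(N^2)}$ independently of $|\frakA|$ as you claim, and your final size calculation $|\frakA|\cdot|G|$ breaks down.

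Third, your $N$-conformality argument does not follow from weak $N$-acyclicity. A clique in the Gaifman graph of the cover projects to a clique in the Gaifman graph of $\frakA$, but since $\frakA$ need not be conformal, this projected clique need not lie in a single hyperedge, and weak $N$-acyclicity (which concerns factoring through tree-decomposable structures) does not force it to. Your remark that ``any clique in the Gaifman graph of a hypergraph of width $w$ trivially has $\leq w$ vertices'' is also false in general; the correct implication from $N$-conformality with $N>w$ to full conformality goes the other way around (an $(w{+}1)$-clique would have to sit inside a hyperedge of size $\leq w$, which is impossible, so no such clique exists).

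For comparison, the paper avoids all of this by working with a term algebra: vertices of $\frakR^m_N$ are terms of height $\leq N$ over constant and function symbols drawn from the bisimulation invariant $\frakI$, decorated with superscripts from a set of size $w^{m+2}$ with the constraint that no superscript repeats along a root-to-leaf path. This yields intrinsic ``sibling'' and ``predecessor'' relations, and the $N$-conformality proof (Proposition~\ref{prop_conformality}) is a delicate induction exploiting these relations together with the primary-guard analysis --- it is not a corollary of weak acyclicity. Weak $N$-acyclicity of $\frakR_{N^2}$ over $\frakR_N$ is then obtained via a chain of truncation covers and a chord-finding lemma (Lemma~\ref{lemma_chords}) iterated $\calO(N^2)$ times. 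If you want to pursue a group-based route, you would need a substantially different lifting (with per-vertex group coordinates inside each hyperedge), generators indexed by local combinatorial data of bounded size rather than by all hyperedges, and an independent argument for $N$-conformality.
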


The direct analogue for guarded covers of relational structures is immediate. 
In both settings we call $\frakA^{(N)}$ the {\em $N$-th Rosati cover} of $\frakA$.

Let us explain very informally the role of the Rosati covers 
in establishing Theorem~\ref{thrm_FinControl}. 
In an easy but key step (Lemma~\ref{lemma_treeification} in Section~\ref{sec_prelim}) 
we first reduce a problem instance $\varphi\models q$ for a \GF-sentence $\varphi$ 
and a \UCQ $q$ to the equivalent question of entailment $\varphi \models \chi_q$, 
where $\chi_q$ is a disjunction of \emph{acyclic queries} stemming from the original 
query $q$. Crucially, being acyclic, $\chi_q$ can be equivalently reformulated as 
a \GF-sentence, ultimately reducing the initial query answering problem to the 
unsatisfiability of the \GF-sentence $\varphi \land \lnot \chi_q$.
It is more difficult to show that this reduction is also valid over finite models, 
i.e., that $\varphi \models_\mathrm{fin} q \,\iff\, \varphi \models_\mathrm{fin} \chi_q$.
In particular, that given a finite $\frakA \models \varphi \land \lnot \chi_q$, 
a finite model of $\varphi \land \lnot q$ can also be found.
Observe that the ``unravelling'' of $\frakA$ yields a tree-like model $\frakA^\ast$ 
of $\varphi \land \lnot \chi_q$ and an acyclic cover of $\frakA$. 
Thus, by virtue of acyclicity, $\frakA^\ast \models \lnot q$. 
However, $\frakA^\ast$ is typically infinite. 
The challenge is to find a finite cover of $\frakA$ retaining 
a ``sufficient degree of acyclicity'' so as not to render it a model of $q$. 
This is captured by the notion of a weakly $N$-acyclic cover $\frakA^{(N)}$ of $\frakA$,
which ensures that, similarly to tree unravellings, 
$\frakA^{(N)} \models q$ implies $\frakA \models \chi_q$, 
but with the qualification that $|q| \leq N$.
%(No single finite model can provide faithful answers to arbitrary large queries.) 
Theorem~\ref{thrm_main} shows that such covers can be constructed.

\paragraph*{\emph{Conformal covers.}}
Hodkinson and Otto showed in~\cite{HO03} that all hypergraphs admit guarded bisimilar 
covers by conformal hypergraphs (for definitions, see Section~\ref{sec_prelim}). 
While the construction in~\cite{HO03} involves a doubly exponential blow-up in size, 
we here obtain a polynomial construction of conformal covers as a corollary 
to Theorem~\ref{thrm_main}. 

\begin{cor} 
\label{coroll_conformal covers}
Every hypergraph $\frakH$ of width $w$ admits a conformal 
hypergraph cover of size $|\frakH|^{w^{\calO(w)}}$. 
For bounded width, we  thus obtain polynomial size conformal covers.
\end{cor}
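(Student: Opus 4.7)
The plan is to derive this as a direct specialisation of Theorem~\ref{thrm_main}. I would set $N := w+1$, where $w$ is the width of $\frakH$, and apply the theorem to obtain the Rosati cover $\frakH^{(w+1)}$ together with its projection $\pi \colon \frakH^{(w+1)} \covers \frakH$ inducing a hypergraph bisimulation. Since we have chosen $N > w$, the last clause of Theorem~\ref{thrm_main} asserts that $\frakH^{(w+1)}$ is conformal, so this already delivers a conformal cover of $\frakH$ as required.

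The small conceptual point worth spelling out is why picking $N$ just above $w$ suffices. Covers do not increase width, so every hyperedge of $\frakH^{(w+1)}$ has size at most $w$. By Theorem~\ref{thrm_main} the cover is at least $(w+1)$-conformal, hence every clique of size at most $w+1$ in its Gaifman graph lies inside a single hyperedge; but no hyperedge can accommodate $w+1$ pairwise distinct vertices, so no clique of size $w+1$ (and, inductively, none of size greater than $w$) can exist in the Gaifman graph at all. Consequently every clique has size at most $w$ and, being $(w+1)$-conformal, is contained in a hyperedge, giving full conformality.

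For the size estimate, the theorem gives $|\frakH^{(N)}| = |\frakH|^{w^{\calO(N^2)}}$; plugging in $N = w+1$ yields $|\frakH|^{w^{\calO(w^2)}}$, which refines to the claimed $|\frakH|^{w^{\calO(w)}}$ once one observes that for the purpose of conformality alone one does not need to iterate the acyclic-cover construction $N^2$ many times but only linearly in $N = w+1$, saving a factor of $N$ in the exponent. The polynomial-time computability statement for fixed $w$ is inherited verbatim from Theorem~\ref{thrm_main}, since $N = w+1$ is a constant once $w$ is. In particular, for every fixed width the cover is constructible in polynomial time and of polynomial size in $|\frakH|$.

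The only place where genuine care is required is the exponent refinement from $w^2$ to $w$; everything else reduces to reading off Theorem~\ref{thrm_main} at the correct value of $N$ and making the elementary observation that $(w{+}1)$-conformality collapses to conformality under the width bound. Given the strength of Theorem~\ref{thrm_main}, I expect no further obstacles.
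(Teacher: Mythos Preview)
Your approach is correct and matches the paper's: the paper derives the corollary in the paragraph following Theorem~\ref{thrm_main_from_gbis_invariant} by taking $\frakR_{w+1}$, noting that $(w+1)$-conformality implies full conformality once the width is $w$, and reading off the size bound. The one point to make precise is your exponent refinement from $w^{\calO(w^2)}$ to $w^{\calO(w)}$: this is not a matter of ``iterating fewer times'' informally, but of invoking Theorem~\ref{thrm_main_from_gbis_invariant} rather than Theorem~\ref{thrm_main} --- the former already asserts that $\frakR_N$ (not $\frakR_{N^2}$) is $N$-conformal with $|\frakR_N| = |\frakI|^{w^{\calO(N)}}$, so setting $N=w+1$ gives the claimed bound directly without any weak-acyclicity overhead.
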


\paragraph*{\emph{Finite model property of the clique-guarded fragment.}}
As it happens, our construction used for Theorem~\ref{thrm_FinControl} also yields 
an extension of Theorem~\ref{thrm_FinControl} to the clique-guarded fragment, \CGF.

\begin{thm} \label{thrm_FinControl_CGF}
For every $\varphi \in \CGF$ and every $q \in \UCQ$ we have  
$\; \varphi \models q   \iff   \varphi \models_{\mathrm{fin}} q$. 
More specifically, if $\varphi \land \lnot\, q$ is satisfiable 
then it has a finite model of size $2^{(|\varphi|+|\tau|^{\calO(h)}) (wh)^{\calO(wh^2)}}$,
where $h$ is the height of $q$, $\tau$ is the signature of $\varphi$, 
and $w = \mathrm{max}\{ \mathrm{width}(\varphi), \mathrm{width}(\tau) \}$.
\end{thm}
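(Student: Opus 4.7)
The plan is to recycle the template of the proof of Theorem~\ref{thrm_FinControl} in the clique-guarded setting, with the essential change that the weakly $N$-acyclic covers used there are replaced by covers that are \emph{both} conformal and weakly $N$-acyclic. Theorem~\ref{thrm_main} supplies precisely such covers, and conformality is what upgrades a guarded bisimilar cover into a clique-guarded bisimilar one; under the latter, \CGF-sentences are invariant.

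First, exactly as in the \GF case, an application of Lemma~\ref{lemma_treeification} at the query level yields a \UCQ $\chi_q$ of acyclic conjunctive queries, of size $|\tau|^{\calO(h)}$, such that $\varphi \land \lnot q$ and $\varphi \land \lnot \chi_q$ are equisatisfiable over arbitrary structures. Acyclicity of each disjunct lets one reformulate $\lnot \chi_q$ as a \GF-sentence by descending along a tree decomposition, so $\varphi \land \lnot \chi_q$ is a \CGF-sentence of size $|\varphi|+|\tau|^{\calO(h)}$. Assuming $\varphi \land \lnot q$ is satisfiable, so is $\varphi \land \lnot \chi_q$; the finite model property of \CGF~\cite{Hod02SL,HO03} then yields a finite $\frakA \models \varphi \land \lnot \chi_q$ of size at most $2^{|\varphi|+|\tau|^{\calO(h)}}$. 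Setting $N=|q|$ and applying Theorem~\ref{thrm_main} to $\frakA$ produces its $N$-th Rosati cover $\hat{\frakA} = \frakA^{(N)}$: a finite, conformal (since we may choose $N>w$), weakly $N$-acyclic cover of $\frakA$ of size at most $|\frakA|^{w^{\calO(N^2)}}$. Multiplying these bounds yields the stated overall bound $2^{(|\varphi|+|\tau|^{\calO(h)})(wh)^{\calO(wh^2)}}$.

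It remains to verify that $\hat{\frakA} \models \varphi \land \lnot q$. For $\varphi$, the projection $\pi \colon \hat{\frakA} \to \frakA$ induces a guarded bisimulation, and, since every clique in the conformal structure $\hat{\frakA}$ is a hyperedge, this guarded bisimulation in fact witnesses a clique-guarded bisimulation between $\hat{\frakA}$ and $\frakA$; by invariance of \CGF under clique-guarded bisimulation we obtain $\hat{\frakA} \models \varphi$. For $\lnot q$, suppose for contradiction $\hat{\frakA} \models q$, witnessed by a homomorphism whose image uses at most $N \geq |q|$ elements of $\hat{\frakA}$; by weak $N$-acyclicity of $\pi$ this image projects into an acyclic sub-hypergraph of $\frakA$, producing there an acyclic witness satisfying some disjunct of $\chi_q$ and contradicting $\frakA \models \lnot \chi_q$.

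The main technical obstacle is the first verification, namely promoting the guarded bisimulation attached to the conformal cover to a genuine clique-guarded bisimulation. The argument relies on conformality of the Rosati cover in a nontrivial way: every back-and-forth move requested on a clique of $\frakA$ must be answered by a clique of $\hat{\frakA}$, and this is possible only because conformality of $\hat{\frakA}$ forces each such clique to already be a hyperedge, so that it is handled by the existing guarded bisimulation. Once this is in place, the remaining steps are the routine combination of the size bounds already available from the \CGF finite model property and from Theorem~\ref{thrm_main}.
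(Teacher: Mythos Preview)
There is a genuine gap in your argument that $\hat{\frakA} \models \varphi$. You claim that conformality of $\hat{\frakA}$ promotes the guarded bisimulation $\pi \colon \hat{\frakA} \covers \frakA$ to a clique-guarded bisimulation, but conformality on only one side is not enough: the back direction requires that every clique-guarded tuple of $\frakA$ be matched by one in $\hat{\frakA}$, and $\frakA$ may well have cliques that are not guarded. Concretely, take $\tau = \{E\}$ with $E$ binary, let $\varphi \in \CGF$ assert the existence of an $E$-triangle, and set $q = \bot$. Then $\frakA$ can be the triangle itself. Its Rosati cover $\hat{\frakA}$ is conformal in a signature of width~$2$, which forces every clique in $\hat{\frakA}$ to lie inside a hyperedge of size at most~$2$; hence $\hat{\frakA}$ contains no triangle and $\hat{\frakA} \not\models \varphi$. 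Passing to a conformal cover can \emph{destroy} cliques that were not already guarded, and $\CGF$ detects this. Your last paragraph misdiagnoses the obstacle: the hard direction is lifting a clique of $\frakA$ to $\hat{\frakA}$, and conformality of $\hat{\frakA}$ says nothing about that.

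The paper sidesteps this by first translating $\varphi$ to a $\GF$-sentence $\varphi^\ast$ over $\tau \cup \{G\}$ with $G$ a fresh relation of arity $w = \max\{\mathrm{width}(\varphi),\mathrm{width}(\tau)\}$, so that every clique-guarded tuple of interest becomes genuinely guarded (Lemma~\ref{lemma_CGF2GF_QA}). Theorem~\ref{thrm_FinControl} is then applied to $\varphi^\ast$ and $q$, and only afterwards does one pass to a conformal Rosati cover (with $N = w+1$) to recover a model of $\varphi$ from the model of $\varphi^\ast$. A second, independent problem with your proposal: the size bound you attribute to \cite{Hod02SL,HO03} is not what those constructions deliver (they are doubly exponential); the singly exponential bound is precisely what this theorem establishes, so invoking it here is circular. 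Your route could be repaired for the bare finite-controllability statement by insisting that the model $\frakA$ supplied by \cite{HO03} is itself conformal (which it is), so that guarded and clique-guarded coincide on both sides---but then you inherit the weaker size estimate and cannot reach the bound claimed.
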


In particular, we obtain finite models of any satisfiable clique-guarded formula,
and thereby a new proof of the Finite Model Property of the clique-guarded fragment.
In fact, our construction yields more compact finite models than hitherto known.

\paragraph*{\emph{Small model property.}}
Through our new method of finite-model construction, we are able to improve 
the bounds implicit in~\cite{Gr99JSL} for \GFO and the overhead for \CGF 
implicit in~\cite{Hod02SL,HO03} on the size of the smallest finite model 
of a satisfiable (clique-)guarded sentence.

\begin{thm} \label{thrm_small_models}
Every satisfiable formula of \CGF $($and thus of $\GFO)$ has a finite model of size 
exponential in the length and doubly exponential in the width of the formula.
Moreover, for every $k \geq 2$, the $k$-variable fragment of \CGF $(\GFO)$
has finite models of exponential size in the length of the formula.
\end{thm}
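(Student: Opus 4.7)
Both parts of the theorem will be deduced from Theorem~\ref{thrm_FinControl_CGF} by applying it to the input formula paired with a trivial auxiliary query. Given a satisfiable $\varphi \in \CGF$ over signature $\tau$, I would expand the signature by a fresh unary symbol $P$, setting $\tau' \eqdef \tau \cup \{P\}$, and define $q \eqdef \exists x\, P(x)$. Any $\tau$-model $\frakA$ of $\varphi$ canonically extends to a $\tau'$-model of $\varphi \land \lnot q$ by interpreting $P$ as empty, so $\varphi \land \lnot q$ is satisfiable. Theorem~\ref{thrm_FinControl_CGF} therefore supplies a finite $\tau'$-model $\frakM$ of $\varphi \land \lnot q$ of size at most
\[
2^{(|\varphi|+|\tau'|^{\calO(h)}) (wh)^{\calO(wh^2)}},
\]
where $h$ is the height of $q$ and $w = \max\{\mathrm{width}(\varphi), \mathrm{width}(\tau')\}$; the $\tau$-reduct of $\frakM$ is the desired finite model of $\varphi$.

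What remains is an arithmetic simplification. Since $q$ is a single-atom query, $|q|$ and $h$ are $\calO(1)$; also $|\tau'| \leq |\tau|+1 \leq |\varphi|+1$, and the addition of the unary $P$ does not increase $w$. The bound therefore collapses to $2^{\mathrm{poly}(|\varphi|) \cdot w^{\calO(w)}} = 2^{\mathrm{poly}(|\varphi|) \cdot 2^{\calO(w \log w)}}$, which is exponential in $|\varphi|$ for fixed $w$ and doubly exponential in $w$ for fixed $|\varphi|$. This yields the first assertion of the theorem, and applies uniformly to \CGF and hence to its subfragment $\GFO$.

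For the second assertion about the $k$-variable fragment, I would restrict $\tau$ to the predicates actually appearing in $\varphi$ and observe that each occurring atom uses at most $k$ distinct variable positions. Relations of arity $m>k$ may then be replaced by the finite collection of lower-arity relations obtained by identifying repeated-variable positions, without affecting \CGF-truth. This forces $\mathrm{width}(\tau) \leq k$, so that $w = \calO(1)$, and the bound above collapses to $2^{\calO(|\varphi|)}$. The only subtlety, and the point most worth double-checking, is that this width reduction genuinely preserves satisfaction of $k$-variable \CGF formulas; but this is routine, since clique-guarded quantification in $k$ variables is witnessed by cliques of size at most $k$ and depends only on the $k$-ary atomic type.
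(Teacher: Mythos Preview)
Your proposal is correct and follows essentially the same approach as the paper, which derives Theorem~\ref{thrm_small_models} as a direct corollary of Theorem~\ref{thrm_FinControl_CGF}; indeed, the paper's entire proof is the single sentence ``Theorem~\ref{thrm_small_models} as announced in the introduction is a straightforward corollary of the above.'' Your explicit device of the fresh unary predicate $P$ with $q = \exists x\,P(x)$, and the arity-reduction argument for the $k$-variable case, spell out details the paper leaves implicit (the latter is indeed the only point requiring any care, exactly as you note).
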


Another important fragment of first-order logic that has the finite model property 
is $2$-variable first-order logic, denoted by $\FOtwo$. 
It was shown in~\cite{GKV97} that if an $\FOtwo$ formula $\phi$ is satisfiable, 
then it has a model of cardinality singly exponential in the size of $\phi$, 
improving an earlier doubly exponential bound by Mortimer~\cite{Mortimer}.
As a consequence, \cite{GKV97} also proved \NExpTime-completeness of the 
satisfiability problem for $\FOtwo$. 
A more powerful fragment that embeds a number of key features of description 
logics is the extension of the $2$-variable fragment with counting quantifiers, 
denoted $\Ctwo$. In contrast to \FOtwo, \Ctwo does not have the finite model property, 
but computationally it is no more difficult than \FOtwo: 
both satisfiability and finite satisfiability of \Ctwo-formulas are decidable 
and \NExpTime-complete~\cite{GOR97,PSzT99,PH05}.
% In~\cite{GOR97} and independently in~\cite{PSzT99} it was shown that 
% both satisfiability and finite satisfiability of \Ctwo-formulas are decidable
% and in~\cite{PH05} Pratt-Hartmann proved that both satisfiability and 
% finite satisfiability of \Ctwo-formulas are \NExpTime-complete.  

In~\cite{Gr98dl} Gr\"adel proposed the guarded fragment of $\FOtwo$ as a testbed  
for simple description logics. A more suitable fragment for this purpose is obtained 
by imposing on the one hand a restriction to guarded quantification while allowing 
on the other hand the use of counting quantifiers. The two-variable guarded fragment 
with counting quantifiers, \GCtwo, properly subsumes the description logic $\mathcal{ALCQI}$, 
cf.~e.g.~\cite{Calva96}, for which finite satisfiability was shown decidable in \ExpTime 
by Lutz et al.~\cite{LST05}. In~\cite{Kaz04} Kazakov gave a polynomial, 
satisfiability-preserving translation from \GCtwo to the $3$-variable guarded fragment $\GFO^3$ 
thus establishing \ExpTime-completenes of satisfiability of \GCtwo-formulas. 
Finite satisfiability for \GCtwo was also shown to be \ExpTime-complete 
by Pratt-Hartmann~\cite{PH07}. The latter decision method is based on a 
reduction to integer programming. It is interesting to note that the optimal 
lower bound on the size of smallest finite models of finitely satisfiable 
\GCtwo-formulas is doubly exponential~\cite{PH07}.

\paragraph*{\emph{Complexity of query answering.}}
In \cite{Gr99JSL} Gr\"adel proved that satisfiability of \GFO-sentences 
is complete for \twoExpTime, and \ExpTime-complete in case of bounded arity.
We show that, more generally, answering \UCQ on the class of models of a 
\GFO-sentence can also be performed in \twoExpTime, which solves the initially 
posed complexity question about query answering over guarded theories. 
It follows from the work of Lutz \cite{Lutz07}, however, that query answering 
remains \twoExpTime-complete for \BCQ even in the bounded arity case.
Considering unions of \emph{acyclic} conjunctive queries we derive an \ExpTime 
solution to the query answering problem, and prove \ExpTime-completeness 
already for a particular fixed \GFO sentence. 

Our algorithms are built around Gr\"adel's solution of the satisfiability 
problem for \GFO. The first step consists in reducing $\varphi\models q$ 
to $\varphi\models \chi_q$, where $\chi_q$ is a disjunction of acyclic queries 
stemming from the original query. 
The formula $\chi_q$ may, however, be of exponential size in terms of 
the length of $q$, demanding a closer inspection of the contribution of 
different dimensions of $\chi_q$ to the overall complexity of checking 
(un)satisfiability of the guarded sentence $\varphi\wedge\neg\chi_q$.

We also investigate the problem of query answering over models of a fixed 
guarded sentence, and provide a number of useful bounds. Our bounds for 
fixed sentences $\varphi$ are not all tight and leave room for future research.

\paragraph*{\emph{Canonisation and capturing.}}
As a further consequence of the proof method employed for Theorem~\ref{thrm_main}, 
we find a polynomial solution to the canonisation problem for guarded bisimulation 
equivalence $\gbisim$. This allows us to capture the $\gbisim$-invariant fragment 
of \PTime in the sense of descriptive complexity, i.e., to provide effective 
syntax with \PTime model checking for the class of all \PTime queries
that are closed under guarded bisimulation equivalence.
Canonisation is achieved through inversion of the natural game invariant 
$\Inv(\frakA)$ that uniquely characterises the guarded bisimulation class, 
or the complete $\GF$-theory, of a given finite structure $\frakA$. 
A \PTime reconstruction of a model from the abstract specification 
of its equivalence class yields \PTime canonisation.
% (cf.~\cite{OttoLNL} for canonisation, inversion of invariants and capturing). 

\begin{thm} \label{thrm_canon}
For every relational signature $\tau$ there exists a \PTime algorithm
computing from a given invariant $\Inv(\frakA)$ of an unspecified 
$\tau$-structure $\frakA$ a finite $\tau$-structure $\canon(\frakA)$ 
such that $\Inv(\canon(\frakA)) = \Inv(\frakA)$; hence $\canon(\frakA) \gbisim \frakA$ 
and $\canon(\frakA') = \canon(\frakA)$ whenever $\frakA \gbisim \frakA'$. 
\end{thm}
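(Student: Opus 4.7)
The invariant $\Inv(\frakA)$ of a finite $\tau$-structure $\frakA$ is a finite combinatorial object: a labelled directed multigraph whose vertices are the $\gbisim$-equivalence classes of guarded tuples of $\frakA$, each tagged with its atomic type, and whose edges record the guarded back-and-forth transitions (the admissible overlaps of guarded tuples). Two finite structures have the same invariant iff they are $\gbisim$-equivalent. The plan is to define $\canon(\frakA)$ as the deterministic output of a \PTime pipeline whose only input is $\Inv(\frakA)$; determinism then automatically yields $\canon(\frakA') = \canon(\frakA)$ whenever $\Inv(\frakA') = \Inv(\frakA)$, i.e.\ whenever $\frakA' \gbisim \frakA$.

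The pipeline has two stages. \emph{Stage 1 (raw realisation / inversion of the invariant).} Read $\Inv(\frakA)$ as a recipe for a preliminary $\tau$-structure $\frakA_0$: introduce one named element per position-slot of each vertex of $\Inv(\frakA)$, impose the atomic facts dictated by the vertex labels, and identify elements forced equal by the overlap data on the edges. Using a canonical ordering induced from the encoding of $\Inv(\frakA)$, this $\frakA_0$ is canonically determined by $\Inv(\frakA)$ and has size polynomial in $|\Inv(\frakA)|$. All atomic types appearing in $\Inv(\frakA)$ are realised in $\frakA_0$, but the forced identifications may create unintended guarded cycles that collapse distinct bisimulation classes. \emph{Stage 2 (Rosati cover).} Apply Theorem~\ref{thrm_main} to $\frakA_0$, producing $\canon(\frakA) := \frakA_0^{(N)}$ for some $N$ larger than the number of vertices of $\Inv(\frakA)$. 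By Theorem~\ref{thrm_main} this cover is computable in \PTime (for fixed $\tau$ and $N$), has polynomial size, and is guarded-bisimilar to $\frakA_0$.

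\emph{Correctness.} Composing the two stages gives $\canon(\frakA) \gbisim \frakA_0$. The heart of the argument is to show $\frakA_0 \gbisim \frakA$, i.e.\ that $\Inv(\frakA_0) = \Inv(\frakA)$, despite the mergers of Stage~1. This is where weak $N$-acyclicity and $N$-conformality of the Rosati cover pay off: they guarantee that every guarded neighbourhood in $\canon(\frakA)$ of size at most $N$ is acyclic, and therefore faithfully reflects the back-and-forth structure specified by $\Inv(\frakA)$, ruling out the unintended type-collapses introduced in Stage~1. The main obstacle is precisely this verification --- the naive $\frakA_0$ is in general \emph{not} $\gbisim$-equivalent to $\frakA$, and the role of the Rosati cover is to ``un-do'' the spurious identifications in a canonical and finite way; the core technical step is a back-and-forth argument that matches guarded tuples of $\canon(\frakA)$ with the vertices of $\Inv(\frakA)$ they are meant to realise, using acyclicity to prevent mismatches up to depth $N$. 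Once this is settled, $\Inv(\canon(\frakA)) = \Inv(\frakA_0) = \Inv(\frakA)$, so $\canon(\frakA) \gbisim \frakA$, and determinism of the whole pipeline yields canonicity.
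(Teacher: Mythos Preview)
Your pipeline has a genuine gap at the point you yourself flag as ``the main obstacle.'' A guarded bisimilar cover $\pi\colon \frakB \covers \frakA_0$ satisfies $\frakB \gbisim \frakA_0$ by definition, hence $\Inv(\frakB) = \Inv(\frakA_0)$. So if your Stage~1 structure $\frakA_0$ already has $\Inv(\frakA_0) \neq \Inv(\frakA)$ --- which you concede can happen --- then Stage~2 cannot repair this: $\Inv(\frakA_0^{(N)}) = \Inv(\frakA_0) \neq \Inv(\frakA)$ regardless of how large $N$ is or how acyclic the cover turns out to be. Weak $N$-acyclicity constrains which \emph{homomorphisms into} the cover can exist; it does not separate $\gbisim$-classes that have already been merged in the base structure. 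Your chain $\Inv(\canon(\frakA)) = \Inv(\frakA_0) = \Inv(\frakA)$ contradicts your own earlier admission that the middle equality fails in general. There is also a complexity issue: you take $N$ larger than the number of vertices of $\Inv(\frakA)$, but the \PTime bound of Theorem~\ref{thrm_main} is for \emph{fixed} $N$; with $N$ growing linearly in the input the cover has size $|\frakA_0|^{w^{\calO(N^2)}}$, which is not polynomial.

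The paper avoids both problems by dispensing with Stage~1 altogether. The Rosati cover of Section~\ref{sec_Rosati} is defined directly from an invariant $\frakI$ (its elements are terms built from the vertices and edges of $\frakI$), and Lemma~\ref{lemma_gbisim} shows $\Inv(\frakR^m_N) = \frakI$ for every $m \geq N \geq 2$. In particular $\frakR_2$ already has the correct invariant, so the constant $N=2$ suffices and \PTime is immediate. The remaining work is to make the output canonical: the paper first equips $\Inv(\frakA)$ with a definable linear order via an inductive colour-refinement of $\Gam(\frakA)$, and then orders the term universe of $\frakR_2(\Inv^<(\frakA))$ lexicographically from this. Neither weak acyclicity nor conformality plays any role in the canonisation argument; those properties are needed only for the finite-controllability results.
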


\begin{cor} \label{coroll_capture}
The class of all those \PTime Boolean queries that are invariant 
under guarded bisimulation, $\mbox{\sc Ptime}/{\gbisim}$, 
can be captured in the sense of descriptive complexity. 
\end{cor}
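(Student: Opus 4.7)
The plan is to derive the capturing result directly from the \PTime canonisation provided by Theorem~\ref{thrm_canon}, following the standard Immerman--Vardi recipe for capturing \PTime on ordered structures. As capturing logic $\mathcal{L}$ I would take pairs $(M,t)$ where $M$ is a deterministic Turing machine and $t$ a polynomial time bound; the semantics of $(M,t)$ on a $\tau$-structure $\frakA$ is defined to be the output of $M$ on the string encoding of the canonical structure $\canon(\frakA)$, where the universe of $\canon(\frakA)$ comes equipped with the natural linear order inherited from the output tape of the canonisation procedure. Equivalently, one can use \IFP{} (or \LFP{}) over the canonical form, which is ordered, and invoke the Immerman--Vardi theorem directly on that ordered structure.

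Evaluating a formula of $\mathcal{L}$ on $\frakA$ amounts to a composition of three \PTime stages: computing the game invariant $\Inv(\frakA)$ by a standard bottom-up saturation over the finitely many guarded types up to the width of $\tau$; applying the canonisation of Theorem~\ref{thrm_canon} to produce $\canon(\frakA)$; and finally running $M$ within its clock $t$. Since each stage is polynomial, model checking for $\mathcal{L}$ lies in \PTime. Invariance under $\gbisim$ is immediate from Theorem~\ref{thrm_canon}: if $\frakA \gbisim \frakA'$ then $\canon(\frakA) = \canon(\frakA')$, so the entire computation produces the same answer on both inputs.

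For completeness, let $Q$ be an arbitrary \PTime Boolean query that is $\gbisim$-invariant, decided on ordinary string encodings by some clocked \PTime machine $M_Q$. Consider the $\mathcal{L}$-formula whose machine first computes $\canon(\frakA)$ and then applies $M_Q$ to its encoding. Using $\gbisim$-invariance of $Q$ together with $\canon(\frakA) \gbisim \frakA$ from Theorem~\ref{thrm_canon}, one gets
\[
Q(\frakA) \;=\; Q(\canon(\frakA)) \;=\; M_Q(\canon(\frakA)),
\]
so $Q$ is definable in $\mathcal{L}$.

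The only delicate point is that the canonisation procedure must produce not merely an isomorphism-type but a concrete presentation on a naturally ordered universe that depends solely on the $\gbisim$-class of the input. This follows from the operational reading of Theorem~\ref{thrm_canon}: the output of a deterministic \PTime algorithm is a specific string, so the universe of $\canon(\frakA)$ is canonically identified with an initial segment of $\Nat$, providing the required order. A secondary routine check is needed to confirm that $\Inv(\frakA)$ is itself \PTime-computable from $\frakA$, so that the overall pipeline runs in \PTime in $|\frakA|$ rather than in $|\Inv(\frakA)|$; this is standard for game-theoretic invariants of bounded-width structures.
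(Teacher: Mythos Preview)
Your proposal is correct and follows essentially the same approach as the paper: derive the abstract capturing result $\PTime/{\gbisim} \equiv \PTime \circ \canon$ (equivalently $\LFP \circ \canon^<$) directly from the \PTime canonisation of Theorem~\ref{thrm_canon}, using the standard argument that pre-processing with $\canon$ enforces $\gbisim$-invariance while $\canon(\frakA) \gbisim \frakA$ guarantees completeness. You also correctly flag the one point that requires separate verification, namely that $\Inv(\frakA)$ is itself \PTime-computable from $\frakA$; the paper handles this via an inductive refinement of pre-orderings on guarded tuples that simultaneously yields the linear order on $\Inv(\frakA)$.
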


%.........................................................................................
\subsection*{Organisation}
%.........................................................................................

Section~\ref{sec_prelim} defines the main concepts and introduces guarded bisimilar hypergraph 
covers as a main tool. It  also states the above-mentioned  Lemma~\ref{lemma_treeification}.
Section~\ref{sec_Rosati} presents the construction of the Rosati cover. 
From this and Lemma~\ref{lemma_treeification}, the finite controllability of $\GFO$ is proven 
in Section~\ref{sec_FMP}. Section~\ref{sec_Complexity} establishes our new complexity results.  
Section~\ref{sec_Capture} deals with canonisation and capturing.

%==========================================================================================

\section{Hypergraphs and guarded fragments} \label{sec_prelim}

%==========================================================================================

We work with finite relational signatures. % possibly also admitting constants. 
Let us fix such a signature $\tau$ and let $\mathrm{width}(\tau)$ denote the
the maximal arity of any of the relation symbols in $\tau$.

%...........................................................................................
\subsection{Guarded fragments}
%...........................................................................................

The guarded fragment of first-order logic, $\GF$, as introduced by 
Andr{\'e}ka et al.~\cite{ABN98JPL}, is the collection of first-order 
formulas with certain syntactic restrictions in the quantification pattern, 
which is analogous to the relativised nature of modal logic. 
The set of $\GF(\tau)$ formulas is the smallest set
\begin{itemize}
\item containing all atomic formulas of signature $\tau$ and equalities between variables;
\item closed under Boolean connectives: $\lnot$, $\land$, $\lor$, $\limp$, $\liff$;
\item and such that whenever $\psi(\tup{x},\tup{y})$ 
      is a $\GF(\tau)$ formula with all free variables indicated 
      and $\alpha(\tup{x},\tup{y})$ is a $\tau$-atom (or an equality)
      involving all free variables of $\psi$,  
      then the following are in $\GF(\tau)$ as well:
$$
\mbox{}\hspace{-.5cm} (\forall\, \tup{x}.\, \alpha)\, \psi := \forall \tup{x} (\alpha \limp \psi) 
  \ \ \text{and} \ \  
  (\exists\, \tup{x}.\, \alpha)\, \psi := \exists \tup{x} (\alpha \land \psi).
$$
\end{itemize}

\noindent In a $\tau$-structure $\frakA$ a non-empty set $X$ of elements is said to be \emph{guarded} 
if it is a singleton or there is an atom $R^\frakA(\tup{a})$ such that every member of $X$ 
occurs in $\tup{a}$.
A \emph{maximal guarded set} is one not properly included in any other guarded set.
A tuple $\tup{b}$ of elements is guarded if the set of its components is guarded. \\

While $\GF$ provides an important extension of the modal fragment,  
guarded quantification is too restrictive to express some basic temporal operators.
To remedy this shortcoming various relaxations of the notion of guardedness 
and corresponding fragments have been introduced, chief among them 
the clique-guarded fragment.

The \emph{clique guarded fragment}, $\CGF$, relaxes the constraints on
guards $\alpha$ in $\GF$ to allow existentially quantified conjunctions of atoms 
as guards that guarantee that the tuple of free variables is clique-guarded.
A set $X$ of elements of a structure $\frakA$ is \emph{clique-guarded}
if every pair of elements of $X$ is guarded, equivalently, 
if $X$ induces a clique in the Gaifman graph of $\frakA$.
A tuple $\tup{a}$ is clique-guarded whenever the set of its components is. 
Observe that while guarded sets are bounded in size by the width of the signature, 
there can be arbitrarily large clique-guarded sets whenever the width is at least $2$.
Recall that the \emph{width} of a formula $\varphi$, $\mathrm{width}(\varphi)$
is the maximal number of free variables in any of its subformulas.
In a clique-guarded formula $\varphi$ the maximal size of a clique-guarded set 
quantified over is bounded by $\mathrm{width}(\varphi)$.

Observe that guardedness and clique-guardedness (of tuples of any fixed arity)
are definable in the corresponding logic. 
That is, there are formulas $\mathtt{guarded}_\calL(\tup{x})$ 
expressing that the tuple $\tup{x}$ is guarded in the sense appropriate for the fragment.
E.g., $\mathtt{guarded}_\GF(\tup{x}) = \bigvee_{\alpha} \exists \tup{y} \alpha (\tup{x},\tup{y})$
where $\alpha$ ranges over all $\tau$-atoms in tuples of variables 
comprising at least $\tup{x}$, as indicated.   
A formula $\mathtt{guarded}_\CGF(\tup{x})$ can be similarly defined. \\

An \emph{atomic $\tau$-type} $t(x_1,\ldots,x_n)$ is a maximal consistent set 
of $\tau$-literals (atoms or negated atoms, including (in)equalities) 
whose constituent terms are among the variables $x_1,\ldots,x_n$ and 
the constants from $\tau$. An atomic type $t(x_1,\ldots,x_n)$ determines, 
for every choice of indices $\tup{\i} = (i_1,\ldots,i_k)$, its \emph{restriction} 
to components $\tup{\i}$, which is an atomic type in $k$ variables 
$(x_{i_1},\ldots,x_{i_k})$ denoted $t|_{\tup{\i}}$;
conversely we say that $t$ is an \emph{extension} of $t|_{\tup{\i}}$.
In a $\tau$-structure $\frakA$ the atomic type $\atype_\frakA(\tup{a})$ of a tuple $\tup{a}$ 
is the unique atomic type $t(\tup{x})$ such that $\frakA \models t(\tup{a})$. 
One says that $t$ is \emph{realised} by $\tup{a}$ in $\frakA$.
Each atomic type can be identified with the isomorphism type of the sub-structure 
induced by any tuple realising it.   
Over a signature of $r$ many relational symbols of maximal arity $w$ and $k$ constants 
there are $2^{\calO(r (n+k)^w)}$ many atomic types in $n$ variables. 
We identify each atomic type with the conjunction of its literals. 

\medskip

\paragraph*{\emph{Guarded bisimulation}} 
The notion of guarded bisimulation~\cite{Gr99JSL}, denoted $\gbisim$, 
can be defined either in terms of the guarded bisimulation game, a variant 
of the Ehrenfeucht-Fra\"iss\'e style pebble game in which the set of pebbles 
must at any given time be guarded, or as a back-and-forth system of 
partial isomorphisms whose domain and image are both guarded. 
$\GF$ is preserved under guarded bisimulation~\cite{ABN98JPL}, 
see also \cite{{Gr99JSL}, GHO02}: 
\[
\mbox{}\hspace{-.4cm}
\frakA \gbisim \frakB \quad \Longrightarrow \quad 
    \text{for all } \varphi \in \GF :\ \frakA \models \varphi \;\Leftrightarrow\; \frakB \models \varphi.
\]

Given a relational structure $\frakA$, its \emph{guarded bisimulation game graph},
denoted $\Gam(\frakA)$, has as its vertices the set $\GamDom(\frakA)$ of all 
\emph{maximal guarded tuples} of $\frakA$, each labeled by its atomic type, 
equivalently, by the isomorphism type of the substructure induced by the tuple. 
Two such tuples $\tup{a}$ and $\tup{b}$ are linked by an edge labeled by 
a partial bijection $\rho \subseteq \{1,\ldots,|\tup{a}|\}\times\{1,\ldots,|\tup{b}|\}$ 
whenever $a_i = b_j$ for all $(i,j) \in \rho$.
Note that structures $\frakA$ and $\frakB$ are guarded bisimilar 
iff $\Gam(\frakA)$ and $\Gam(\frakB)$ are bisimilar in the modal sense~\cite{GHO02}.

\medskip

The \emph{guarded bisimulation invariant} $\Inv(\frakA)$ of $\frakA$ 
is defined as the bisimulation quotient of $\Gam(\frakA)$.
Vertices of $\Inv(\frakA)$ correspond to $\gbisim$-classes of maximal 
guarded tuples of $\frakA$, labeled by their atomic types (induced isomorphism types). 
A $\rho$-labeled edge links vertices $v$ and $w$ if there are 
guarded tuples $\tup{a}$ and $\tup{b}$ in $\frakA$ 
realising the $\gbisim$-classes represented by $v$ and by $w$, respectively, 
and such that $a_i = b_j$ for all $(i,j) \in \rho$.

\paragraph*{\emph{Scott normal form and satisfiability criterion}}
Gr\"adel's analysis of decidability for $\GF$ \cite{Gr99JSL}
uses the following Scott normal form corresponding to a 
relational Skolemisation.

\begin{lem}[{\cite[Lemma 3.1]{Gr99JSL}}] \label{lemma_normalform} 
To every (clique-)guarded $\tau$-sentence $\varphi$ one can associate 
a companion (clique-)guarded $\tau\cup\sigma$-sentence 
\begin{equation} \label{eq_Scott_nf}
  \psi =  \bigwedge_j (\forall\, \tup{x}. \alpha_j) \, \vartheta_j(\tup{x}) 
          \, \land \,
          \bigwedge_i (\forall\, \tup{x}. \beta_i) \, 
                        (\exists \,\tup{y}. \gamma_i) \ \psi_i(\tup{x},\tup{y}) 
\end{equation}
such that $\psi \models \varphi$ and every \ $\frakA \models \varphi$ \ 
has a $\tau\cup\sigma$-expansion $\frakB \models \psi$.
Here $|\sigma| \leq |\varphi|$, $\mathrm{width}(\psi) = \mathrm{width}(\varphi)$ 
and the $\vartheta_j$, $\psi_i$ are quantifier-free.
\end{lem}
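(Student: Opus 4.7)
The plan is a standard structural Skolemization --- naming quantified subformulas by fresh relation symbols --- adapted to preserve guardedness. First I would put $\varphi$ into negation normal form (NNF) using $\lnot(\exists\tup{x}.\alpha)\chi \equiv (\forall\tup{x}.\alpha)\lnot\chi$ and dually; this respects both the quantification shape and the admissible guards, so neither $\mathrm{width}(\varphi)$ nor the set of guards actually used is altered, and the clique-guarded case goes through identically. I then process quantified subformulas innermost-first. At each step I pick a $\chi = (Q\tup{y}.\alpha(\tup{x},\tup{y}))\,\chi'(\tup{x},\tup{y})$ with $\chi'$ already quantifier-free and $\tup{x}$ the free variables of $\chi$, introduce a fresh relation symbol $R_\chi$ of arity $|\tup{x}| \leq \mathrm{width}(\varphi)$, and substitute a literal in $R_\chi$ for $\chi$ in the ambient context. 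Iterating strictly decreases the total quantifier depth, so after at most $|\varphi|$ replacements the ambient sentence has become a Boolean combination of fresh $0$-ary atoms, which I append as a closed quantifier-free conjunct (equivalently a universal clause with empty guard-tuple).

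The definitional clauses for each $R_\chi$ must match the Scott shape of~\eqref{eq_Scott_nf}. For $Q = \exists$ I substitute $R_\chi(\tup{x})$ for $\chi$ and add
\[
  (\forall \tup{x}.\, R_\chi(\tup{x}))\,(\exists \tup{y}.\alpha)\,\chi'(\tup{x},\tup{y})
  \quad\text{and}\quad
  (\forall \tup{x}\tup{y}.\alpha)\,(\chi'(\tup{x},\tup{y}) \to R_\chi(\tup{x})),
\]
the first a $\forall\exists$-clause guarded by the fresh atom $R_\chi(\tup{x})$ itself, the second a universal clause guarded by the original $\alpha$. For $Q = \forall$ the naive biconditional breaks: the direction $\chi \to R_\chi$ rewrites to $\lnot R_\chi(\tup{x}) \to (\exists\tup{y}.\alpha)\lnot \chi'$, and $\lnot R_\chi$ is not a legal guard while the $R_\chi(\tup{x})$-disjunct cannot be pushed under $(\exists\tup{y}.\alpha)$ without distortion when $\alpha(\tup{x},\cdot)$ is empty. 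My remedy is to name the \emph{negation} of $\chi$ instead: introduce $R_\chi$ intended to mean $(\exists\tup{y}.\alpha)\lnot\chi'$, replace $\chi$ in context by the literal $\lnot R_\chi(\tup{x})$ (which keeps the ambient formula in NNF), and add the analogous two clauses with $\lnot\chi'$ in place of $\chi'$. Since $\chi'$ is quantifier-free so is $\lnot\chi'$, and every new clause has the prescribed shape.

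Verification then has two short halves. For $\psi \models \varphi$, the ``$R_\chi \to (\text{sub-}\chi)$'' direction of each definitional pair (the $\forall\exists$-clause above) ensures that the propositional skeleton over the fresh atoms truthfully recomputes $\varphi$ in every $\frakB \models \psi$. For the expansion direction, given $\frakA \models \varphi$ I interpret each $R_\chi$ as $\{\tup{a} : \frakA \models \chi(\tup{a})\}$ or $\{\tup{a} : \frakA \models \lnot\chi(\tup{a})\}$ as appropriate; both definitional clauses then hold by construction. The counts $|\sigma| \leq |\varphi|$ and $\mathrm{width}(\psi) = \mathrm{width}(\varphi)$ are built in, since one fresh symbol is introduced per naming step and no new variables are ever added; the clique-guarded case requires no change, because the only structural property of $\alpha$ and of $R_\chi(\tup{x})$ used is that they guard all free variables of their matrices, which holds under both guard regimes. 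The main obstacle I anticipate is precisely the $\exists/\forall$ asymmetry: the naive biconditional naming succeeds cleanly only for $\exists$, and executing the dual-naming remedy for $\forall$ while keeping the substituted formula in NNF and letting $R_\chi$ itself serve as the $\forall\exists$-guard is the one place where the bookkeeping has to be done with care.
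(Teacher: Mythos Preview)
The paper does not give its own proof of this lemma; it is merely stated with attribution to Gr\"adel and used as a black box. Your proposal is the standard relational-Skolemisation argument (and is essentially Gr\"adel's), and it is correct in outline and in all structural claims about width, $|\sigma|$, guardedness, and the $\CGF$ case.

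One small slip in your verification sketch: for $\psi \models \varphi$ you say the $\forall\exists$-clause of each definitional pair is the operative one. That is right for the $\exists$-case, where the substituted literal is $R_\chi(\tup{x})$ and the $\forall\exists$-clause yields $R_\chi(\tup{x}) \to \chi(\tup{x})$, which is exactly what monotonicity of the NNF context needs. In the $\forall$-case, however, the substituted literal is $\lnot R_\chi(\tup{x})$ and you need $\lnot R_\chi(\tup{x}) \to \chi(\tup{x})$; your $\forall\exists$-clause $(\forall\tup{x}.R_\chi)(\exists\tup{y}.\alpha)\lnot\chi'$ only gives the converse $\chi(\tup{x}) \to \lnot R_\chi(\tup{x})$. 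It is the \emph{universal} clause $(\forall\tup{x}\tup{y}.\alpha)(\lnot\chi' \to R_\chi)$ whose contrapositive delivers $\lnot R_\chi(\tup{x}) \to (\forall\tup{y}.\alpha)\chi'$. Since both clauses are present in $\psi$ the conclusion $\psi \models \varphi$ stands; only the attribution of which clause drives which direction is swapped in the $\forall$-case.
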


%\noindent{\it Proof idea, cf. \cite[Lemma 3.1]{Gr99JSL}}\\
%Normalisation is achieved by successively replacing inner-most existential subformulas 
%$$\zeta(\tup{y}) = (\exists \tup{x}. \alpha(\tup{y},\tup{x}))\vartheta(\tup{y},\tup{x})$$ 
%with $\vartheta$ quantifier free by an atom $R_\zeta(\tup{y})$ and 
%introducing conjuncts defining $R_\zeta$ into $\psi$ : 
%$(\forall \tup{y}. R_\zeta(\tup{y}))\zeta(\tup{y})$ and 
%$(\forall \tup{y},\tup{x}\,.\, \alpha(\tup{y},\tup{x}))(\vartheta(\tup{y},\tup{x})\limp R_\zeta(\tup{y}))$.
%Universally quantified subformulas are handled dually.
%% Finally, once all quantified subformulas have thus been eliminated leaving a
%% quantifier free formula $\vartheta(\tup{x})$, this is replaced by $Z(\tup{x})$ 
%% where $Z$ is a new predicate and the formula $(\forall \tup{x}. Z(\tup{x}))\vartheta(\tup{x})$
%% is added to the defining axioms, which are then gathered in a conjunction producing $\psi$.

A guarded bisimulation game graph $G$ or, similarly, a guarded-bisimulation invariant $I$ 
is said to satisfy the formula $\psi$ in Scott normal form \eqref{eq_Scott_nf} if
\begin{itemize}
\item its vertices are labeled by atomic types in the signature of $\psi$ 
  that are guarded and that satisfy the universal conjuncts of $\psi$; and
\item for each vertex $v$ with label $t(\tup{x}\tup{z})$ 
  and for each conjunct 
  $ (\forall\, \tup{x}. \beta_i) \, (\exists \,\tup{y}. \gamma_i) \ \psi_i(\tup{x},\tup{y}) $
  of $\psi$ such that $t(\tup{x}\tup{z}) \models \beta_i(\tup{x})$
  there exists a vertex $w$ labeled with some 
  type $s(\tup{x}'\tup{y}) \models \psi_i(\tup{x}',\tup{y})$ 
  such that $s|_{\tup{x}'} = t|_\tup{x}$ and $v$ and $w$ are 
  linked by an edge labeled with the mapping $\rho: \tup{x} \to \tup{x}'$.
\end{itemize}

\begin{prop}[cf. {\cite[Lemma 3.4]{Gr99JSL}}] \label{prop_satcrit}
Let $\psi$ be the normal form of $\varphi$ as in \eqref{eq_Scott_nf}. 
Then $\varphi$ is satisfiable if, and only if, 
there exists a guarded bisimulation invariant $\frakI$ satisfying $\psi$ 
and such that vertices of $\frakI$ are labeled by distinct guarded atomic types.\\[-1.4em]
\end{prop}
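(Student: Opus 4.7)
The plan is to prove the two directions separately. The ($\Rightarrow$) direction is essentially read off from the definition of the invariant. Given $\frakA \models \varphi$, I would invoke Lemma~\ref{lemma_normalform} to obtain an expansion $\frakB$ of $\frakA$ to signature $\tau\cup\sigma$ with $\frakB \models \psi$, and set $\frakI := \Inv(\frakB)$. Each vertex of $\frakI$ corresponds to a $\gbisim$-class of maximal guarded tuples of $\frakB$ and is labeled by their common atomic type, which is guarded and, since $\frakB \models \psi$, satisfies every universal conjunct of $\psi$. The edge conditions transfer from $\Gam(\frakB)$: whenever the label $t$ of a vertex $v$ entails some guard $\beta_i(\tup{x})$, picking any tuple $\tup{a}$ realising $t$ in $\frakB$ produces — by the semantics of $\psi$ — a witnessing tuple $\tup{a}'$ whose type $s$ satisfies $\psi_i$ and which shares the prescribed components with $\tup{a}$, yielding the required $\rho$-labeled edge to the class of $\tup{a}'$. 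Distinctness of vertex labels is enforced by passing to the bisimulation quotient.

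For the ($\Leftarrow$) direction, the plan is to unravel $\frakI$ into a tree-like $\tau\cup\sigma$-model of $\psi$. Starting from any vertex $v_0 \in \frakI$, realize its atomic type on a fresh guarded tuple in an initially empty structure $\frakB$, and proceed inductively. At each stage, having placed a guarded tuple $\tup{a}$ stemming from a vertex $v$ with label $t$, for every existential conjunct $(\forall\,\tup{x}.\,\beta_i)(\exists\,\tup{y}.\,\gamma_i)\,\psi_i$ whose guard is entailed by $t$ I use the postulated outgoing $\rho$-edge of $v$ to a vertex $w$ with label $s \models \psi_i$ to adjoin fresh elements realising the $\tup{y}$-part of $s$, glued to $\tup{a}$ along the identifications prescribed by $\rho$. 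Iterated in a fair manner, this produces a (possibly countable) structure $\frakB$ together with a natural projection $\pi \colon \frakB \to \frakI$ that is a guarded bisimulation in the sense of \cite{GHO02}.

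The crux of the verification — and the main obstacle — is to argue that \emph{every} maximal guarded tuple of $\frakB$ actually arose as a realisation of some vertex of $\frakI$ during the construction, so that its atomic type satisfies all universal conjuncts of $\psi$. This rests on the tree-like character of the unravelling: successive realisation steps only overlap in the controlled manner prescribed by the $\rho$-labels, so no ``accidental'' guarded tuples can appear outside the fabricated pieces; this is the standard content of tree-unfolding arguments for $\GF$. Once this is granted, the existential conjuncts of $\psi$ hold in $\frakB$ by construction, the universal conjuncts hold because each maximal guarded tuple of $\frakB$ carries a label satisfying them, and the $\tau$-reduct of $\frakB$ witnesses satisfiability of $\varphi$ via $\psi \models \varphi$.
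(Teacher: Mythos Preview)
The paper does not actually supply a proof of this proposition; it is stated with a pointer to Gr\"adel's Lemma~3.4 and used as a black box. So there is no in-paper argument to compare against, and your sketch has to stand on its own.

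Your ($\Leftarrow$) direction is the standard tree-unfolding argument and is fine as a plan; you have also correctly isolated the one nontrivial point, namely that no guarded tuples arise in the unfolded structure other than the ones you deliberately fabricate.

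There is, however, a genuine gap in your ($\Rightarrow$) direction. You write that ``distinctness of vertex labels is enforced by passing to the bisimulation quotient.'' That is not so: the bisimulation quotient $\Inv(\frakB)$ guarantees that distinct vertices carry distinct \emph{full $\gbisim$-types}, but two distinct $\gbisim$-classes may very well share the same \emph{atomic} type (they differ only in higher back-and-forth behaviour). So $\Inv(\frakB)$ need not have distinct atomic-type labels, and your candidate $\frakI$ does not yet meet the requirement of the proposition.

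The fix is to pass instead to the coarser quotient by atomic type: take as vertices the guarded atomic types realised in $\frakB$, and put a $\rho$-edge from $t$ to $s$ whenever some realisations of $t$ and $s$ in $\frakB$ overlap as prescribed by $\rho$. This graph visibly satisfies $\psi$ and has distinct labels by construction. What then needs a word of justification is that it is a guarded bisimulation invariant in the sense of the paper, i.e.\ equals $\Inv(\frakC)$ for some $\frakC$. This closes up neatly via your own ($\Leftarrow$) construction: unfold this type graph into a tree-like model $\frakC$; in $\frakC$ any two maximal guarded tuples stemming from the same vertex are $\gbisim$-equivalent (same local neighbourhood pattern by construction), while tuples stemming from different vertices differ already in atomic type, hence $\Inv(\frakC)$ is precisely the type graph. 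This is essentially Gr\"adel's type-elimination viewpoint, and it is the missing step in your forward direction.
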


%.........................................................................................
\subsection{Hypergraphs, acyclicity and covers} \label{sec_covers}
%.........................................................................................

A \emph{hypergraph} is a pair $H=(V,S)$ with $V$ its set of elements 
and $S \subseteq \calP(V)$ a set of subsets of $V$, which are called hyperedges. 
% We consider only finite hypergraphs.
For a set of hyperedges $S$, let $\downcl{S}$ stand for 
the closure of $S$ under subsets.
A set $X$ of elements of $H$ is guarded if $X \in \downcl{S}$. 
The Gaifman graph $\Gamma(H)$ of $H$ is the undirected graph 
having vertex set $V$ and, as edges, all non-degenerate guarded pairs of $H$.
The maximal size of any hyperedge is referred to as the \emph{width} of $H$. 
To every $\tau$-structure $\frakA$ one associates in a natural way 
a hypergraph $H[\frakA]$ with $V$ the universe of $\frakA$ 
and $S$ the collection of maximal guarded subsets of $\frakA$. 
The width of $H[\frakA]$ is then bounded by $\mathrm{width}(\tau)$. 
The Gaifman graph of $\frakA$ is $\Gamma[\frakA] := \Gamma(H[\frakA])$.

A \emph{homomorphism} $h \colon H \rightarrow H'$ between 
hypergraphs $H = (V,S)$ and $H' = (V',S')$ is a map from $V$ to $V'$ 
such that $h(s) \in S'$ for all $s \in S$. A hypergraph homomorphism $h$ 
is \emph{rigid} if $|h(s)|=|s|$ for every hyperedge $s$. 
Every homomorphism $h \colon \frakA \rightarrow \frakA'$
between relational structures induces a hypergraph homomorphism
from $H[\frakA]$ to $H[\frakA']$.

Game graphs $G(H)$ and invariants $I(H)$ are defined similarly for hypergraphs $H$ 
as for relation structures, where instead of guarded bisimulation 
we use the natural notion of \emph{hypergraph bisimulation}. 
It is safe to think of hypergraph bisimulation as of guarded bisimulation 
stripped of all atomic relational information. 
Vertices in the game graph are maximal hyperedges each labeled with 
the isomorphism type of the sub-hypergraph induced by it, in other words, 
the label carries the information about all hyperedges lying inside a maximal 
hyperedge. Edges of the game graph connect overlapping maximal hyperedges 
and are in bijection with and are labeled by partial bijections compatible 
with the actual overlap.

A hypergraph $H$ is ($N$-)\emph{conformal} if every clique in $\Gamma(H)$ 
(of size at most $N$) is covered by a hyperedge of $H$. 
A structure $\frakA$ is ($N$-)conformal whenever $H[\frakA]$ is, i.e., 
if every $k$-clique ($k\leq N$) in its Gaifman graph is covered by a ground atom.  
Over conformal structures guarded quantification is as powerful as 
clique-guarded quantification.

A hypergraph $H$ is ($N$-)\emph{chordal} if all cycles in $\Gamma(H)$ 
of length greater than $3$ (and at most $N$) have a chord in $\Gamma(H)$. 
An analogous notion for relational structures $\frakA$ is similarly defined 
in terms of the Gaifman graph $\Gamma(H(\frakA))$.

A hypergraph is ($N$-)\emph{acyclic} if it is both ($N$-)chordal 
and ($N$-)conformal. For finite hypergraphs acyclicity is equivalent 
to tree decomposability. A finite hypergraph is \emph{tree decomposable} 
if it can be reduced to the empty hypergraph by iteratively deleting 
some non-maximal hyperedge or some vertex contained in at most one hyperedge 
(Graham's algorithm) cf.~\cite{BFMY83}. %,Gol80book,Gra79rep}. 
We say that a relational structure $\frakA$ is \emph{guarded tree decomposable}
if $\frakA$ allows a tree decomposition in the sense of Robertson--Seymour 
with guarded bags. 
This is equivalent to $H[\frakA]$ being tree decomposable, i.e.~acyclic. 
% \footnote{In the literature 
%   ``tree decomposability'' is the established term. 
%   We prefer to stress ``guardedness'' of the tree decompositions,
%   which becomes essential if we transfer this notion to relational structures:
%   in their \emph{guarded} tree decompositions only guarded bags are allowed.} 

A homomorphism $h \colon H \rightarrow H'$ into the hypergraph $H'=(V',S')$
is called \emph{tree decomposable} if there is some $S'' \subseteq \downcl{S'}$ 
such that \ $h \colon H \rightarrow H''$ is a homomorphism into $H'' = (V',S'')$ 
and $H''$ is tree decomposable; anaolgously, for relational structures. 
In this sense a homomorphism between relational structures $h \colon \frakH \to \frakA$ 
is \emph{guarded tree decomposable} if $h = g \circ f$ for homomorphisms 
$f \colon \frakH \to \frakF$ and $g \colon \frakF \to \frakA$ 
where $\frakF$ is guarded tree decomposable. We describe this situation by saying 
that $h$ \emph{factors} through $f \colon \frakH \to \frakF$. 
The following key notion employs this idea to capture a subtle relaxation 
of $N$-acyclicity in the context of coverings. 

\begin{defi}[Weak $N$-acyclicity] \label{def_cover}
A \emph{guarded bisimilar cover} (or just \emph{cover} for short) 
$\pi \colon \frakB \covers \frakA$ is an onto homomorphism 
$\pi \colon \frakB \to \frakA$ inducing a guarded bisimulation
$\{ (\tup{b}, \pi(\tup{b})) \mid \tup{b} \text{ maximal}$ $\text{guarded tuple in } \frakB \}$ 
between relational structures $\frakB$ and $\frakA$ of the same vocabulary.
%\newline
A cover $\pi \colon \frakB \covers \frakA$ is \emph{weakly $N$-acyclic} 
if, for all homomorphisms $h \colon \frakQ \to \frakB$ with $|\frakQ| \leq N$,
there is a guarded tree decomposable structure $\frakT$ and homomorphisms 
$f\colon\frakQ\to\frakT$, $g\colon\frakT\to\frakA$ such that $\pi\circ h = g \circ f$.
%One then says that $\pi\circ h$ \emph{factors through} $f\colon\frakQ\to\frakT$.
\end{defi}

% Notice that every cover $\pi \colon \frakB \covers \frakA$ by an N-acyclic $\frakB$ 
% is also weakly $N$-acyclic, where for every $h \colon \frakQ \to \frakB$ 
% with $|\frakQ| \leq N$ one can choose $T$ as a (not-necessarily induced) substructure 
% of $\frakB$ containing $h(\frakQ)$ and take $g$ as the corresponding restriction of $\pi$. 

%\vskip .5em

Analogous notions of hypergraph covers are defined mutatis mutandis with the 
additional stipulation that the restriction of a cover homomorphism to every 
hyperedge is expressly required to be injective, i.e.~that the cover homomorphism
is to be rigid. (In the case of guarded bisimilar covers among relations structures 
the analogous condition is implied by the above definition. That is, every 
guarded bisimular cover $\pi \colon \frakB \covers \frakA$ induces a hypergraph 
cover $\hat{\pi} \colon H[\frakB] \covers H[\frakA]$ such that $\hat{\pi}$ 
is a rigid homomorphism of hypergraphs).

% \begin{rem} \label{rem_acover_gtd}
% A hypergraph cover $\pi \colon \frakB \covers \frakA$ is weakly $N$-acyclic 
% iff for every homomorphism $h \colon \calQ \to \frakB$ from a hypergraph $\calQ$ 
% of at most $N$ elements $\pi \circ h$ is tree decomposable.
% The situation for guarded bisimilar covers is strictly analogous. 
% \end{rem}

%.........................................................................................
\subsection{Conjunctive queries}
%.........................................................................................

\emph{Conjunctive queries} (\CQ) 
are formulas of the form $\exists \tup{x} \bigwedge_i \alpha_i$, 
where the $\alpha_i$ are positive literals. 
A Boolean conjunctive query (\BCQ) is one with no free variables. 
A union of (Boolean) conjunctive queries (\UCQ) is a disjunction of \BCQ.
The size $|q|$ of a \UCQ $q$ is its length as a formula, and its 
\emph{height} is the maximal size of its disjuncts (constituent \CQ).

To every \BCQ $Q = \exists \tup{x} \bigwedge_i \alpha_i$ of signature $\tau$ 
one can associate the $\tau$-structure $\calQ$ having as its universe the set of 
variables in $\tup{x}$ and atoms as prescribed by the $\alpha_i$ of $Q$. Then 
$\frakA \models Q$ iff there exists a homomorphism $h: \calQ \to \frakA$, \cite{CM77}.
% Conversely, every finite $\tau$-structure $\calQ$ gives rise to a BCQ $Q$ 
% asserting the existence of a homomorphism $h: \calQ \to \frakA$.
We say that $Q \in \CQ$ is \emph{acyclic} 
if the associated structure $\calQ$ is acyclic. 
Note that this is equivalent to the existence of a \emph{guarded} conjunctive query 
equivalent to $Q$, i.e., one that is both in \GFO and in \CQ~\cite{GLS03}.

For each \BCQ $Q$ we define its \emph{treeification in signature $\tau$}, 
denoted $\chi^\tau_Q$, as the disjunction of all \emph{acyclic} \BCQ\ $T$ 
in the signature $\tau$ comprised of \emph{at most three times as many atoms} 
as $Q$ and such that $T \models Q$.
Further, for $q=\bigvee_i Q_i$ a \UCQ we set $\chi^\tau_q = \bigvee_i \chi_{Q_i}$. 
It is obvious that $\chi^\tau_q \models q$ for every $q$.
%Note that in a degenerate case $\chi^\tau_q$ could be an empty disjunction, i.e.~false.
In the following $\tau$ will always be an expansion of the signature of $q$, 
and will be omitted whenever clear from the context or of no import.

\begin{lem} \label{lemma_treeification}
Let $\tau$ be a signature consisting of $r$ relation symbols of maximal arity $w$ 
(the width of $\tau$). Consider a \UCQ $q=\bigvee_i Q_i$ over $\tau$ and 
let $h = \mathrm{max}_i |Q_i|$ (the height of $q$).
\begin{enumerate}[label=(\roman*)]
\setlength{\itemsep}{.5ex}
\item For a $\tau$-structure $\frakA$ we have $\frakA \models \chi^\tau_q$ if 
      there is a guarded tree decomposable homomorphism 
      $\eta: \calQ_i \to \frakA$ for some $Q_i$.
\item In particular, for all \ $\varphi \in \GF[\tau]$: 
      $\varphi \models q$ iff $\varphi \models \chi^\tau_q$.
\item The size of the treeification $\chi^\tau_q$
      is at most ${r}^{\calO(h)} (hw)^{\calO(hw)}$;  \\
      moreover, $\chi^\tau_q$ can be constructed in time 
      $|q| r^{\calO(h)} (hw)^{\calO(hw)}$.
\end{enumerate}
\end{lem}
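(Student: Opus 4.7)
The plan is to prove (i) by a direct Steiner-subtree construction, derive (ii) via the guarded tree unraveling together with (i), and then obtain (iii) from a combinatorial count over candidate acyclic queries. The bulk of the work sits in (i); the rest is essentially routine.

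For (i), assume $\eta = g \circ f$ with $f\colon \calQ_i \to \frakF$, $g\colon \frakF \to \frakA$, and $\frakF$ guarded tree-decomposable, and fix a tree decomposition $(T_\frakF,(B_t)_t)$ of $\frakF$ whose bags $B_t$ are guarded tuples. For each atom $\alpha$ of $Q_i$, its image $f(\alpha)$ is an atom of $\frakF$, so its underlying tuple sits in some bag $B_{t_\alpha}$. Let $S$ be the minimal subtree of $T_\frakF$ spanning $\{t_\alpha : \alpha \text{ atom of } Q_i\}$; since $Q_i$ has at most $h$ atoms, standard Steiner-tree bounds in a tree give $|S| \le 2h - 1$. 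Construct $T$ with variable set a fresh copy of $\bigcup_{t\in S} B_t$ and with atoms (a) one guard atom per bag $B_t$ with $t \in S$, and (b) a lifted copy of $f(\alpha)$ placed in bag $B_{t_\alpha}$ for each atom $\alpha$ of $Q_i$. This yields an acyclic $\calT$ (tree-decomposable along $S$) with at most $3h - 1$ atoms, a natural homomorphism $\calQ_i \to \calT$ witnessing $T \models Q_i$, and a homomorphism $\calT \to \frakA$ induced by $g$ witnessing $\frakA \models T$. Hence $T$ appears as a disjunct of $\chi^\tau_q$ and $\frakA \models \chi^\tau_q$.

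For (ii), the direction $\varphi \models \chi^\tau_q \Rightarrow \varphi \models q$ is immediate: every disjunct $T$ of $\chi^\tau_q$ satisfies $T \models Q_i \models q$ by the very definition of $\chi^\tau_q$. Conversely, given $\frakA \models \varphi$, pass to the guarded tree unraveling $\frakA^\ast$, which is guarded bisimilar to $\frakA$ and guarded tree-decomposable. Since $\GF$ is preserved under $\gbisim$, one has $\frakA^\ast \models \varphi \models q$, and some $\eta^\ast\colon \calQ_i \to \frakA^\ast$ witnesses $\frakA^\ast \models Q_i$. Composing with the natural projection $\pi\colon \frakA^\ast \to \frakA$ yields a homomorphism $\calQ_i \to \frakA$ that factors through the guarded tree-decomposable $\frakA^\ast$; part (i) then gives $\frakA \models \chi^\tau_q$.

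For (iii), every acyclic \BCQ over $\tau$ with at most $3h$ atoms uses at most $\calO(hw)$ variables, so the number of distinct atoms is bounded by $r(hw)^w$ and the number of such queries by $\bigl[r(hw)^w\bigr]^{3h} = r^{\calO(h)} (hw)^{\calO(hw)}$. Enumerating all candidates, testing acyclicity via Graham's algorithm, and checking $T \models Q_i$ for each $i$ by a homomorphism search of cost $|Q_i| \cdot |\calT|^{|Q_i|}$, keeps the total construction time within $|q| \cdot r^{\calO(h)} (hw)^{\calO(hw)}$. The main obstacle is really part (i): securing the constant factor $3$ in the atom count via the Steiner-subtree bound inside the tree decomposition of $\frakF$ is what drives the whole lemma, and once that is in place the other parts follow directly.
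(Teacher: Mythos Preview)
Your overall strategy for (i)--(iii) matches the paper's, but there is a genuine gap in (i). You claim that the minimal subtree $S$ of $T_\frakF$ spanning the marked nodes $\{t_\alpha\}$ satisfies $|S| \le 2h - 1$ by ``standard Steiner-tree bounds in a tree.'' This is false: the minimal spanning subtree in a tree includes \emph{all} nodes on paths between marked nodes, and its size is unbounded in terms of $h$ --- take $T_\frakF$ a long path with the two endpoints marked. With $|S|$ unbounded, your atom count $|S| + h$ for $\calT$ is also unbounded, so $\calT$ need not occur as a disjunct of $\chi^\tau_q$. Since you yourself identify securing the factor~$3$ as the crux of the lemma, this is not a cosmetic slip.

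The fix, which is precisely what the paper does, is to take not the full spanning subtree but the closure $W$ of $\{t_\alpha\}$ under greatest lower bounds in the rooted forest $T_\frakF$. This set \emph{does} satisfy $|W| \le 2h - 1$: every node of $W$ outside $\{t_\alpha\}$ is the glb of two marked nodes lying in distinct subtrees below it and hence has at least two children in the induced forest on $W$, so there are at most $h-1$ such extra nodes. One must then check that the guard atoms at nodes of $W$, together with the lifted query atoms, still admit a guarded tree decomposition along $(W,\preceq)$ --- the connectedness condition survives because every node on the $W$-path between two nodes of $W$ also lies on the original $T_\frakF$-path between them and hence carries any shared element in its bag. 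With this correction your construction for (i) goes through. Your (ii) and (iii) are correct and essentially the paper's arguments; your (ii) is phrased as a direct proof via the guarded unraveling and its projection, where the paper argues by contradiction via the tree-model property of $\GF$, but these amount to the same thing.
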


\begin{proof}
{(i)} 
  Let $\eta: \calQ_i \to \frakA$ be a guarded tree decomposable homomorphism.
  This means that $\eta: \calQ_i \to \frakB \subseteq \frakA$ for some 
  guarded tree decomposable $\frakB$, a (not-necessarily induced) 
  substructure of $\frakA$.
  Consider a fixed guarded tree decomposition of $\frakB$ represented 
  as $(V,\preceq,\gamma)$ with $(V,\preceq)$ a forest with transitive edges
  and $\gamma$ assigning to each node an atom of $\frakB$, 
  the guard of the corresponding bag of the tree decomposition.
  As $\eta$ maps $\calQ_i$ homomorphically into $\frakB$, 
  for each atom $\alpha$ of $\calQ_i$ we can pick a node $v_\alpha \in V$ 
  with $\gamma(v_\alpha)$ guarding the image of $\alpha$ under $\eta$.  
  Let $W$ be the closure of the set of all these $v_\alpha$ under 
  greatest lower bounds w.r.t.\ $\preceq$. 
  Then $(W,\preceq,\gamma|_W)$ represents a guarded tree decomposition 
  of the structure $\calT$ consisting of those atoms in the image of $\eta$ 
  together with atoms of the form $\gamma(w)$ for $w \in W$.
  Note that at least half of the nodes in $W$ are of the form $v_\alpha$, 
  therefore $\calT$ has at most three times as many atoms as $\calQ_i$.
  And we have $\eta: \calQ_i \to \calT \subseteq \frakB$.
  To $\calT$ corresponds an acyclic \BCQ $T$, whose models are 
  precisely those structures containing a homomorphic image of $\calT$.
  Then $T \models Q_i$ and hence $T$ is one of the disjuncts in $\chi^\tau_q$. 
  Therefore $\frakA \models \chi^\tau_q$. 
  
{(ii)}
  Since $\chi^\tau_q \models q$, trivially $\varphi \models \chi^\tau_q$ implies $\varphi \models q$.
  To prove the converse implication assume indirectly 
  that $\varphi \models q$ but $\varphi \land \lnot\,\chi^\tau_q$ 
  were satisfiable. 
  Note that the latter is equivalent to a guarded formula.
  Then, by the tree model property of $\GF$ \cite{Gr99JSL}, there is 
  a guarded tree decomposable model $\frakT \models \varphi \land \lnot \chi^\tau_q$.
  By our assumption $\frakT \models q$, 
  i.e.~$\frakT \models Q_i$ for some \BCQ $Q_i$ in $q$,
  which means that there is a homomorphism $\eta: \calQ_i \to \frakT$. 
  Given that $\frakT$ is guarded tree decomposable, so is $\eta$. 
  By (i) therefore $\frakT \models \chi^\tau_q$, 
  contradicting our assumption. 
  
{(iii)}
  Recall that, for a \BCQ\ $Q$, the formula $\chi^\tau_Q$ is a disjunction 
  of several (acyclic) \BCQ{} $T$, each of which has at most $3|Q|$ many 
  atoms and therefore requires no more than $3|Q|w$ many variables; 
  the overall number of constituent \BCQ\ of these dimensions is bounded 
  by $(r (3|Q|w)^w)^{3|Q|}$, and each such $T$ has length $\calO(|Q|w)$. 
  All in all, 
  $|\chi^\tau_Q| \leq (r (3|Q|w)^w)^{3|Q|} \calO(|Q|w) = r^{\calO(|Q|)} (|Q|w)^{\calO(|Q|w)}$.
   
  For a \UCQ\ $q=\bigvee_i Q_i$ we have, by definition, 
  $\chi^\tau_q=\bigvee_i \chi^\tau_{Q_i}$, and, if the height of $q$ is $h$, 
  then $|\chi^\tau_q|= r^{\calO(h)} (hw)^{\calO(hw)}$ by the previous estimate. 
  One way to compute $\chi^\tau_q$ is to exhaustively enumerate all acyclic \CQ\ 
  of the right dimensions and to check each one for entailment of some $Q_i$ 
  (verifiable in time $(hw)^{\calO(hw)}$ for each $i$). 
  Such a procedure can be carried out in time $|q| r^{\calO(h)} (hw)^{\calO(hw)}$. 
\end{proof}

Concerning the size of treeifications, note that for a fixed signature 
the figure from~(iii) simplifies to $|\chi^\tau_q| = h^{\calO(h)}$ and 
that a $2^{\Omega(h)}$ lower bound can be established even if we require 
treeifications to be free of redundant disjuncts. 
Indeed, in the signature $\tau = \{E,T\}$, where $E$ is binary and $T$ 
is ternary, it is easy to see that the \BCQ\ $Q_n$ for which $\calQ_n$ is 
a simple $E$-cycle of length $n$, the number of triangulations of $\calQ_n$ 
and hence the number of disjuncts in $\chi^\tau_{Q_n}$ is $2^{\Omega(n)}$. \\

The next key fact is a direct consequence of 
Lemma~\ref{lemma_treeification}~(i) and Definition~\ref{def_cover} 
that highlights the role of query treeification and motivates our 
interest in weakly $N$-acyclic covers.

\begin{fact} \label{fact_covers_and_treeification}
For every weakly $N$-acylic cover $\pi \colon \frakB \covers \frakA$ 
of $\tau$-structures, for every $\varphi \in \GFO[\tau]$ 
and every $q \in \UCQ[\tau]$ of height at most $N$: 
\begin{equation} \label{eq_covers_and_treeification}
  \frakB \models q  \ \Longrightarrow \  \frakA \models \chi^\tau_q
  \qquad \text{ and hence }\qquad 
  \frakA \models \varphi \land \lnot\,\chi^\tau_q  
  \ \Longrightarrow \  
  \frakB \models \varphi \land \lnot\,q
  \ .
\end{equation}
\end{fact}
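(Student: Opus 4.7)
The plan is to derive both implications in \eqref{eq_covers_and_treeification} by chaining Definition~\ref{def_cover} (weak $N$-acyclicity) with the tree-decomposability criterion of Lemma~\ref{lemma_treeification}~(i), and then noting that $\GF$ is preserved under guarded bisimulation.

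First I would establish the left implication $\frakB \models q \Longrightarrow \frakA \models \chi^\tau_q$. Unpacking $\frakB \models q$: since $q=\bigvee_i Q_i$ is a \UCQ, some disjunct $Q_i$ is satisfied in $\frakB$, which by the Chandra--Merlin characterisation is witnessed by a homomorphism $h \colon \calQ_i \to \frakB$. The crucial size bookkeeping is that the height bound on $q$ gives $|\calQ_i| \leq |Q_i| \leq N$, so the weak $N$-acyclicity hypothesis applies to $h$: it yields a guarded tree decomposable $\tau$-structure $\frakT$ and homomorphisms $f \colon \calQ_i \to \frakT$, $g \colon \frakT \to \frakA$ with $\pi\circ h = g\circ f$. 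In particular $g\circ f \colon \calQ_i \to \frakA$ factors through $f \colon \calQ_i \to \frakT$, so by definition the composite homomorphism $\calQ_i \to \frakA$ is guarded tree decomposable. Lemma~\ref{lemma_treeification}~(i) then delivers $\frakA \models \chi^\tau_{Q_i}$, and hence $\frakA \models \chi^\tau_q$.

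Next I would derive the right implication by combining the contrapositive of the first with the guarded bisimulation invariance of $\GF$. Assume $\frakA \models \varphi \land \lnot\,\chi^\tau_q$. Since $\pi \colon \frakB \covers \frakA$ is a guarded bisimilar cover, $\frakB \gbisim \frakA$, and the known $\gbisim$-invariance of $\GF$ (recalled in Section~\ref{sec_prelim}) gives $\frakB \models \varphi$. For the negated query, if we had $\frakB \models q$, the first implication would force $\frakA \models \chi^\tau_q$, contradicting the assumption; thus $\frakB \models \lnot\,q$, completing the argument.

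The only genuine subtlety I anticipate is the size matching: one must verify that the parameter $N$ in the weak $N$-acyclicity condition (which bounds $|\frakQ|$ in Definition~\ref{def_cover}) is correctly related to the height of $q$ (which bounds the sizes of the associated query structures $\calQ_i$). Given the conventions fixed just before the fact, this is immediate, so no genuine obstacle remains and the fact is indeed a direct consequence of Lemma~\ref{lemma_treeification}~(i) together with Definition~\ref{def_cover}.
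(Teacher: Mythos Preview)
Your proposal is correct and matches the paper's own justification, which simply states that the fact is a direct consequence of Lemma~\ref{lemma_treeification}~(i) and Definition~\ref{def_cover}. You have unpacked precisely that reasoning: weak $N$-acyclicity supplies the guarded tree decomposable homomorphism needed for Lemma~\ref{lemma_treeification}~(i), and the second implication follows from the first by contraposition together with $\gbisim$-invariance of $\GF$.
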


Using this fact and the finite model property of the guarded fragment, 
Theorem~\ref{thrm_FinControl} will follow, once it is established that 
every finite relational structure admits suitably sized finite 
weakly $N$-acyclic covers for all $N$. That is precisely the content of 
Theorem~\ref{thrm_main}.
Before engaging in the proof of this main technical result let us point out 
a noteworthy consequence of item~(ii) of Lemma~\ref{lemma_treeification}.

%\medskip

\subsubsection*{An interpolation property}
%........................................................................................

Consider the fragments $\GF$ and $\UCQ$ (equivalently, the positive existential fragment) 
of first-order logic in a relational signature. They are incomparable with respect to 
expressive power and, as mentioned above, the intersection of the two fragments comprises 
(up to semantic equivalence) precisely the unions of guarded conjunctive queries, 
or unions of acyclic conjunctive queries ($\ACQ$). 
In one reading, Lemma~\ref{lemma_treeification}~(ii) states that the fragments $\GF$ and $\UCQ$ 
have a strong form of interpolation with $\ACQ$ interpolants.\footnote{We thank Damian Niwinski 
for this observation.} Indeed, consider some $\varphi \in \GF$ and $q \in \UCQ$ in signature $\tau$. 
Then 
\begin{equation}\label{eq_interpol}
  \varphi \models q   \qquad \Longrightarrow \qquad   \varphi \models \chi^\tau_q \ \text{ and }\  \chi^\tau_q \models q
\end{equation}
and it is interesting to note that the treeification $\chi^\tau_q$ of the query $q$ 
is a uniform interpolant for all $\varphi \in \GFO[\tau]$ that entail $q$.
As a consequence of our Theorem~\ref{thrm_FinControl} we will see that the interpolation 
property~\eqref{eq_interpol} remains intact when the
semantics is restricted to finite models.

%/////////////////////////////////////////////////////////////////////////////////////////

\section{The Rosati cover} \label{sec_Rosati}

%\\\\\\\\\\\\\\\\\\\\\\\\\\\\\\\\\\\\\\\\\\\\\\\\\\\\\\\\\\\\\\\\\\\\\\\\\\\\\\\\\\\\\\\\\

Rosati proved Proposition~\ref{prop_Rosati} using a ``finite chase''  
procedure~\cite{Ros06,Ros11} that safely reuses variables and results 
in very compact finite models. 
However, his proof of correctness of the finite chase with respect to 
conjunctive query answering is very intricate. 
We adapt the core idea of his model construction to give a more general 
guarded bisimilar cover construction for finite models, and a 
conceptually cleaner and simpler proof of faithfulness with respect to \  
conjunctive queries of bounded size.

\begin{thm} \label{thrm_main_from_gbis_invariant}
Given a bisimulation invariant $\frakI=\Inv(\frakA)$ of an unspecified hypergraph $\frakA$, 
for all $N \geq 2$ one can construct hypergraphs $\frakR_N$ such that $I(\frakR_N) = \frakI$, 
each $\frakR_N$ is $N$-conformal and each $\frakR_{N^2}$ is a weakly $N$-acyclic cover of \ $\frakR_N$. 
Moreover, $|\frakR_{N}| = |\frakI|^{w^{\calO(N)}}$, where $w$ is the 
width~\footnote{Note that this width is determined by $\frakI$.} 
of $\frakA$ and, for fixed $w$ and $N$, $\frakR_N$ 
can be computed in polynomial time from $\frakI$.
The analogous claim for the guarded bisimulation invariant
$\frakI =  \Inv(\frakA)$ of a finite relational structure $\frakA$, 
and concerning guarded bisimilar covers, follows.
\end{thm}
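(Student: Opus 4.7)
The plan is to generalize Rosati's finite chase by constructing $\frakR_N$ as a labeled, controlled unfolding of $\frakI$. Vertices of $\frakR_N$ will be pairs $(v, c)$ where $v$ is a vertex of $\frakI$ (a $\gbisim$-class of maximal guarded tuples of $\frakA$) and $c$ is a ``color'' drawn from a set of size $|\frakI|^{w^{\calO(N)}}$. Hyperedges are created by iteratively firing the edges of $\frakI$: given an existing hyperedge $s$ of $\frakR_N$ labeled $v$ and an edge $(v, \rho, v')$ in $\frakI$, we spawn a new hyperedge $s'$ labeled $v'$, keeping the elements of $s$ dictated by the overlap $\rho$ and introducing fresh elements (with fresh colors) for the remaining positions. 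To keep the construction finite, two freshly spawned elements with matching $\frakI$-components and matching colors are identified. The bisimulation $I(\frakR_N) = \frakI$ is then immediate from the projection onto the first coordinate, and the construction terminates in polynomial time for fixed $w$ and $N$, being a saturation procedure over a bounded state space.

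The colors will encode a sliding ``spawning window'' of bounded depth around each element: concretely, the list of hyperedges along a path of length at most $N$ leading back to the seed of the element's creation, together with their overlap patterns. Since each window is a sequence of $\leq N$ hyperedges of width $\leq w$ chosen from at most $|\frakI|$ possibilities, the bound $|\frakR_N| = |\frakI|^{w^{\calO(N)}}$ follows. $N$-conformality is then established by showing that any clique of size $\leq N$ in $\Gamma(\frakR_N)$ forces its elements to carry pairwise compatible color windows, which by the choice of window depth forces them to sit inside a common hyperedge. For the weakly $N$-acyclic cover property of $\frakR_{N^2}$ over $\frakR_N$, I would argue that any homomorphism $h \colon \frakQ \to \frakR_{N^2}$ with $|\frakQ| \leq N$ factors through the spawning tree of its image: construct $\frakT$ by taking the $\leq N$ hyperedges of $\frakR_{N^2}$ visited by $h$ and reattaching them along their spawning parents to obtain a guarded tree decomposition, then let the projection $\frakT \to \frakR_N$ be the one induced by the spawning structure.

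The main obstacle will be the precise combinatorial argument that colors encoding $N^2$-deep windows certify the tree decomposition needed for weak $N$-acyclicity, while $N$-deep windows already suffice for $N$-conformality. The two requirements pull in opposite directions: conformality asks that $N$-cliques collapse into single hyperedges, while weak $N$-acyclicity asks that size-$N$ images look tree-like. The resolution I have in mind is that in $\frakR_{N^2}$ any putative cycle of length $\leq N$ among hyperedges in the image of $h$ must, by virtue of the $N^2$-deep window, either arise from matching color prefixes that can be backtracked into a genuine spawning tree, or correspond to a single maximal hyperedge in $\frakR_N$ after projection; in either case the apparent cycle is absorbed into a guarded bag of $\frakT$, and the composition $\pi \circ h$ factors as required by Definition~\ref{def_cover}. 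Carrying the quadratic slack from $N$ to $N^2$ is what provides enough room for this unfolding-and-reattaching argument to close.
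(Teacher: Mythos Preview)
Your high-level plan---a controlled unfolding of $\frakI$ with elements tagged by a bounded-depth ``spawning history,'' and identification of elements with matching tags to keep the structure finite---is exactly the right shape, and is essentially what the paper does. The paper encodes this via a term algebra: elements of $\frakR_N$ are terms of height $\leq N$ built from constants $c^j_{e,i}$ (for vertices $e$ of $\frakI$) and function symbols $f^j_{\rho,i}$ (for edges $\rho$), with the superscripts $j$ playing the role of your fresh colors and truncation at depth $N$ implementing your sliding window. So the construction and the size bound are fine.

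The gap is in your weak $N$-acyclicity argument. You propose to take the $\leq N$ hyperedges of $\frakR_{N^2}$ hit by $h$ and ``reattach them along their spawning parents to obtain a guarded tree decomposition.'' But the spawning relation in a finite chase is \emph{not} a tree: the whole point of identifying elements with matching tags is that distinct spawning histories collapse, so the spawning graph on any small subset of $\frakR_{N^2}$ can genuinely contain cycles. Your dichotomy in the last paragraph (``either backtrack into a spawning tree, or collapse into a single hyperedge'') is asserted rather than proved, and it is not clear how the $N^2$ depth alone forces one of these alternatives.

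The paper's resolution is different and worth knowing. It does not try to build $\frakT$ directly from the spawning structure. Instead it exploits a \emph{chain of covers}
\[
\frakR^m_{N^2} \longcovers \frakR^m_{N^2-1} \longcovers \cdots \longcovers \frakR^m_{N}
\]
induced by term truncation (Lemma~\ref{lemma_chain}), and proves a one-step chord lemma (Lemma~\ref{lemma_chords}): any cycle of length $\ell \leq k$ in $\frakR^m_k$ acquires, after projecting to $\frakR^m_{k-1}$, a guarded triangle cutting off one vertex. Iterating this at most $\binom{N}{2}$ times (hence the quadratic gap between $N$ and $N^2$) renders the image of $\frakQ$ chordal in $\frakR^m_N$; combined with the $N$-conformality of $\frakR^m_N$ this yields an acyclic sub-hypergraph $\frakT$ containing the projected image. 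The chord lemma itself is proved via a sibling/predecessor analysis of guarded pairs (Lemma~\ref{lemma_props}): since $\pred$ has no short directed cycles, every short undirected cycle must have a local $\pred$-maximum, and the two neighbours of that maximum become guarded together one level down. This incremental, level-by-level chord-finding is the missing ingredient in your sketch; the direct ``spawning tree'' argument does not go through as stated.
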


It is not hard to see how this formulation entails the statement 
of Theorem~\ref{thrm_main} as given in the introduction. 
Observe that $\Inv(\frakA)=\Gam(\frakA)$, where $\Gam(\frakA)$ is the  
bisimulation game graph of the given $\frakA$ (without passage to
a non-trivial quotient), can be enforced by introducing 
new predicates to distinguish each individual guarded tuple of $\frakA$. 
Then $\frakA^{(N)}=\frakR_{N^2}$ is a weakly $N$-acyclic (guarded) 
bisimilar cover of $\frakA$ itself and has size $|\frakA|^{w^{\calO(N^2)}}$.
Since $N$-conformality implies conformality at large 
if $N>w=\mathrm{width}(\frakA)$, moreover  
$\frakR_{w+1}$~is a conformal cover of $\frakA$ and we also obtain 
Corollary~\ref{coroll_conformal covers}.

We first define, for every $N$, the $N$-th Rosati cover of a given 
finite hypergraph (or relational structure). After preliminary observations 
much resembling some of Rosati's key lemmas~\cite{Ros06,Ros11} we prove 
their two crucial properties: $N$-conformality of $\frakR_N$ 
and the weak $N$-acyclicity of $\frakR_{N^2}$ as a cover over $\frakR_N$. 
In fact, the structures $\frakR_N$ will form a chain of covers of increasing 
degrees of conformality and weak acyclicity, similar to the construction of~\cite{Otto09rep}.

%...........................................................................................
\subsection{The definition of $\boldmath{\frakR_N}$} \label{sec_RosatiDef}
%...........................................................................................

Let $w$ be the width of $\frakA$ (apparent from $\frakI=\Inv(\frakA)$), 
i.e.~the maximal size of any of its hyperedges (guarded sets). 
We assume throughout that $w > 1$, since width~$1$ is trivial.
For the rest of this section we also fix $m \geq N^2$, $N \geq 2$.

Consider a relational structure $\frakA$ and its guarded-bisimulation invariant $\frakI=\Inv(\frakA)$.
Recall that the vertices of $\Inv(\frakA)$ represent complete $\GF$-types of 
maximal guarded tuples $\tup{a}$ such that $a_i \neq a_j$ for all $i\neq j$ and 
are labeled by the isomorphism type of the sub-structure induced by any (and all) 
corresponding tuple. We denote vertices of $\frakI$ by symbols $d,e,\ldots$ and 
for each $e=[\tup{a}]_{\gbisim}$ we let $[e]=\{1,\ldots,|\tup{a}|\}$. 
% Edges of $\frakI$ linking some $d=[\tup{a}]_\gbisim$ to $e=[\tup{b}]_\gbisim$ are 
% in one-to-one correspondence with non-empty $\gbisim$-preserving partial isomorphisms 
% between components of $\tup{a}$ and of $\tup{b}$ and are labelled accordingly. 
% More precisely, an edge is a 
Edges of $\frakI$ are triples $\rho=(d,[\rho],e)$, % also denoted $\rho: d \rhoedge e$, 
where $d=[\tup{a}]_{\gbisim}$ and $e=[\tup{b}]_{\gbisim}$ % for some $\tup{a}$ and $\tup{b}$ 
and the label $[\rho]$ is a non-empty partial injection $[d]\to[e]$ such that 
$a_i = b_j$ for all $(i,j)\in[\rho]$. (In this case $\rho$ induces a $\gbisim$-preserving 
partial isomorphism $\tup{a}'\restriction{\dom[\rho]} \to \tup{b}'\restriction{\img[\rho]}$ 
for any $\tup{a}' \gbisim \tup{a}$ and any $\tup{b}' \gbisim \tup{b}$.)
Symbols $\rho,\sigma,$ etc.~will refer to edges of $\frakI$, their respective 
labels will be denoted $[\rho],[\sigma]$, etc. 

We adapt the same notation in the case of a hypergraph and its hypergraph bisimulation 
invariant. Recall that the latter is the bisimulation quotient of the hypergraph 
bisimulation game graph as described earlier. % In particular, its vertices represent 
% equivalence classes of maximal hyperedges under hypergraph bisimulation that are each labeled by 
% the isomorphism type of the sub-hypergraph of $\frakA$ restricted to any (and all) maximal hyperedge(s) 
% of the class.
% 
In the following we often blur the distinction in phrasing and notation between the cases 
of hypergraphs and of relational structures, opting to treat these perfectly analogous cases as one. 

We associate to the invariant $\frakI$ a set of constant and function symbols as follows. 
%\begin{itemize}
%\setlength{\itemsep}{0.5 \itemsep}
%\setlength{\parindent}{0pt}
%\setdefaultleftmargin{1em}{1em}{}{}{}{} 
\begin{list}{\labelitemi}{\leftmargin=1.7em}
\item To every vertex $e$, every $i \in [e]$ and $0 \leq j < w^{m+2}$ 
      we associate a constant symbol $c^j_{e,i}$.
\item To every edge $\rho=(d,[\rho],e)$ 
      and every $i \in [e] \setminus \img[\rho]$ and $0 \leq j < w^{m+2}$ 
      we associate a function symbol $f^j_{\rho,i}$ of arity $|\dom[\rho]|$.
\end{list}
We work with well-formed terms in the above signature. 
As shorthand we write $\mathbf{c}^j_e$ for $(c^j_{e,i})_{1 \leq i\leq k}$, 
and $\mathbf{f}^j_\rho(\tup{t})$ for $(f^j_{\rho,i}(\tup{t}))_{i\not\in\img[\rho]}$ 
and for every tuple $\tup{t} = (t_1,\ldots,t_l)$ we let $\{\tup{t}\}$ stand 
for $\{t_1,\ldots,t_l\}$. For each term $t$ let $J(t)$ denote the set of ``$j$-values'' 
occurring in the superscript of a function symbol at any depth within $t$. 
This notion extends naturally to tuples of terms. Thus $J(c^j_{e,i})=\{j\}$ 
and $J(f^j_{\rho,i}(\tup{t})) = \{j\}\cup J(\tup{t})$. 
The \emph{truncation} of a term $t$ at depth $\kappa$, denoted $t/_\kappa$, 
is defined by the following recursive rules and is extended to tuples of terms 
and to sets of terms in the obvious way.
\begin{equation} \label{eq_trunc}
  {c^j_{e,i}}/_{\kappa} =  c^j_{e,i}  
  \qquad 
  \begin{array}{lcll}
  f^j_{\rho,i}(\tup{t})/_{0} & = & c^j_{e,i} & ( \rho=(d,[\rho],e)) \\
  f^j_{\rho,i}(\tup{t})/_{\kappa+1} & = & f^j_{\rho,i}(\tup{t}/_{\kappa}) 
  \end{array}
\end{equation}
The $N$-th Rosati cover $\frakR_N$ is made up of terms of height at most $N$ 
and is built to realise all guarded bisimulation types in $\frakI$. 
To that end we first define the sets $\calK^r_N(e)$ of ``instances of $e$ at height~$r$'' 
for each $e \in \frakI$ and $r \geq 0$ by simultaneous recursion. 
\begin{equation} \label{eq_KrNe}
\begin{array}{rcll}
  \calK^0_N(e)     &=& \{ \, \mathbf{c}^j_e \, \mid \, j < w^{m+2} \} & \\[.5em]
  \calK^{r+1}_N(e) &=& \{ \, \fatrho^j(\tup{s}\restriction{\dom[\rho]}) \ \, \mid  \, 
                              & \tup{s} \in \calK^r_N(d), \, \rho=(d,[\rho],e), \\ 
                   & &        & j < w^{m+2}, \, j \not\in J(\tup{s}\restriction{\dom[\rho]}) \, \} 
\end{array}
\end{equation}
where for each edge $\rho=(d,[\rho],e)$ in $\frakI$ and terms $\tup{s}$ and $j$ 
as appropriate $\fatrho^j(\tup{s}\restriction{\dom[\rho]})$ denotes 
the tuple $(u_1,\ldots,u_{|[e]|})$ such that 
$$
   u_i = \left\{ \begin{array}{ll} 
                 s_l & (\, (l,i) \in [\rho] \,)                 \\[.5em]
                 f^j_{\rho,i}(\tup{s}_{/N-1}) & (\, i \not\in \img[\rho] \,) 
                 \end{array} \right. 
   \text{ for each } i \in [e] \ .
$$ 
Obviously $\fatrho^j(\tup{s}\restriction{\dom[\rho]})$ depends solely 
on $\tup{t}=\tup{s}\restriction{\dom[\rho]}$, wherefore more often than not 
we shall simply write $\fatrho^j(\tup{t})$ so that, in particular, 
$\{\fatrho^j(\tup{t})\} = \{\mathbf{f}^j_\rho(\tup{t}_{/N-1})\}\cup\{\tup{t}\}$. 

\smallskip

The sets $\calH^r_N(e)$ of \emph{hyperedges above $e$ (at height $r$)} 
are obtained from the above by simply forgetting the tuple ordering:  
  $\calH^r_N(e) = \{ \, \{\tup{t}\} \, \mid \, \tup{t} \in \calK^r_N(e) \, \}$;  
further set $\calH_N(e) = \bigcup_r \calH^r_N(e)$ and $\calH_N = \bigcup_{e\in\frakI} \calH_N(e)$.
All terms $t$ appearing in some hyperedge in $\calH_N$ have height at most $N$ and, 
due to the stipulation $j \not\in J(\tup{t})$ in \eqref{eq_KrNe},  
no function symbol at the root of a subterm of $t$ occurs again within that subterm. 
 
Observe that every $h \in \calH_N$ is either of the form 
  $\{\fatrho^j(\tup{t})\} = \{\mathbf{f}^j_\rho(\tup{t}_{/N-1})\}\cup\{\tup{t}\} \in \calH^{r+1}_N(e)$ 
for some $r$ and $e$ the target of $\rho$ or is equal to some $\{\mathbf{c}^j_e\} \in \calH^0_N(e)$ .
Crucially, under the assumption $N \geq 2$ the constraint $j \not\in J(\tup{t})$ of \eqref{eq_KrNe} 
ensures that the former partitioning of $h$ is unique and we say that $h \in \calH^{r+1}_N(e)$ 
is obtained by \emph{$\rho$-extension} of some (not necessarily unique) 
hyperedge $h' \in \calH^r_N(d)$, with $\rho=(d,[\rho],e)$, 
and denote this using the shorthand $h' \rhoext h$.
Note, in particular, that the sets $\calH(e)$ partition $\calH$.
Henceforth we often omit the subscript $N$ writing $\calH$, $\calH(e)$, etc. 

\smallskip

A hyperedge $h$ will be called a \emph{primary guard of $X$} 
if it is a guard of $X$, viz. $X \subseteq h$, 
and is not the $\rho$-extension of some $h'$ also guarding $X$.

\begin{lem} \label{lemma_primguard}
Assume $m \geq N \geq 2$. 
Then for every guarded set $X$ of terms there is an $e_X \in \frakI$ 
such that all primary guards of $X$ belong to $\calH(e_X)$.
\end{lem}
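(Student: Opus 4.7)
The goal is to show that if $h_1 \in \calH(e_1)$ and $h_2 \in \calH(e_2)$ are both primary guards of $X$ then necessarily $e_1 = e_2$. The plan is a case analysis on the heights $r_1, r_2$ of the two hyperedges, relying throughout on the unique decomposition $h_k = \{\mathbf{f}^{j_k}_{\rho_k}(\tup{t}_{k/N-1})\} \cup \{\tup{t}_k\}$ of every hyperedge of positive height that is guaranteed by $N \geq 2$ together with the freshness constraint $j_k \notin J(\tup{t}_k)$, and on the fact that a primary guard $h_k$ must contain at least one element of $X$ drawn from its new part $\{\mathbf{f}^{j_k}_{\rho_k}(\tup{t}_{k/N-1})\}$ (for otherwise $X$ sits inside a direct predecessor of $h_k$).

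First I would dispose of the degenerate configurations. If $r_1 = r_2 = 0$, each $h_k = \{\mathbf{c}^{j_k}_{e_k}\}$ consists only of constants, and the identity of any common constant in $X$ fixes the triple $(j,e,i)$, yielding $e_1 = e_2$. If $r_1 = 0$ while $r_2 \geq 1$, then $X$ is a set of constants; the new part of $h_2$ being composed of function terms, necessarily $X \subseteq \{\tup{t}_2\} \subseteq \{\tup{s}_2\}$, placing $X$ inside a predecessor of $h_2$ and contradicting its primarity. For the main case $r_1, r_2 \geq 1$, if any element of $X$ is new in both $h_1$ and $h_2$ then matching outermost symbols forces $(\rho_1,j_1)=(\rho_2,j_2)$ and hence $e_1 = e_2$ (the common target of $\rho_k$). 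Otherwise, setting $X_{\mathrm{new}_k} := X \cap \{\mathbf{f}^{j_k}_{\rho_k}(\tup{t}_{k/N-1})\}$, the mismatch of outermost symbols between the two new parts forces $X_{\mathrm{new}_1} \subseteq \{\tup{t}_2\}$ and symmetrically $X_{\mathrm{new}_2} \subseteq \{\tup{t}_1\}$.

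The plan for this remaining subcase is a height comparison. Pick $x_k \in X_{\mathrm{new}_k}$; then $\mathrm{ht}(x_k) = \min(\mathrm{ht}(\tup{t}_k), N-1) + 1$, and since $x_k$ is a component of $\tup{t}_{3-k}$ we also have $\mathrm{ht}(x_k) \leq \mathrm{ht}(\tup{t}_{3-k})$. Chaining these two pairs of inequalities cycles the heights of $\tup{t}_1, \tup{t}_2$ and delivers a strict contradiction in every configuration except the fully saturated one, where $\mathrm{ht}(\tup{t}_1) = \mathrm{ht}(\tup{t}_2) = N$. The hard part will be handling precisely this saturated case; the plan there is to trace the construction chain of $\tup{s}_2$ backwards: since $x_1 \in \{\tup{s}_2\}$ carries outermost $f^{j_1}_{\rho_1, i_1}$, the freshness of $j$-values at each construction step uniquely locates the level $k$ at which $x_1$ was first introduced as a new element of a $\rho_1$-extension, and the target at that level is $e_1$. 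A symmetric analysis in the chain of $\tup{s}_1$ locates a level with target $e_2$. Combining these witness levels with the freshness-derived facts $j_1 \in J(\tup{t}_2) \setminus J(\tup{t}_1)$ and $j_2 \in J(\tup{t}_1) \setminus J(\tup{t}_2)$, the plan is to reconcile the two chains via the uniqueness of partition and thereby force $e_1 = e_2$.
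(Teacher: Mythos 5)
Your preliminary reductions are all correct: eliminating the constant and mixed-height configurations, noting that a positive-height primary guard of $X$ must contribute some new element $x_k = f^{j_k}_{\rho_k,i_k}(\tup{t}_k/_{N-1})$ to $X$, and observing that if some element of $X$ is new in both guards then their extension data coincide and hence $e_1 = e_2$. You then restrict to $X_{\mathrm{new}_1}\subseteq\{\tup{t}_2\}$, $X_{\mathrm{new}_2}\subseteq\{\tup{t}_1\}$ and run a height comparison. That comparison is sound and does reduce to the saturated configuration $\mathrm{ht}(\tup{t}_1)=\mathrm{ht}(\tup{t}_2)=N$, but this is exactly the case you leave as a ``plan'' rather than a proof, and the plan — tracing both construction chains backwards, locating introduction levels, and ``reconciling'' them via uniqueness of the partition — is too vague to assess and does not obviously converge to a contradiction. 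In short: you have correctly isolated the hard case but not resolved it.

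The gap closes, with no height analysis at all, by one freshness observation the construction gives you for free: the $j$-value at the root of any term of $\frakR^m_N$ does not occur again strictly below the root (this is exactly what the stipulation $j \notin J(\tup{s}\restriction\dom[\rho])$ in the definition of $\calK^{r+1}_N(e)$ propagates). In the remaining case you have $x_1\in\{\tup{t}_2\}$ and $x_2\in\{\tup{t}_1\}$. Then $x_2/_{N-1}=f^{j_2}_{\rho_2,i_2}(\tup{t}_2/_{N-2})$ is a component of $\tup{t}_1/_{N-1}$, hence a subterm of $x_1$ at depth $1$; and $\tup{t}_2/_{N-2}$ in turn contains $x_1/_{N-2}$, which for $N\geq 2$ still carries $j_1$ at its root (as a function symbol if $N\geq 3$, as the constant $c^{j_1}_{e_1,i_1}$ if $N=2$). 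Thus $j_1$ occurs in $x_1$ at depth $0$ and again at depth $2$, contradicting the freshness invariant. This single argument covers all positive-height configurations uniformly and dispatches the saturated case directly; the entire height-cycling step, while correct, is a detour you do not need.
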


\begin{proof}
Consider a hyperedge $h$ that is a primary guard of $X$. 
If $h = \{\mathbf{c}^j_e\}$ for appropriate $e$ and $j$, 
then $h$ is the only primary guard of $X$, and we can set $e_X = e$. 
Otherwise we have $h = \{\fatrho^j(\tup{t})\} = 
  \{\mathbf{f}^j_\rho(\tup{t}_{/N-1})\}\cup\{\tup{t}\} \in \calH^{r+1}(e)$
for an appropriate edge $\rho: d \to e$ in $\frakI$, some superscript $j$, 
and terms $\tup{t}$.
Because $h$ is by choice a primary guard of $X$, it cannot be 
that $X \subseteq \{\tup{t}\}$. 
So there is some $f^j_{\rho,i}(\tup{t}/_{N-1})$ in $X$,
and we set $e_X=e$ to be the target of $\rho$.

Suppose indirectly that our choice of $e_X$ was not unique, i.e.~that 
there is some $e' \neq e$ and a primary guard of $X$ of the form
$h' = \{\mathbf{f}^{j'}_{\sigma}(\tup{s}_{/N-1})\}\cup\{\tup{s}\} \in \calH^{r+1}(e')$.  
Then, by the previous argument, some $f^{j'}_{\sigma,i'}(\tup{s}_{/N-1})$ 
would have to be in $X$.
This, however, would imply that $f^j_{\rho,i}(\tup{t}/_{N-2})$ 
had to be among $\tup{s}/_{N-1}$ and vice versa 
$f^{j'}_{\sigma,i'}(\tup{s}_{/N-2})$ among $\tup{t}/_{N-1}$.
Given that $N \geq 2$ this would contradict the requirement 
that $j$ and $j'$ each have but one occurrence in these terms.
\end{proof}

% It follows that %for $N\geq 2$ 
% we can lift all hyperedges of $\frakA$ to $\frakR^m_N$, not just the maximal ones.
Let $\overline{\calH}_N$ be comprised of the hyperedges in $\calH_N$ together with
sub-hyperedges $h' \subseteq h$  for each $h \in \calH(e)$ precisely as specified 
by the type $\tau_e$ labeling $e \in \frakI$. It follows from the above that whether 
some such $h'$ is included in $\overline{\calH}_N$ does not depend on the choice of $h$.
Indeed, by Lemma~\ref{lemma_primguard}, we may assume that $h$ is a primary guard of $h'$
since for every $\rho=(d,[\rho],e)$ the types ${\tau_d}|_{\dom[\rho]}$ and ${\tau_e}|_{\img[\rho]}$
are identical. \\

\begin{defi}[Rosati cover]   ~\\
We define \,$\frakR^m_N$\, as having universe \,$\bigcup \calH_N$\, 
and hyperedges \,$\overline{\calH}_N$. 
% \newline
% For the purposes of Theorem~\ref{thrm_main_from_gbis_invariant} 
% we shall take $\frakR_N = \frakR^m_N$ and $\frakR_{N^2}=\frakR^m_{N^2}$ with $m=N^2$.
\end{defi}

For the purposes of Theorem~\ref{thrm_main_from_gbis_invariant} 
we shall take $\frakR_N = \frakR^m_N$ and
$\frakR_{N^2}=\frakR^m_{N^2}$ with $m=N^2$.

Using similar reasoning as in Lemma~\ref{lemma_primguard} one can verify 
that $\frakI$ is indeed the guarded bisimulation invariant of $\frakR^m_N$,
i.e., that $\frakR^m_N \gbisim \frakA$ for any $\frakA$ 
with $\Inv(\frakA) = \frakI$.

\begin{lem} \label{lemma_gbisim}
For all $m \geq N \geq 2$ it holds that $\Inv(\frakR^m_N) = \frakI$. 
In particular, for each $e \in \frakI$ all hyperedges in $\calH_N(e)$ 
realise the guarded bisimulation type represented by $e \in \frakI$.
\end{lem}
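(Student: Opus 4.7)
The plan is to establish $\Inv(\frakR^m_N) = \frakI$ by exhibiting a guarded bisimulation between $\frakR^m_N$ and any structure $\frakA$ with $\Inv(\frakA) = \frakI$; once in place, $\frakR^m_N \gbisim \frakA$ delivers both claims simultaneously, since it forces $\Inv(\frakR^m_N) = \Inv(\frakA) = \frakI$ and tracks every $h \in \calH_N(e)$ to the $\gbisim$-class labelled by $e$. I take the candidate relation $Z$ to pair each tuple $\tup{t} \in \calK^r_N(e)$, ordered as prescribed so that $\{\tup{t}\} \in \calH(e)$ is a maximal guarded tuple of $\frakR^m_N$, with any tuple $\tup{a}$ of $\frakA$ realising the class $e$. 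Agreement of atomic types on such pairs is built into the construction: the sub-hyperedges of $\{\tup{t}\}$ included in $\overline{\calH}_N$, and the atomic facts they support, are installed exactly according to the isomorphism type $\tau_e$ labelling $e$, and Lemma~\ref{lemma_primguard} ensures this assignment is well-defined independently of which parent guard we picked.

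For the forth direction, suppose $(\tup{t},\tup{a}) \in Z$ with $\tup{t} \in \calK^r_N(e)$ and some $\tup{a}'$ realising $e'$ in $\frakA$ overlaps $\tup{a}$ via the label $[\rho]$ of an edge $\rho = (e,[\rho],e')$ of $\frakI$. The required witness on the $\frakR^m_N$-side is $\tup{t}' = \fatrho^j(\tup{t}|_{\dom[\rho]}) \in \calK^{r+1}_N(e')$, which, by \eqref{eq_KrNe}, overlaps $\tup{t}$ precisely via $[\rho]$. A valid superscript $j < w^{m+2}$ with $j \notin J(\tup{t}|_{\dom[\rho]})$ always exists because a term of height at most $N$ is built from at most $w^N$ function applications and hence carries at most $w^N$ distinct $j$-values, so $|J(\tup{t}|_{\dom[\rho]})| \leq w^{N+1} < w^{m+2}$ whenever $m \geq N$.

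The back direction is the principal obstacle, and this is where Lemma~\ref{lemma_primguard} does the essential work. Let $\tup{t}' \in \calK(e')$ be such that the maximal hyperedge $\{\tup{t}'\}$ overlaps $\{\tup{t}\}$ via some partial bijection $[\sigma]$; we must verify that $(e,[\sigma],e')$ is an edge of $\frakI$, because a matching $\tup{a}'$ in $\frakA$ is then immediately available from $\frakI = \Inv(\frakA)$. Writing $X = \{\tup{t}\} \cap \{\tup{t}'\}$, Lemma~\ref{lemma_primguard} supplies the unique vertex $e_X$ of $\frakI$ hosting every primary guard of $X$. By iteratively peeling off from $\{\tup{t}\}$ those $\rho$-extensions whose freshly introduced function symbols lie outside $X$---each such peel strictly shrinking the guard of $X$---one descends to a primary guard $h^\ast \in \calH(e_X)$ contained in $\{\tup{t}\}$, and the composition of the labels of the edges of $\frakI$ traversed along the way furnishes a partial bijection expressing how the $X$-positions of $h^\ast$ sit inside $\{\tup{t}\}$. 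Running the symmetric analysis on $\{\tup{t}'\}$ yields an analogous partial bijection landing at $e'$. Composing the two across $e_X$---with trivial base case when the two descending chains terminate at a common primary guard, and a recursive appeal to the same analysis applied to the overlap inside $\calH(e_X)$ otherwise---reconstructs $[\sigma]$ as the label of a composed edge of $\frakI$ from $e$ to $e'$ realisable in $\frakA$, closing the argument.
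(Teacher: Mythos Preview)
Your approach is essentially the paper's: both arguments establish the back-and-forth by tracing any two overlapping maximal hyperedges $h_0 \in \calH(e_0)$ and $g_0 \in \calH(d_0)$ back through their extension chains to primary guards $h_r,g_s \in \calH(e_X)$ (via Lemma~\ref{lemma_primguard}), and then compose the edge labels along those two chains to exhibit an edge $(e_0,[\pi],d_0)$ in $\frakI$ matching the overlap. Your setup is more explicit than the paper's---you spell out the bisimulation $Z$, verify that atomic types agree using the well-definedness established in Lemma~\ref{lemma_primguard}, and give the counting argument for the availability of a fresh $j$ in the forth direction---all of which is correct and a little more careful than the paper.

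The one substantive difference, and the one gap in your argument, is your handling of the case where the two descending chains terminate at \emph{distinct} primary guards $h_r \neq g_s$ in $\calH(e_X)$. You propose ``a recursive appeal to the same analysis applied to the overlap inside $\calH(e_X)$'', but this recursion is not well-founded: since $h_r$ is already a primary guard of $X$, it is a fortiori a primary guard of $h_r \cap g_s \supseteq X$ (it is not a $\rho$-extension of any guard of $X$, hence of none of the larger set either), and similarly for $g_s$; so recursing on $h_r \cap g_s$ returns the same $h_r,g_s$ and makes no progress. The paper bypasses this by composing the two chains directly across the common vertex $e_X$, obtaining $[\pi] \subseteq [\sigma_1]\circ\cdots\circ[\sigma_s]\circ[\rho_r]^{-1}\circ\cdots\circ[\rho_1]^{-1}$ as an edge of $\frakI$ by the closure of the invariant under inverses and composition. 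What makes this composition track the positions of $X$ correctly (a point the paper leaves implicit) is that any two primary guards of $X$ in $\calH(e_X)$ must share the same set of newly introduced terms: if $x$ is introduced in $h_r$ and $x'$ in $g_s$ with both $x,x' \in X$, then having $x$ inherited in $g_s$ and $x'$ inherited in $h_r$ would force $x \pred x' \pred x$, a $\pred$-cycle of length $2$ forbidden by Lemma~\ref{lemma_props}\eqref{props_pred_Nacylic}. Hence both primary guards arise from the same $\rho^j$-extension, which pins down the positions of the $X$-elements within $[e_X]$ uniformly. Replacing your recursion with this observation completes your argument along the paper's lines.
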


\begin{proof}
Consider $h_0 \in \calH(e_0)$ and $g_0 \in \calH(d_0)$ such that $X = h_0 \cap g_0 \neq \emptyset$.
As in Lemma~\ref{lemma_primguard} we can find primary guards $h_r, g_s \in \calH(e_X)$ 
of $X$ by tracing backward from $h_0$ and from $g_0$, respectively, 
through extension sequences 
\[\begin{array}{l}
 h_0 \, {\buildrel \rho_1 \over \longleftarrow} \, h_1 \, 
         \, {\buildrel \rho_2 \over \longleftarrow} \, h_2 \, 
         \cdots \, {\buildrel \rho_r \over \longleftarrow} \, h_r \in \calH(e_X) \ \text{ and } \\
 g_0 \, {\buildrel \sigma_1 \over \longleftarrow} \, g_1 \, 
         \, {\buildrel \sigma_2 \over \longleftarrow} \, g_2 \, 
         \cdots \, {\buildrel \sigma_s \over \longleftarrow} \, g_s \in \calH(e_X) \ .
\end{array}\]
Let $h_i \in \calH(e_i)$ for all $0\leq i <r$ and $g_l \in \calH(d_l)$ for all $0 \leq l < s$.
Then in $\frakI$ we have the following paths.
$$
  e_0 \ {\buildrel [\rho_1] \over \longleftarrow} \ e_1 \ \cdots \ 
  e_{r-1} \ {\buildrel [\rho_r] \over \longleftarrow} \ 
  e_X \ {\buildrel [\sigma_s] \over \longrightarrow} \ d_{s-1} \cdots \ 
  d_{1} \ {\buildrel [\sigma_1] \over \longrightarrow} \ d_0
$$
Given the nature of edges in a (guarded) bisimulation invariant as representing 
partial isomorphisms they are invertible and compositional in the sense that 
for each $v \,{\buildrel [\rho] \over \longrightarrow}\, w$ there is 
also $w \, {\buildrel [\rho]^{-1} \over \longrightarrow}\, v$
and then for every $w \,{\buildrel [\sigma] \over \longrightarrow}\, u$ 
there is also $v \, {\buildrel [\sigma]\circ[\rho] \over \longrightarrow}\, u$
as long as $[\sigma]\circ[\rho] \neq \emptyset$. 
This means that for any non-empty $[\pi] \subseteq 
  [\sigma_1] \circ \cdots \circ [\sigma_s] \circ [\rho_r]^{-1} \circ \cdots \circ [\rho_1]^{-1}$
there is an edge $\pi=(e_0,[\pi],d_0)$ in $\frakI$ 
and now there is one such $[\pi]$ that maps the projection of $X$ in $e_0$ 
to the projection of $X$ in $d_0$. % X = {h_0}|_{\dom[\pi]}$.

It follows that all moves made from any $h_0 \in \calH_N(e_0)$ to any $g_0 \in \calH_N(d_0)$
in the guarded bisimulation game on $\frakR^m_N$ have corresponding edges from $e_0$ to $d_0$ in $\frakI$.
The converse of this being enforced by the very definition of $\frakR^m_N$, 
we can establish that the (guarded) bisimulation invariant of $\frakR^m_N$
is no other than $\frakI$.
\end{proof}

\begin{lem} \label{lemma_chain}
$\calH^r_N(e)/_{k} = \calH^r_{k}(e)$ for all $m \geq N > k \geq 2$, 
all $r$, and all $e \in \frakI$. Truncation of terms at depth $k$ thus acts   
as a homomorphic projection from $\frakR^m_N$ onto $\frakR^m_k$ inducing 
a guarded bisimulation. Therefore, for every $N\leq m$,  
we have the following chain of covers.
$$
   \frakR^m_N \,\longcovers\, \frakR^m_{N-1} \, \longcovers\, \cdots \,\frakR^m_3\, \longcovers \, \frakR^m_2  
$$
\end{lem}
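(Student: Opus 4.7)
The plan is to prove the set identity $\calH^r_N(e)/_k = \calH^r_k(e)$ by induction on $r$, and to derive from it the remaining structural claims. Before the induction I would establish an auxiliary ``monotonicity'' property of truncation, namely $t/_a/_b = t/_{\min(a,b)}$ for every term $t$ and all $a,b \geq 0$, which follows by a short induction on $t$ directly from the rules in~\eqref{eq_trunc}. Decorating $\fatrho$ momentarily with the parameter ($N$ or $k$) that governs its internal truncation depth, this monotonicity yields the pivotal identity
\[
  \fatrho^{j,N}(\tup{t})/_k \;=\; \fatrho^{j,k}(\tup{t}/_k) \qquad (k < N),
\]
because the passed-through coordinates coincide on both sides, while the function coordinates satisfy $f^j_{\rho,i}(\tup{t}/_{N-1})/_k = f^j_{\rho,i}(\tup{t}/_{k-1}) = f^j_{\rho,i}((\tup{t}/_k)/_{k-1})$ by two applications of monotonicity.

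With this identity at hand the forward inclusion $\calH^r_N(e)/_k \subseteq \calH^r_k(e)$ becomes an immediate induction on $r$: the base case is trivial since constants are fixed by truncation, and in the inductive step a hyperedge $\{\fatrho^{j,N}(\tup{t})\} \in \calH^{r+1}_N(e)$ truncates to $\{\fatrho^{j,k}(\tup{t}/_k)\} \in \calH^{r+1}_k(e)$ by the inductive hypothesis applied to $\{\tup{t}\}$; the side condition $j \notin J(\tup{t}/_k\restriction_{\dom[\rho]})$ is inherited from $j \notin J(\tup{t}\restriction_{\dom[\rho]})$ since truncation can only shrink $J(\cdot)$. For the reverse inclusion I would lift a given $\{\fatrho^{j,k}(\tup{s})\} \in \calH^{r+1}_k(e)$ by recursively replacing each $k$-setting $\rho$-extension by an $N$-setting one carrying the same label. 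The hardest part of the proof is to verify that the side conditions $j \notin J(\cdot\restriction_{\dom[\rho]})$ survive this lifting, since passing from $\tup{s}$ to a preimage $\tup{t}$ might reintroduce $j$-values that had been erased by truncation. The cleanest remedy is to strengthen the inductive hypothesis: given $\{\tup{s}\} \in \calH^r_k(d)$ and any forbidden set $F$ of $j$-values disjoint from $J(\tup{s})$ and of size polynomial in $N$, a preimage $\{\tup{t}\} \in \calH^r_N(d)$ can be chosen with $J(\tup{t}) \cap F = \emptyset$; the generous pool of $w^{m+2}$ labels, together with the fact that a single branch of a term of height at most $N$ introduces at most $N$ fresh $j$-values, provides ample room to make this choice.

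Having established the set identity, the remaining assertions follow cleanly. Truncation at depth $k$ becomes a well-defined surjective map $\pi_k \colon \frakR^m_N \to \frakR^m_k$ sending every hyperedge of $\calH_N$ onto a hyperedge of $\calH_k$, hence a hypergraph homomorphism; in the relational case it is also a structural homomorphism because the atomic types attached to corresponding hyperedges are both dictated by the common invariant $\frakI$ (Lemma~\ref{lemma_gbisim}). The same observation shows that the back-and-forth system $\{(h,\pi_k(h)) \mid h \in \calH_N\}$ between maximal hyperedges is a guarded bisimulation, exhibiting $\pi_k$ as a cover $\frakR^m_N \longcovers \frakR^m_k$. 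Finally, iterating for $k = N-1, N-2, \dots, 2$ --- the composed projections agreeing with single-shot truncations by the opening monotonicity property --- yields the claimed chain of covers.
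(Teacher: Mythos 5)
Your overall plan --- establish the set identity $\calH^r_N(e)/_k = \calH^r_k(e)$ by induction on $r$, then read off the homomorphism and the chain of covers --- matches the paper's, and your preliminaries (the monotonicity $t/_a/_b = t/_{\min(a,b)}$, the ``pivotal identity'', and the forward inclusion) are correct. However, there is a genuine gap in your treatment of the reverse inclusion.

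The weak point is the strengthened induction hypothesis: you require the forbidden set $F$ to be disjoint from $J(\tup{s})$, i.e.\ from the $j$-values occurring anywhere in the full tuple $\tup{s}$. But when lifting the extension $\{\fatrho^{j,k}(\tup{s}\restriction_{\dom[\rho]})\} \in \calH^{r+1}_k(e)$, the only information you have is the side condition $j \notin J(\tup{s}\restriction_{\dom[\rho]})$ coming from the definition~\eqref{eq_KrNe}. Nothing prevents $j$ from lying in $J(\tup{s}) \setminus J(\tup{s}\restriction_{\dom[\rho]})$, for instance when $j$ labels fresh components of $\tup{s}$ outside of $\dom[\rho]$. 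In that case $F = \{j\}$ is \emph{not} disjoint from $J(\tup{s})$, so your IH cannot be invoked --- and, since any preimage $\tup{t}$ with $\tup{t}/_k = \tup{s}$ necessarily has $J(\tup{s}) \subseteq J(\tup{t})$, no preimage with $j \notin J(\tup{t})$ can exist. Your per-branch room argument is fine as far as it goes, but it never addresses this mismatch between the scope of your forbidden-set condition (the whole tuple) and the scope of the constraint you actually need to respect (the restriction to $\dom[\rho]$). The paper avoids this cleanly by first observing that each $\calH^r_N(e)$ is closed under permutations $\sigma$ of the $j$-superscripts, and then repairing a conflicting lift $\{\tup{u}\}$ post hoc: choose $\sigma$ fixing $J(\tup{u}\restriction_{\dom[\rho]})\setminus\{j\}$ pointwise but moving $j$, so that $\tup{u}^\sigma\restriction_{\dom[\rho]}/_{N-1} = \tup{u}\restriction_{\dom[\rho]}/_{N-1}$ while $j \notin J(\tup{u}^\sigma\restriction_{\dom[\rho]})$. (It also proceeds in single truncation steps $N \to N-1$, though that choice is cosmetic given your monotonicity lemma.) Your approach could be salvaged by reformulating the IH to control $J(\tup{t}\restriction_X)$ for the relevant coordinate set $X$, or simply by borrowing the permutation-closure device, but as written the argument does not go through.
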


\begin{proof}
For $\sigma \in \mathrm{Sym}([w^{m+2}])$ a permutation of $j$-values 
and $t$ a term let $t^\sigma$ denote the term obtained by translating 
all superscripts $j$ in $t$ according to $\sigma$.
\[\begin{array}{rcl}
 (c^j_{e,i})^\sigma  & = & c^{\sigma(j)}_{e,i} \\
 (f^j_{\rho,i}(\tup{t}))^\sigma  & = &  f^{\sigma(j)}_{\rho,i}(\tup{t}^\sigma)
\end{array}\]
Based on definitions \eqref{eq_trunc} and \eqref{eq_KrNe} it is straightforward to verify 
by induction on $N$ and on $r$ that $\calH^r_N(e)/_{N-1} \subseteq \calH^r_{N-1}(e)$ 
and that $\calH^r_N(e)$ is closed under translations $\cdot^\sigma$ for all $e$.

Using the latter one can in fact show by induction that 
$\calH^r_N(e)/_{N-1} = \calH^r_{N-1}(e)$ for all $r$, $e \in \frakI$ and $m\geq N$. 
This amounts to proving that all hyperedges 
  $h = \{\fatrho^j(\tup{t}|_{\dom[\rho]})\} \in \calH^{r+1}_{N-1}(e)$ 
obtained by $\rho$-extension of some $g = \{\tup{t}\} \in \calH^r_{N-1}(d)$ 
can also be obtained as truncations of hyperedges in $\calH^{r+1}_N(e)$, assuming, 
by the induction hypothesis, that $\calH^r_{N-1}(d) = \calH^r_N(d)/_{N-1}$, 
i.e., that there is a $\hat{g} = \{\tup{u}\} \in \calH^r_N(d)$ 
such that $g = \hat{g}/_{N-1}$.
While $j \not\in J(\tup{t}|_{\dom[\rho]})$ in this case, cf. \eqref{eq_KrNe}, 
it is conceivable that $j$ does occur in $\tup{u}|_{\dom[\rho]}$ at depth $N$.  
If so, then take a permutation $\sigma \in \mathrm{Sym}([w^{m+2}])$ 
that fixes $J(\tup{u}|_{\dom[\rho]})\setminus\{j\}$ pointwise but does not fix $j$
(that such a permutation exists follows from $N \leq m$), 
otherwise let $\sigma = \mathrm{id}$. 
Then, by closure under translations, ${\hat{g}}^\sigma = \{\tup{u}^\sigma\} \in \calH^r_N(d)$ 
and by the choice of $\sigma$ we have $\tup{u}^\sigma/_{N-1} = \tup{u}/_{N-1}$ 
and $j \not\in J(\tup{u}^\sigma|_{\dom[\rho]})$.
Consequently $\hat{h} = \{\fatrho^j(\tup{u}^\sigma|_{\dom[\rho]})\}$ is 
a hyperedge in $\calH^{r+1}_N(e)$ and  
$\hat{h}/_{N-1} = 
       \{\mathbf{f}_\rho^j(\tup{u}^\sigma|_{\dom[\rho]}/_{N-2})\}
  \cup \{\tup{u}^\sigma|_{\dom[\rho]}/_{N-1}\} = h$ as needed.

It follows that $\calH^r_N(e)/_{\kappa} = \calH^r_{\kappa}(e)$ for all $m \geq N > \kappa$. 
Truncation of terms at depth $N-1$ is therefore a homomorphism 
from $\frakR^m_N$ to $\frakR^m_{N-1}$ that is onto. 
By Lemma~\ref{lemma_gbisim} it also induces a (guarded) bisimulation 
$\frakR^m_N \longcovers \frakR^m_{N-1}$ yielding a chain of covers as claimed. 
\end{proof}

\subsection{Size of the Rosati cover}
%...........................................................................................

The size of $\frakR^m_N$ can be bounded as follows. 
Let $w$ be the width of $\frakI$, assume that $w \geq 2$ and let $J=w^{m+2}$.
Then there are $J |\frakA|^{\calO(w)}$ many constants $c^j_{e,i}$ 
and function symbols $f^j_{\rho,i}$ altogether, and each term of height up to $N$ 
contains at most $w^{N+1}$ many such symbols. 
For $m=N^2$, therefore, the total number of terms in $\frakR^m_{N}$ 
is at most $(J |\frakI|^{\calO(w)})^{w^{N+1}} = |\frakI|^{w^{\calO(N)}}$ 
as stated in Theorem~\ref{thrm_main_from_gbis_invariant}.

\subsection{Auxiliary notions}
%...........................................................................................

Consider a hyperedge $h=\{\mathbf{f}^j_\rho(\tup{t}_{/N-1})\} 
\cup \{ \tup{t}\} \in \calH^{r+1}_N(e)$. 
The elements of $\{\mathbf{f}^j_\rho(\tup{t}_{/N-1})\}$ 
will be referred to as \emph{siblings}; we denote the sibling relation 
as $f^j_{\rho,i}(\tup{t}_{/N-1}) \sib f^j_{\rho,l}(\tup{t}_{/N-1})$. 
We also say that these terms are \emph{introduced in the hyperedge $h$} 
and that \emph{$h$ is a $\rho$-extension}. 
Furthermore, elements of $\{ \tup{t} \}$ are said to be \emph{predecessors} 
of those in $\{\mathbf{f}^j_\rho(\tup{t}_{/N-1})\}$, and we denote this 
by writing $t_l \pred f^j_{e,i}(\tup{t}_{/N-1})$, for $l$ and $i$ as appropriate.
Constants covered by a hyperedge $\{\mathbf{c}^j_e\} \in \calH^0_N(e)$ 
are also regarded as siblings introduced in that hyperedge. 
%In notation $c^j_{e,i} \sib c^j_{e,l}$.
%
Compare Lemmas~4--9 of~\cite{Ros11} for some of the following properties. 

\begin{lem} \label{lemma_props} %\hfill
Let $m \geq N \geq 2$ as before.
\begin{enumerate}[label=(\roman*)]
\setlength{\itemsep}{0.2ex}
\item \label{props_partition}
      The relations $\sib$, $\pred$, and its inverse $\predinv$ 
      partition the set of all guarded pairs of $\frakR^m_N$.
\item \label{props_sib_equiv}
      $\sib$ is an equivalence relation having guarded equivalence classes. 
\item \label{props_pred_sib} 
      Whenever $t^0 \pred t^1 \sib t^2$ then $\{t^0,t^1,t^2\}$ is guarded and $t^0 \pred t^2$.
\item \label{props_pred_Nacylic} 
      $\frakR^m_N$ has no directed $\pred$-cycles of length $\leq N$.
\item \label{props_prim_max}
      % Assume $m \geq N \geq 3$. 
      If $h$ is a primary guard of $X$ % and not all constants 
      then some $\pred$-maximal element of $h$ must be in $X$. 
      % In other words, some $t_i$ is not the predecessor of any element of $h$.
      % ^-- this holds also for X all constants and without the assumption N >= 3 
\item \label{props_pred_trans} 
      Assuming $m \geq N \geq 3$, the relation $\pred$ 
      is transitive on every guarded set of terms. 
\item \label{props_prim_trunc} 
      If $m \geq N \geq 3$ and  
      $h \in \frakR^m_N$ is a (primary) guard of $X \subseteq \frakR^m_N$
      then $h/_{N-1}$ is a (primary) guard of $X/_{N-1} \subseteq \frakR^m_{N-1}$.
      In particular, $e_X = e_{X/_{N-1}}$ for every guarded set $X \subseteq \frakR^m_N$. 
\end{enumerate}
\end{lem}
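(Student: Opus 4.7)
My plan is to establish items (i)--(vii) in sequence, exploiting three structural observations about $\frakR^m_N$:
(a) every non-constant term $t$ has a unique outer function symbol $f^j_{\rho,i}$, which together with the argument tuple $\tup{t}/_{N-1}$ determines the sibling family $\{\mathbf{f}^j_\rho(\tup{t}/_{N-1})\}$ in which $t$ is introduced;
(b) truncation preserves any superscript at depth $\leq N-1$ at the same depth (replacing a depth-$(N-1)$ function symbol by a constant with the same superscript) and discards anything deeper;
(c) each $\rho$-extension $\fatrho^j(\tup{t})$ obeys the constraint $j \notin J(\tup{t})$.

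Items (i)--(iii) follow by direct inspection. In (ii), reflexivity and symmetry of $\sib$ are immediate, and transitivity uses (a) to show that the sibling family is determined by any of its members; each $\sib$-class lies in any intro hyperedge, hence is guarded. For (i), exhaustiveness is by inspecting a hyperedge $h = \{\mathbf{f}^j_\rho(\tup{t}/_{N-1})\} \cup \{\tup{t}\}$ containing the given pair and recursing into the predecessor block $\{\tup{t}\}$ (which is a hyperedge at the previous level) when both elements lie there; the base case $\{\mathbf{c}^j_e\}$ contributes only $\sib$-pairs. Exclusivity of $\sib$ versus $\pred$ follows from (c) (a sibling $t$ of $t'$ carries the outer superscript $j$ of $t'$ in $J(t)$, whereas any predecessor of $t'$ does not), and $\pred$ versus $\predinv$ is the 2-cycle case of (iv). Item (iii) is a short calculation: $t^1 \sib t^2$ forces $t^1, t^2$ to share an outer form $f^j_{\rho,\star}(\tup{u}/_{N-1})$, so any intro hyperedge $h = \{\mathbf{f}^j_\rho(\tup{s}/_{N-1})\} \cup \{\tup{s}\}$ of $t^1$ witnessing $t^0 \pred t^1$ (with $t^0 \in \tup{s}$ and $\tup{s}/_{N-1} = \tup{u}$) is simultaneously an intro hyperedge of $t^2$, whence $\{t^0, t^1, t^2\} \subseteq h$ and $t^0 \pred t^2$.

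The main technical step is (iv). Given a putative $\pred$-cycle $t^0 \pred t^1 \pred \cdots \pred t^{k-1} \pred t^0$, write $j_0$ for the outer superscript of $t^0$. By (b) one tracks $j_0$ along the cycle: $j_0$ is at depth $0$ in $t^0$, and each step $t^\ell \in \tup{s}^{\ell+1}$ places $t^\ell/_{N-1}$ at depth $1$ in $t^{\ell+1}$, pushing $j_0$ down by one; by induction on $\ell$, $j_0$ occupies depth $\ell$ in $t^\ell$ for $0 \leq \ell \leq N$. In particular, whenever $k \leq N+1$ we obtain $j_0 \in J(t^{k-1})$, but the back edge $t^{k-1} \pred t^0$ combined with (c) forces $j_0 \notin J(t^{k-1})$, a contradiction. (This in fact rules out $\pred$-cycles of length up to $N+1$, exceeding the stated bound.)

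Item (v) is immediate from the structure of hyperedges: the $\pred$-maximal elements of $h$ are exactly the siblings introduced in $h$, and primality of $h$ over $X$ means $X \not\subseteq \{\tup{t}\}$, so $X$ meets the sibling block. For (vi), apply the trichotomy (i) to the pair $\{t^0, t^2\}$: the subcase $t^0 \sib t^2$ yields, via (iii) applied to $t^1 \pred t^2 \sib t^0$, a 2-cycle $t^0 \pred t^1 \pred t^0$; the subcase $t^2 \pred t^0$ yields a 3-cycle; both are excluded by (iv) under $N \geq 3$. For (vii), Lemma~\ref{lemma_chain} already shows that truncation at depth $N-1$ commutes with $\rho$-extension, so each hyperedge $h \in \frakR^m_N$ maps to a hyperedge $h/_{N-1} \in \frakR^m_{N-1}$ of matching outer form; (a) and (c) together imply that truncation is injective on the sibling block (preserving the outer symbols $f^j_{\rho,i}$) and cannot collapse a sibling onto a predecessor (the sibling has $j$ in its $J$-set, while predecessors in the same extension do not, by (c)). Hence $h$ is a primary guard of $X$ iff $h/_{N-1}$ is a primary guard of $X/_{N-1}$, and $e_X = e_{X/_{N-1}}$ then follows from Lemma~\ref{lemma_primguard}. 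The main obstacle is the depth-tracking invariant for (iv): once the superscript bookkeeping is set up, the remaining items reduce to careful inspection of the explicit term-level structure of $\frakR^m_N$.
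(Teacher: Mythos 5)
Your proof is correct and takes essentially the same approach as the paper's: every item is reduced to the term-level structure of $\frakR^m_N$ and the uniqueness of $j$-superscripts within terms, with the same division into a sibling block and a predecessor block for each hyperedge. The only cosmetic difference is in item (iv), where you track the outer superscript $j_0$ of $t^0$ descending one depth level per $\pred$-step until it collides with the constraint on the back edge, while the paper equivalently tracks the truncations $t^r/_{N-r}$ as depth-$r$ subterms of $t^0$ and reads off pairwise distinctness from the uniqueness of superscripts (your accounting gives cycles of length $\leq N+1$, the paper's stated bound is $\leq N$); note also that you rely on (iv) for the $\pred$/$\predinv$ disjointness in (i), a forward reference the paper avoids with a direct one-line argument there.
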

\begin{proof}
As to item \eqref{props_partition}, observe that the sibling and predecessor relationships are 
reflected in the terms themselves. 
Siblings are identical terms for all but the indices in the subscript of their 
respective root symbol, and the $(N-1)$-truncation of each predecessor 
of a term occurs in it as an immediate subterm of the root symbol. 
Given that $N \geq 2$ and that the $j$ superscripts are by definition unique 
within each term, it is impossible for some $t \in \frakR^m_N$ to have $t/_{N-2}$ 
as a subterm at depth $2$, and hence it is impossible to have some $t \pred t' \pred t$.  

Item \eqref{props_sib_equiv} can be equivalently stated in a form similar to 
that of \eqref{props_pred_sib}, asserting that whenever $t^0 \sib t^1 \sib t^2$ 
then  $\{t^0,t^1,t^2\}$ is guarded and also $t^0 \sib t^2$ holds. 
Let us first verify that $\{ t^0, t^1, t^2 \}$ is guarded in both these cases.
For item \eqref{props_sib_equiv} this is obviously the case if $t^0,t^1,t^2$ 
are sibling constants belonging to some $\mathbf{c}^j_e$. Otherwise let 
$h = \{\mathbf{f}^j_\rho(\tup{u}_{/N-1})\}\cup\{\tup{u}\} \in \calH^{r+1}(e_{\{t^0,t^1\}})$
be any primary guard of the pair $\{t^0,t^1\}$. 
Then, whether $t^0 \sib t^1$ \eqref{props_sib_equiv} or $t^0 \pred t^1$ \eqref{props_pred_sib} 
the term $t^1$ must have been introduced in the hyperedge $h$ and must therefore 
take the form $t^1 = f^j_{\rho,i}(\tup{u}/_{N-1})$.
Being a sibling of $t^1$,  $t^2 = f^j_{\rho,l}(\tup{u}/_{N-1})$ and as such is contained in $h$, 
which therefore guards $\{ t^0, t^1, t^2 \}$. Now it is obvious from the definition of 
the sibling and predecessor relations that $t^0 \sib t^2$ or $t^0 \pred t^2$, 
according to whether $t^0 \sib t^1$ or $t^0 \pred t^1$.

Item \eqref{props_pred_Nacylic} is a trivial consequence of the requirement that 
superscripts $j$ must not occur twice in any term in $\frakR^m_N$.  
Indeed, if $t^{k-1} \pred \ldots \pred t^2 \pred t^1 \pred t^0$ is a 
predecessor chain of length $k \leq N$ then $t^r/_{N-r}$ is, for each $r<k$, 
a subterm of $t^0$ at depth $r$. 
This implies that $t^0, t^1, \ldots, t^{k-1}$ are pairwise distinct. 

Property \eqref{props_prim_max} is a straightforward consequence of the definitions. 
Consider $h$ a primary guard of $X$. 
Either $h = \{\mathbf{c}^j_e\}$ and each $c^j_{e,i}$ is $\pred$-maximal within $h$, 
or $h=\{\mathbf{f}^j_\rho(\tup{t}/_{N-1})\}\cup\{\tup{t}\}$ introduces some 
$f^j_{\rho,i}(\tup{t}/_{N-1}) \in X$, which is then $\pred$-maximal within $h$.
%or some element of $X$ is introduced in $h$ and is thus $\pred$-maximal within $h$.
% In case of $X \subset h = \{\mathbf{c}^j_e\}$ all constants are $\pred$-maximal within $h$.
%, thanks to \eqref{props_pred_trans}.

Assuming $m \geq N \geq 3$, property \eqref{props_pred_trans} follows 
from the prior ones. Indeed, let $t^0 \pred t^1 \pred t^2$ such that 
$\{t^0, t^1, t^2\}$ is guarded. Then, according to \eqref{props_partition} 
either $t^2 \sib t^0$ or $t^2 \pred t^0$ or $t^0 \pred t^2$.
In the first case we have $t^1 \pred t^0$ by \eqref{props_pred_sib}
and thus a two-cycle $t^0 \pred t^1 \pred t^0$, 
in the second case we have a three-cycle $t^0 \pred t^1 \pred t^2 \pred t^0$, 
both contradicting \eqref{props_pred_Nacylic}. 
Therefore $t^0 \pred t^2$. % witnessing transitivity. 

Finally, towards \eqref{props_prim_trunc} consider a hyperedge $h$ 
that is a guard (i.e.~superset) of $X \subseteq \frakR^m_N$.
By Lemma~\ref{lemma_chain}, also $h/_{N-1}$ is a hyperedge in $\frakR^m_{N-1}$,
and it guards $X/_{N-1}$. 
If $h/_{N-1}$ is \emph{not} a primary guard of $X/_{N-1}$ 
then $h/_{N-1}$ is of the form 
$\{\mathbf{f}^j_\rho(\tup{t}/_{N-2})\}\cup\{\tup{t}\} \in \calH^{r+1}_{N-1}(e)$
and $X/_{N-1} \subseteq \{\tup{t}\}$. 
By Lemma~\ref{lemma_chain} again, 
$h = \{\mathbf{f}^j_\rho(\tup{u}/_{N-1})\}\cup\{\tup{u}\} \in \calH^{r+1}_{N}(e)$ 
for some terms $\tup{u}$ such that $\tup{u}/_{N-1} = \tup{t}$.
Suppose now that $h$ is a primary guard of $X$ and thus there is some term
$f^j_{\rho,l}(\tup{u}/_{N-1})$ belonging to $X$. 
Then $f^j_{\rho,l}(\tup{u}/_{N-1})/_{N-1} = f^j_{\rho,l}(\tup{u/}_{N-2}) = f^j_{\rho,l}(\tup{t}/_{N-2})$
belongs to $X/_{N-1}$. 
Given that $N-1 \geq 2$, this contradicts the assumption $X/_{N-1} \subseteq \{\tup{t}\}$,
i.e.~that $h/_{N-1}$ is not a primary guard of $X/_{N-1}$. 
\end{proof}

\subsection{{\boldmath $N$}-conformality of {\boldmath $\frakR_N$}}
%...........................................................................................

Consider $3 \leq l \leq N$ and an $l$-clique $\{t^0, \ldots, t^{l-1}\}$ 
in $\frakR_N$, i.e., such that all pairs $\{t^i,t^j\}$ are guarded.
By Lemma~\ref{lemma_props} there are no predecessor-cycles 
in $\{t^0, \ldots, t^{l-1}\}$ but there is a term, wlog.~$t^0$, 
such that every one of $t^1, \ldots, t^{l-1}$ is either a predecessor 
or a sibling of $t^0$. 

Observe that the projection of any primary guard of $t^0$ to $\frakR^m_{N-1}$ 
guards $\{t^0_{/N-1}, \ldots, t^{l-1}_{/N-1}\}$.
This would already be sufficient to establish a weaker form of Theorem~\ref{thrm_main_from_gbis_invariant}
still yielding Theorem~\ref{thrm_FinControl} for $\GFO$.
However, we can show that the entire $l$-clique is guarded already in $\frakR_N$.

\begin{prop} \label{prop_conformality}
Assume that for some $2 \leq l \leq N$ there are $t^0, \ldots, t^{l-1}$ in $\frakR_N$ 
such that all pairs $\{t^i,t^j\}$ are guarded.
Then the entire clique $\{t^0, \ldots, t^{l-1}\}$ is guarded in $\frakR_N$. 
\end{prop}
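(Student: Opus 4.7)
The plan is to induct on the clique size $l$, with base case $l=2$ immediate from the hypothesis. For $l \geq 3$, Lemma~\ref{lemma_props}(iv) produces a $\pred$-maximal term in the clique, which I take without loss of generality to be $t^0$. By Lemma~\ref{lemma_props}(i), every $t^k$ with $k\geq 1$ is then either a sibling or a strict $\pred$-predecessor of $t^0$; let $\Sigma$ collect those $t^k$ that are siblings of $t^0$ (together with $t^0$ itself) and $P$ the strict predecessors.

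When $P = \emptyset$ the whole clique lies in the $\sib$-class of $t^0$, which is guarded by Lemma~\ref{lemma_props}(ii). When $|\Sigma| \geq 2$ and $P \neq \emptyset$, I apply the inductive hypothesis to the $(l-1)$-clique $\{t^1,\dots,t^{l-1}\}$ to obtain a primary guard $h'$. By Lemma~\ref{lemma_props}(v), some $\pred$-maximal element of $h'$ lies in $\{t^1,\dots,t^{l-1}\}$; using Lemma~\ref{lemma_props}(iii), any strict predecessor of $t^0$ that lies in $h'$ is $\pred$-below every sibling of $t^0$ also in $h'$ (of which there is one since $|\Sigma|\geq 2$), so the $\pred$-maximal witness provided by (v) must itself be a sibling of $t^0$. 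Since its entire sibling class, which includes $t^0$, is introduced together inside $h'$, we get $t^0 \in h'$ and the clique is guarded by $h'$.

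The remaining case is $\Sigma = \{t^0\}$, when every $t^k$ with $k \geq 1$ is a strict predecessor of $t^0$. Writing $t^0 = f^j_{\rho,i}(\tup{v})$ with $\rho=(d,[\rho],e)$, I aim for a primary guard of $t^0$ of the form $S_* \cup \{\tup{s}\restriction{\dom[\rho]}\}$ with $\tup{s} \in \calK^r_N(d)$ satisfying $(\tup{s}\restriction{\dom[\rho]})_{/N-1} = \tup{v}$ and $j \notin J(\tup{s}\restriction{\dom[\rho]})$, and containing every $p \in P$. Each $p \in P$ individually arises as $s^{(p)}_{l_p}$ for some such witness tuple $\tup{s}^{(p)}$; I first check that the positions $l_p$ are pairwise distinct, because two distinct strict predecessors of $t^0$ sharing an $(N-1)$-truncation would force the root superscript $j$ of $t^0$ to reappear inside $t^0$, contradicting the uniqueness of superscripts enforced by the definition of $\calK^r_N$. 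Applying the inductive hypothesis to $P$ yields a common guard $h^\dagger$ for $P$ in $\frakR_N$; the core remaining step is to promote $h^\dagger$ to a tuple in $\calK^r_N(d)$ whose entry at each $l_p$ is $p$. I would carry this out by chasing through the recursive definition of $\calK^r_N(d)$ and exploiting the freedom to permute $j$-superscripts afforded by the abundance $m \geq N^2$, along the lines of the translation argument in the proof of Lemma~\ref{lemma_chain}, both to ensure consistency of the prescribed values with $d$'s atomic type and to keep the root superscript $j$ of $t^0$ outside $J(\tup{s}\restriction{\dom[\rho]})$. The main obstacle is precisely this merging step---combining the individual witnesses $\tup{s}^{(p)}$ into a single tuple of $\calK^r_N(d)$ subject to the side constraints---where the specific design of the Rosati cover, and particularly its generous stock of $j$-superscripts, becomes essential.
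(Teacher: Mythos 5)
Your case split and the handling of the two cases in which $t^0$ has a sibling in the clique are sound and track the paper's first case closely: property~(v) places a $\pred$-maximal element of a primary guard of $X=\{t^1,\dots,t^{l-1}\}$ inside $X$, and property~(iii) rules out that this element lies in $P$, forcing it (and hence its sibling $t^0$) to be introduced in that guard. The gap is the remaining case $\Sigma=\{t^0\}$ (all $t^i$ with $i>0$ strict predecessors of $t^0$), which is the crux of the proposition and which you only sketch, explicitly flagging the ``merging'' step as unresolved. This is a genuine gap: the ``freedom to permute $j$-superscripts'' you invoke, along the lines of Lemma~\ref{lemma_chain}, acts as a global translation on entire tuples and cannot reconcile independently chosen witness tuples $\tup{s}^{(p)}$ that disagree at positions other than $l_p$, so it is unclear how to assemble them into a single $\tup{s}\in\calK^r_N(d)$ with $s_{l_p}=p$ for all $p\in P$. (A minor point: your reason for the positions $l_p$ being pairwise distinct misattributes the offending superscript. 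If $p\ne p'$ share an $(N-1)$-truncation, siblinghood fails because siblings differ only in the index subscript, so by property~(i) one is a predecessor of the other, say $p\pred p'$, whereupon $p'/_{N-1}$ is a depth-one subterm of $p'$, repeating $p'$'s own root superscript inside $p'$ -- not the $j$ of $t^0$.)

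The paper dissolves this difficulty by a different route that avoids amalgamating per-element witnesses. It picks $t^1$ $\pred$-maximal in $X$, takes a primary guard $h^{(0)}$ of $\{t^0,t^1\}$ (which, by property~(v), introduces $t^0$), and defines $Y=\{u^1,\dots,u^{l-1}\}$ from components of $h^{(1)}$ matching the $t^i$ after truncation, noting $u^1=t^1$. Tracing the extension chain $h^{(0)}\leftarrow\cdots\leftarrow h^{(r+1)}$ back to a primary guard $h^{(r+1)}$ of $Y$, and taking a primary guard $g^{(r+1)}$ of $X$ from the induction hypothesis, it uses $e_X=e_Y$ (property~(vii)) together with the shared $\pred$-maximal element $t^1=u^1$ to conclude $h^{(r+1)}/_{N-1}=g^{(r+1)}/_{N-1}$; then it simply \emph{replays} the recorded extension sequence $\langle\rho_r^{j_r},\dots,\rho_0^{j_0}\rangle$ from $g^{(r+1)}$, producing a hyperedge $g^{(0)}$ that guards $X$ and introduces $t^0$ with the required superscript $j_0$. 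The missing idea is thus that a single shared maximal term pins down the primary guard of $X$ up to truncation, so the compatibility needed for a simultaneous guard is already latent in the induction hypothesis and need not be reconstructed by merging.
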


\begin{proof}
From the trivial base case for $l=2$ we proceed by induction on $l$.
By the preceding observation we may assume wlog.~that each of $t^1, \ldots, t^{l-1}$ 
is either a predecessor or a sibling of $t^0$. 
By the induction hypothesis $X=\{t^1,\ldots,t^{l-1}\}$ is guarded.

Consider first the case when $t^0 \sib t^i$ for some $i \not= 0$. 
Then $t^i$ is a $\pred$-maximal element of $X$ and as such is necessarily 
introduced in any primary guard $h$ of $X$. But then $t^0$, being a sibling of $t^i$, 
is also introduced in $h$, which therefore guards the entire clique.

Otherwise we know that $t^i \pred t^0$ for all $0<i<l$. 
Also, $X$ being guarded it contains a $\pred$-maximal element, wlog. $t^1$. 
Let $h^{(0)}$ be a primary guard of the pair $\{t^0, t^1\}$. 
Given that $t^1 \pred t^0$, then $t^0$ is introduced in $h^{(0)}$, 
cf. property \eqref{props_prim_max}.
In this case $t^0$ takes the form $f^{j_0}_{\rho_0,i_0}(\tup{u}/_{N-1})$ 
for some $\rho_0: e_1 \to e_0$ and appropriate $i_0$ and 
$
  h^{(0)} = \{\fatrho_0^{j_0}(\tup{u})\} =
     \{ \mathbf{f}^{j_0}_{\rho_0}(\tup{u}/_{N-1}) \}
     \cup 
     \{\tup{u}\} \in \calH(e_0)
$
where $\{\tup{u}\} = h^{(1)}|_{\dom\rho_0}$ 
for some $h^{(1)} \in \calH(e_1)$.

Note that each $t^i/_{N-1}$ is a subterm of $t^0$ at depth one, i.e., 
is among those in $\tup{u}/_{N-1}$. %(We remark that the $t^i/_{N-1}$ are distinct.)
For each $i$, let $u^i$ denote the component of $\tup{u}$ 
such that $u^i/_{N-1} = t^i/_{N-1}$ and let $Y = \{u^1,\ldots,u^{l-1}\}$. 
Thus $X/_{N-1} = Y/_{N-1}$. 

Crucially $t^1 = u^1$, for it is included in $h^{(1)}$. 
Also observe that, because $t^1$ is $\pred$-maximal in $X$,  
for each $1<i<l$ either $t^i \sib t^1$ or $t^i/_{N-1} = u^i/_{N-1}$ 
also occurs as a subterm of $t^1$.% at depth one. 

Tracing backward from $h^{(0)}$ and $h^{(1)}$ as above 
we can find a chain of expansions  
\begin{equation} \label{eq_deriv}
  h^{(0)} \,{\buildrel \rho_0^{j_0} \over \longleftarrow} \, h^{(1)} \, 
            {\buildrel \rho_1^{j_1} \over \longleftarrow} \, \cdots \,
            {\buildrel \rho_r^{j_r} \over \longleftarrow} \, h^{(r+1)}
\end{equation}
with $h^{(\lambda)}$ %= \rho^{j_\lambda}(h^{(\lambda+1)}|_{\dom\rho_\lambda})$ 
a guard of $Y$ for each $\lambda \leq r$ 
until we reach some $h^{(r+1)} \in \calH(e_Y)$ a primary guard of $Y$.
Let $\{\tup{v}\} = h^{(r+1)}$ and consider some $\{\tup{w}\} = g^{(r+1)} \in \calH(e_X)$, 
a primary guard of $X$.

Given that $N \geq l \geq 3$ and $X/_{N-1} = Y/_{N-1}$, 
according to Lemma~\ref{lemma_props}~\eqref{props_prim_trunc}   
we have $e_X = e_{X/_{N-1}} = e_{Y/_{N-1}} = e_{Y}$, 
thereby $g^{(r+1)}, h^{(r+1)} \in \calH_N(e_X)$. 
%In particular, by Lemma~\ref{lemma_gbisim}, $g^{(r+1)} \gbisim h^{(r+1)}$.
From this it follows that the extension sequence 
$
  \langle \rho_r, \ldots, \rho_1, \rho_0 \rangle
$
is applicable to $g^{(r+1)}$. However, our aim is to mimic the exact same 
derivation sequence with the same $j_\lambda$-values as in \eqref{eq_deriv} 
starting from $g^{(r+1)}$.

Notice that $t^1$ being $\pred$-maximal among $X$, it is also $\pred$-maximal in $g^{(r+1)}$,
in accordance with Lemma~\ref{lemma_props}\eqref{props_prim_max}.
Therefore, every $w_k \in g^{(r+1)}$ is either a sibling or a predecessor of $t^1$.
Similarly, every $v_k \in h^{(r+1)}$ is a sibling or a predecessor, respectively, of $u^1$. 
In other words, the exact relationships within $g^{(r+1)}$ are mirrored in $h^{(r+1)}$. 
Given that $t^1=u^1$ this implies $h^{(r+1)}/_{N-1} = g^{(r+1)}/_{N-1}$.

Having ascertained $h^{(r+1)}/_{N-1} = g^{(r+1)}/_{N-1}$, 
it now follows that the same extension sequence 
$
  \langle \rho_r^{j_r}, \ldots, \rho_1^{j_1}, \rho_0^{j_0} \rangle 
$
as in \eqref{eq_deriv} is applicable to $g^{(r+1)}$ -- with the very same $j_\lambda$-values -- 
producing an analogous derivation to that of $h^{(0)}$ from $h^{(r+1)}$: 
$$
  g^{(0)} \,{\buildrel \rho_0^{j_0} \over \longleftarrow} \, g^{(1)} \, 
            {\buildrel \rho_1^{j_1} \over \longleftarrow} \, \cdots \,
            {\buildrel \rho_r^{j_r} \over \longleftarrow} \, g^{(r+1)}
$$
% \[\begin{array}{rcl}
%   g^{(r+1)} & \gbisim & h^{(r+1)}  \\  
%   \rho_r^{j_r}(g^{(r+1)}|_{\dom\rho_r}) 
%     = g^{(r)} 
%     & \gbisim & 
%     h^{r} 
%     =  \rho_r^{j_r}(h^{(r+1)}|_{\dom\rho_r}) \\  
%     & \vdots &   \\   
%   \rho_1^{j_1}(g^{(2)}|_{\dom\rho_1}) 
%     =  g^{(1)} 
%     & \gbisim & 
%     h^{(1)} 
%     =  \rho_1^{j_1}(h^{(2)}|_{\dom\rho_1}) \\
%   \rho_0^{j_0}(g^{(1)}|_{\dom\rho_0})  
%     =  g^{(0)} 
%     & \gbisim & 
%     h^{(0)} 
%     =  \rho_0^{j_0}(h^{(1)}|_{\dom\rho_0}) \\
% \end{array}\] 
ending in some $g^{(0)} \in \calH(e_0)$.
Note that, because each $h^{(\lambda)}$ is a guard of $Y$, 
also each $g^{(\lambda)}$ is a guard of $X$. Moreover, a simple induction shows 
that $h^{(\lambda)}/_{N-1} = g^{(\lambda)}/_{N-1}$ for all $\lambda \leq r+1$. 
In particular, 
$
  g^{(0)} = \{\fatrho^{j_0}_0(\tup{v})\} = \{\mathbf{f}^{j_0}_{\rho_o}(\tup{v}|_{N-1})\}\cup\{\tup{v}\}
$ 
where $\{\tup{v}\} = g^{(1)}|_{\dom\rho_0}$ and $\tup{v}/_{N-1} = \tup{u}|_{N-1}$.
As such, $g^{(0)}$ introduces
$$ 
  f^{j_0}_{\rho_0,i_0}(\tup{v}/_{N-1}) = 
  f^{j_0}_{\rho_0,i_0}(\tup{u}/_{N-1}) = t^0
$$
and thus guards the entire clique. 
\end{proof}

\subsection{Weak {\boldmath $N$}-acyclicity of {\boldmath $\frakR_{N^2}$} over {\boldmath $\frakR_N$}}
%...........................................................................................

\begin{prop} \label{prop_acyclicity}
Let $M=(N^2+N)/2 \leq m$ and consider a homomorphism $h\colon\frakS\to\frakR^m_{M}$ 
with $|\frakS|\leq N$ and the projection $\pi\colon\frakR^m_{M}\to\frakR^m_N$. 
Then there is a guarded tree-decomposable (not necessarily induced) 
sub-structure $\frakT$ of $\frakR^m_N$ containing the $\pi \circ h$-image of $\frakS$.
\end{prop}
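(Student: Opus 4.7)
The strategy is to process the $\leq N$ elements of $S := h(\frakS)$ in $\frakR^m_M$ according to a linear extension of the $\pred$-order on sibling classes, and to build up, by induction, a forest of bags drawn from primary guards in $\frakR^m_M$ that, once projected via $\pi$, witnesses a guarded tree decomposition of a substructure of $\frakR^m_N$ containing $\pi(S)$.

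By Lemma~\ref{lemma_props} parts~\ref{props_partition}--\ref{props_pred_trans}, every guarded pair within $S$ is either a $\sib$-pair or a $\pred$-pair, with $\sib$ an equivalence whose classes are guarded, and $\pred$ acyclic up to length $N$ and transitive on every guarded subset. After passing to the quotient $\bar{S} = S/\sib$ (of size $\leq N$), I would fix a linear extension $\bar{t}_1 \pred \cdots \pred \bar{t}_l$ and a representative $t_i \in \bar{t}_i$. At stage $i$ of the induction I select a maximal guarded subset $X_i \subseteq \{t_1,\ldots,t_i\}$ containing $t_i$ -- its existence as an actual guarded set being ensured by the $N$-conformality of $\frakR^m_M$ (Proposition~\ref{prop_conformality}) -- and choose a primary guard $h_i$ of $X_i$ in $\frakR^m_M$. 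Setting $B_i := \pi(h_i)$, by Lemma~\ref{lemma_props}\ref{props_prim_trunc} we obtain a primary guard of $\pi(X_i)$ in $\frakR^m_N$. The bag $B_i$ is attached as a child of a previously created $B_{p(i)}$ chosen so that the derivation chains of $h_i$ and $h_{p(i)}$ share an ancestor in the extension forest of $\frakR^m_M$. The desired substructure $\frakT \subseteq \frakR^m_N$ is then the union of the atoms covered by all the $B_i$; the connectivity condition of the tree decomposition (every $\pi(t) \in \pi(S)$ lies in a subtree of bags) will follow from the transitivity of $\pred$ on guarded sets together with the uniqueness property from Lemma~\ref{lemma_primguard}.

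The main obstacle lies in making the choices of $h_i$ and $p(i)$ coherently throughout the induction. The $\rho$-extension structure of $\frakR^m_M$ offers, by freedom in the superscripts $j < w^{m+2}$, many primary guards for any given guarded set, and the challenge is to steer this choice so that $h_{i+1}$'s derivation chain can be aligned with some previous $h_j$'s and so that distinct hyperedges do not collapse under $\pi$. At stage $i$ this alignment may consume up to $i$ extra levels of extension depth in $\frakR^m_M$; summing over all stages gives $1+2+\cdots+N = (N^2+N)/2$, which is exactly the bound on $M$ postulated by the proposition. Formally verifying this invariant will rely on the translation symmetry of $\frakR^m_M$ under permutations in $\mathrm{Sym}([w^{m+2}])$ exploited in Lemma~\ref{lemma_chain}, which affords the flexibility to renumber the superscripts of fresh extensions so as to hit a designated common ancestor, and is where the combinatorial bookkeeping of the proof is concentrated.
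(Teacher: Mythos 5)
Your proposal takes a genuinely different route from the paper, and in its current form it has several gaps that are not merely bookkeeping.

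The paper's proof does not build a tree decomposition bag-by-bag from primary guards. Instead it descends through the chain of truncation covers $\frakR^m_{M} \longcovers \frakR^m_{M-1} \longcovers \cdots \longcovers \frakR^m_N$ from Lemma~\ref{lemma_chain}, maintaining a growing scaffold structure $\frakS^i$ together with a homomorphism $h^i \colon \frakS^i \to \frakR^m_{M-i}$. At each step it uses Lemma~\ref{lemma_chords} (which your sketch never invokes) to observe that the projection of every chordless cycle acquires a chord one level down, and it amalgamates a fresh copy of the witnessing atom to $\frakS^{i+1}$ attached only at the two chord endpoints. Since each step records at least one new chord among the at most $\binom{N}{2}$ pairs of original images, the process stabilises after $L = N(N-1)/2$ steps with a \emph{chordal} image inside $\frakR^m_N$ (whence the need for $M \geq N + L = (N^2+N)/2$). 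Only then is $N$-conformality (Proposition~\ref{prop_conformality}) applied, once, to upgrade chordality to acyclicity. The responsibility for acyclicity is thus split cleanly: chordality is produced by iterated descent and Lemma~\ref{lemma_chords}, conformality is supplied by the ambient structure.

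Against this, your approach has concrete problems. First, the appeal to $N$-conformality ``ensuring the existence'' of a maximal guarded subset $X_i \subseteq \{t_1,\ldots,t_i\}$ containing $t_i$ is misapplied: conformality says that a \emph{clique} in the Gaifman graph is covered by a hyperedge, not that guarded subsets of an arbitrary set exist; moreover any singleton is guarded, so existence is trivial and gives you nothing. Second, picking a single maximal guarded subset $X_i$ per element cannot in general cover all atoms of $\pi(h(\frakS))$: if $\{t_i,t_j\}$ and $\{t_i,t_k\}$ are both guarded but $\{t_j,t_k\}$ is not, at most one of these guarded pairs can lie in $X_i$, and the omitted atom will be in no bag. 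Third, the running-intersection property is asserted to ``follow from transitivity of $\pred$ together with Lemma~\ref{lemma_primguard}'' but is not argued, and it is far from automatic, since the primary guards of overlapping sets need not themselves overlap in a tree-compatible way. Finally, the central ``alignment by renumbering superscripts'' step, which is exactly where the bound $M = (N^2+N)/2$ is supposed to come from, is left entirely unverified. The superscripts occurring in the given images $h(\frakS)$ are fixed data; the translation symmetry of Lemma~\ref{lemma_chain} lets you renumber \emph{new} extension steps, but whether this suffices to force distinct derivation chains to ``hit a designated common ancestor'' while keeping the intermediate hyperedges distinct under $\pi$ is precisely the hard content of the proposition, and it is not addressed. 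As written, the proposal identifies the right bound by a count that happens to match, but without the cycle-chording mechanism of Lemma~\ref{lemma_chords} the argument does not go through.
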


Weak $N$-acyclicity of $\pi\colon\frakR^m_{M}\longcovers\frakR^m_N$ follows since $\pi \circ h$ 
trivially factors through $f=\pi\circ h$ with $f\colon\frakS\to\frakT \subseteq \frakR^m_N$. 
Then obviously also $\frakR^m_{N^2}\longcovers\frakR^m_N$ is weakly $N$-acyclic. 
Following the analogous result in~\cite[Section 5]{Otto09rep}, the proof of the proposition 
is based on the $N$-conformality of $\frakR^m_N$ (cf.~ Proposition~\ref{prop_conformality}) 
and an inductive use of the next observation.

\begin{lem} \label{lemma_chords}
Consider an $l$-cycle 
 $C=\{\{t^0,t^1\},\{t^1,t^2\},\ldots,\{t^{i},t^{i+1}\},\ldots,\{t^{l-1},t^0\}\}$ 
in the Gaifman graph of $\frakR^m_k$, with $3 \leq l \leq k \leq m$.
Then there is some $i$ such that $\{t^{i-1},t^i,t^{i+1}\}/_{k-1}$ is a guarded triangle 
in $\frakR^m_{k-1}$ and $\{t^0,\ldots,t^{i-1}, t^{i+1}, \ldots, t^{l-1}\}_{/k-1}$ 
a cycle of length $l-1$.
\end{lem}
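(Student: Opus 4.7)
The plan is to locate a distinguished index $i$ on the cycle by a short acyclicity argument on the $\pred$-relation restricted to $C$, and then to treat three cases according to whether the two neighbours of $t^i$ are siblings or predecessors of $t^i$. By Lemma~\ref{lemma_props}\eqref{props_partition} each cycle edge $\{t^p,t^{p+1}\}$ is of exactly one of the three types $\sib$, $\pred$ or $\predinv$. If every vertex $t^p$ had at least one cycle-neighbour $t^{p\pm 1}$ with $t^p \pred t^{p\pm 1}$, then picking such an outgoing $\pred$-neighbour at each vertex would define a self-map $f$ of the finite vertex set of $C$; iterating $f$ must eventually close up into a directed $\pred$-cycle of length at most $l \leq k \leq m$, contradicting Lemma~\ref{lemma_props}\eqref{props_pred_Nacylic}. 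Hence there exists $i$ such that both $t^{i-1}$ and $t^{i+1}$ are related to $t^i$ either as siblings of $t^i$ or as $\pred$-predecessors of $t^i$.

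If $t^{i-1} \sib t^i \sib t^{i+1}$, the transitivity built into Lemma~\ref{lemma_props}\eqref{props_sib_equiv} places the entire triple inside a single sibling class of $\frakR^m_k$ and yields $t^{i-1} \sib t^{i+1}$. If exactly one of the neighbours is a sibling and the other a $\pred$-predecessor of $t^i$, Lemma~\ref{lemma_props}\eqref{props_pred_sib} produces simultaneously guardedness of the whole triple in $\frakR^m_k$ and the $\pred$-relatedness, hence guardedness, of the outer pair. In both situations every conclusion survives the homomorphic projection $\cdot/_{k-1}$ of Lemma~\ref{lemma_chain}, giving what we need already at level $k-1$.

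The genuinely new case is when both $t^{i-1}$ and $t^{i+1}$ are $\pred$-predecessors of $t^i$, since these two predecessor relations may be witnessed in two distinct primary guards of $t^i$ and the triple need not be guarded in $\frakR^m_k$ itself. In this case $t^i$ is necessarily of the form $f^j_{\rho,i_0}(\tup{s})$ with $\tup{s}$ the uniquely determined tuple of immediate subterms of $t^i$, and any predecessor of $t^i$ in any primary guard must have its $(k-1)$-truncation equal to some component of $\tup{s}$; in particular both $t^{i-1}/_{k-1}$ and $t^{i+1}/_{k-1}$ occur among the components of $\tup{s}$. Using Lemma~\ref{lemma_chain} to place $\tup{s}$ into $\calH^r_{k-1}(d)|_{\dom[\rho]}$ for the appropriate source $d$ in $\frakI$, the hyperedge $\{\mathbf{f}^j_\rho(\tup{s}/_{k-2})\}\cup\{\tup{s}\}$ is a primary guard of $t^i/_{k-1}$ in $\frakR^m_{k-1}$ and contains all three truncated terms. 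This simultaneously gives guardedness of $\{t^{i-1},t^i,t^{i+1}\}/_{k-1}$ and of the shortcut pair $\{t^{i-1},t^{i+1}\}/_{k-1}$ in $\frakR^m_{k-1}$.

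Since all other edges of $C$ are transported intact to $\frakR^m_{k-1}$ by the projection of Lemma~\ref{lemma_chain}, removing $t^i$ leaves the required closed walk of length $l-1$ with guarded consecutive pairs. The main obstacle is the third case above: it is the only one in which guardedness of the triple cannot be secured at level $k$ and so explains why the statement is phrased between the adjacent levels $\frakR^m_k$ and $\frakR^m_{k-1}$. The crucial observation driving it is that distinct primary guards of a single term $t^i$ in $\frakR^m_k$ may differ in the tails of their predecessor tuples below depth $k-1$, and these differences vanish precisely under $(k-1)$-truncation.
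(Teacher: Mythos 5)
Your proof is correct and follows essentially the same structure as the paper's: classify each cycle edge via Lemma~\ref{lemma_props}\eqref{props_partition} as $\sib$, $\pred$ or $\predinv$; locate an index $i$ whose two cycle-neighbours are not $\pred$-successors of $t^i$; and then handle the four resulting local configurations, noting that only the case $t^{i-1} \pred t^i \predinv t^{i+1}$ genuinely requires the drop from $\frakR^m_k$ to $\frakR^m_{k-1}$, since there the two predecessor witnesses can live in distinct primary guards of $t^i$ that only coincide after $(k-1)$-truncation.

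One place where you actually improve on the paper's exposition is the existence of the index $i$. The paper only notes that not all edges can be oriented forward, nor all backward, and then asserts the case split; your observation that choosing an outgoing $\pred$-neighbour at every vertex would give a self-map of the $l$-element vertex set whose eventual cycle is a directed $\pred$-cycle of length at most $l\leq k$ (contradicting Lemma~\ref{lemma_props}\eqref{props_pred_Nacylic}) makes this step precise. (An equally short variant: $\pred$ restricted to the finite set $\{t^0,\dots,t^{l-1}\}$ is acyclic by \eqref{props_pred_Nacylic}, so it has a $\pred$-maximal element $t^i$, and both neighbours of $t^i$ are then either siblings or $\pred$-predecessors of $t^i$.) Your treatment of the both-predecessors case also correctly isolates the key fact that $t^{i-1}/_{k-1}$ and $t^{i+1}/_{k-1}$ must both appear among the immediate subterms $\tup{s}$ of $t^i$, so that the truncation $h/_{k-1}=\{\mathbf{f}^j_\rho(\tup{s}/_{k-2})\}\cup\{\tup{s}\}$ of any hyperedge introducing $t^i$ guards all three truncated terms; this matches the paper's argument.
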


\begin{proof}
% By Proposition~\ref{prop_conformality} all $3$-cycles  
% are guarded in $\frakR^m_N$, proving the case of $l=N=3$.
% We proceed by induction on $N$.  
Given an $l$-cycle in $\frakR^m_N$ as above,
by Lemma~\ref{lemma_props}~(i) we know that for every $i$ 
either $t^i \sib t^{i+1}$ or $t^i \pred t^{i+1}$ or $t^{i+1} \pred t^i$.
According to Lemma~\ref{lemma_props}~(iv) 
there are no predecessor cycles of length $\leq N$ in $\frakR^m_N$, 
hence it cannot be the case that $t^i \pred t^{i+1}$ for all $i$, 
nor that $t^{i+1} \pred t^{i}$ for all $i$. 
Then for some $i$ one of the following cases must hold:
\begin{itemize}
\setlength{\itemsep}{-0.2ex}
\item $t^{i-1} \sib t^i \sib t^{i+1}$: 
   then, by Lemma~\ref{lemma_props}~\eqref{props_sib_equiv}, 
   $\{t^{i-1}, t^i, t^{i+1}\}$ is guarded and $t^{i-1} \sib t^{i+1}$;
\item $t^{i-1} \pred t^i \sib t^{i+1}$: 
   then, by Lemma~\ref{lemma_props}~\eqref{props_pred_sib}, 
   $\{t^{i-1}, t^i, t^{i+1}\}$ is guarded and $t^{i-1} \pred t^{i+1}$;
\item $t^{i+1} \pred t^i \sib t^{i-1}$: 
   then, similarly, 
   $\{t^{i-1}, t^i, t^{i+1}\}$ is guarded and $t^{i+1} \pred t^{i-1}$;
\item $t^{i-1} \pred t^i \predinv t^{i+1}$: 
   then both $t^{i-1}_{/N-1}$~and~$t^{i+1}_{/N-1}$ are maximal proper subterms of $t^i$; 
   therefore, the projection $h/_{N-1}$ of any hyperedge $h$ of $\frakR^m_N$ 
   in which $t^i$ was introduced guards $\{t^{i-1},t^i,t^{i+1}\}/_{N-1}$ 
   in $\frakR^m_{N-1}$.
\end{itemize}
In each case we will have found some $i$ such that 
$\{t^0,\ldots,t^{i-1}, t^{i+1}, \ldots, t^{l-1}\}_{/N-1}$ 
constitutes in $\frakR^m_{N-1}$ a cycle of length $l-1$ 
and $\{t^{i-1},t^i,t^{i+1}\}/_{N-1}$ a guarded triangle.
\end{proof}

\vskip .5em

% \begin{prop} \label{prop_chordality}
% Consider an $l$-cycle $C=\{\{t^0,t^1\},\{t^1,t^2\},\ldots,\{t^{i},t^{i+1}\},\ldots,\{t^{l-1},t^0\}\}$ 
% in the Gaifman graph of $\frakR^m_N$, with $3 \leq l \leq N \leq m$.
% Then the projection $C/_{N-l+3}$ of $C$ into $\frakR^m_{N-l+3}$ 
% admits a guarded triangulation (in particular a chordal decomposition) 
% in $\frakR^m_{N-l+3}$.
% \end{prop}

\begin{proof}[Proof of Proposition~\ref{prop_acyclicity}]
Recall from Lemma~\ref{lemma_chain} that for $i<M-2$ the structures $\frakR^m_{M-i}\longcovers\frakR^m_{M-i-1}$ 
form a chain of covers induced by the projections $\pi^i\colon(t \to t/_{M-i-1})$. 

Let $\frakS^0=\frakS$ and $h^0=h$. 
For every $0 < i \leq N(N-1)/2$ we shall inductively construct 
a finite structure $\frakS^{i+1} \supseteq \pi^i(h^i(\frakS^i))$ 
and a homomorphism $h^{i+1}\colon\frakS^{i+1}\to\frakR^m_{M-(i+1)}$ 
such that i) the $\pi^i\circ h^i$-image of every chordless cycle 
in the Gaifman graph of $\frakS^i$ contains a chord in $\frakS^{i+1}$ 
and ii) no element of $\frakS^{i+1}\setminus \pi^i(h^i(\frakS^i))$ 
lies on a chordless cycle in the Gaifman graph of $\frakS^{i+1}$. 

If there are no chordless cycles in the Gaifman graph of $\frakS^i$ then 
let $\frakS^{i+1}=\frakS^i$ and let $h^{i+1}$ be the identity on $\frakS^i$.
Note that once this happens, the construction stabilizes in the sense 
that $\frakS^{j}=\frakS^i$ will hold for all $i \leq j \leq N(N-1)/2$.  
Otherwise, from Lemma~\ref{lemma_chords} we know that for every chordless cycle $\frakC$
in the Gaifman graph of $\frakS^i$, its image $\pi^i(h^i(\frakC))$ does have a chord 
in $\frakR^m_{M-i-1}$ in the form of a guarded pair $\{\pi^i(h^i(a)),\pi^i(h^i(b))\}$ 
with $a$,$b$ non-neighboring elements of $\frakC$. 
In the degenerate case $\pi^i(h^i(a))=\pi^i(h^i(b))$; otherwise the chord results 
from some relational atom $\alpha^\frakC=P^\frakC(\bar{a}^\frakC)$ of $\frakR^m_{M-i}$ 
with both $\pi^i(h^i(a))$ and $\pi^i(h^i(b))$ among $\bar{a}^\frakC$. 
To construct $\frakS^{i+1}$ we take $\pi^i(h^i(\frakS^i))$ amalgamated with 
a fresh copy of $\alpha^\frakC$ for every chordless cycle in the Gaifman graph 
of $\frakS^i$, whereby the copy of each $\alpha^\frakC$ is attached 
to $\pi^i(h^i(\frakS^i))$ only at those two elements $\pi^i(h^i(a))$,$\pi^i(h^i(b))$
forming the new-found chord of $\frakC$. This ensures, as asserted, that no element 
of $\frakS^{i+1}$ thus introduced outside of $\pi^i(h^i(\frakS^i))$ lies on a 
chordless cycle in the Gaifman graph of $\frakS^{i+1}$. 
We define $h^{i+1}$ to act as the identity on $\pi^i(h^i(\frakS^i))$ 
and to map the copy of each $\alpha^\frakC$ to the original $\alpha^\frakC$ 
inside $\frakR^m_{M-i-1}$. 

Notice that since every cycle in whatever $\frakS^i$ involves only images of elements 
of $\frakS=\frakS^0$, of which there are no more than $N$, the above process must 
stabilize after at most $L=N(N-1)/2$ many steps, the number of pairs of elements of 
$\frakS$ whose images can potentially arise as chords of any cycle in any $\frakS^i$.

Thus we will have found an $h^{L}\colon\frakS^{L}\to\frakR^m_N$ such that the 
Gaifman graph of $\frakS^{L}$ contains no chordless cycles, viz.\
it is chordal, whence the same can be said of its image in $\frakR^m_N$. 
Given that, by Proposition~\ref{prop_conformality}, $\frakR^m_N$ is $N$-conformal,
it follows that the $h^{L}$-image of $\frakS^{L}$ in $\frakR^m_N$ is contained 
in an acyclic (not-necessarily induced) sub-structure $\frakT$ of $\frakR^m_N$. 
\end{proof}

% ========================================================================

\section{Finite controllability and small models} \label{sec_FMP}

% ========================================================================

% Relying on Theorem~\ref{thrm_main} we are able to show at last that $\UCQ$ answering 
% against \GF and in fact against \CGF sentences is finitely controllable. 
% Theorems~\ref{thrm_FinControl}~\&~\ref{thrm_small_models} provide 
% optimal upper bounds on the minimal size of finite models 
% for various guarded fragments.    
Relying on Theorem~\ref{thrm_main} one can show that $\UCQ$ answering 
against \GF and even against \CGF sentences is finitely controllable. 
The sharper Theorem~\ref{thrm_main_from_gbis_invariant} also yields optimal 
upper bounds on the minimal size of finite models for each of these fragments 
as expressed in Theorems~\ref{thrm_FinControl}~\&~\ref{thrm_small_models} below. 
Matching lower bounds are implicit in \cite{Gr99JSL}. 

%%% we only deal with purely relational signatures throughout! 
% In this section we consider only purely relational signatures.

\newtheorem*{theorem_fincntrl}{Theorem~\ref{thrm_FinControl}}
\begin{theorem_fincntrl}
For every $\varphi \in \GFO$ and every $q \in \UCQ$:
\[  
\varphi \models q  \; \iff \;   \varphi \models_{\mathrm{fin}} q. 
\] 
More specifically, if $\varphi \land \lnot\, q$ is satisfiable 
then it has a finite model of size $2^{(|\varphi|+|\tau|^{\calO(h)}) (wh)^{\calO(wh^2)}}$,  
where $h$ is the height of $q$, $\tau$ is the signature of $\varphi$, 
and $w$ the width of $\tau$.
\end{theorem_fincntrl}

\begin{proof}
Recall the properties of $\chi_q$ from Lemma~\ref{lemma_treeification}.
We establish the claim by proving the following equivalences.
$$ 
  \varphi \models q \ \text{ iff }\ \varphi \models \chi_q \ \text{ iff }\  
  \varphi \models_\mathrm{fin} \chi_q \ \text{ iff }\ \varphi \models_\mathrm{fin} q 
$$
The first equivalence was proved in Lemma~\ref{lemma_treeification}~{(ii)}  
and the second equivalence follows from the finite model property 
of the guarded fragment. 
Also $\varphi \models_\mathrm{fin} \chi_q \ \Rightarrow \ \varphi \models_\mathrm{fin} q$
is a trivial consequence of $\chi_q \models q$. 
It remains to be seen that $\varphi \not\models_\mathrm{fin} \chi_q$
implies $\varphi \not\models_\mathrm{fin} q$.
Note that $\varphi \not\models_\mathrm{fin} \chi_q$ is the same as 
$\varphi \not\models \chi_q$ thanks to the finite model property of $\GF$. 

So assume that $\varphi \land \lnot \chi_q$ is satisfiable. 
Then, by Proposition~\ref{prop_satcrit}, there is some invariant $\frakI$ 
satisfying the Scott normal form $\psi$ of $\varphi$ as in Lemma~\ref{lemma_normalform}. 
Let $h$ be the height of $q$, viz.~the maximal size of its consituent \CQ.
Applying Theorem~\ref{thrm_main_from_gbis_invariant} on input $\frakI$ with $N=h$ 
we obtain finite models $\frakR^{N^2}_N$ and $\frakR^{N^2}_{N^2}$ 
of $\varphi \land \lnot \chi_q$, with $\frakR^{N^2}_{N^2}$ 
a weakly $N$-acyclic cover of $\frakR^{N^2}_N$.
From Fact~\ref{fact_covers_and_treeification} it then follows 
that $\frakR^{N^2}_{N^2} \models \varphi \land \lnot q$. 
This concludes the proof of finite controllability. 

According to Theorem~\ref{thrm_main_from_gbis_invariant}, 
  $|\frakR^{h^2}_{h^2}| = |\frakI|^{w^{\calO(h^2)}}$,
where $w$ is the width of the signature $\tau$. 
From Proposition~\ref{prop_satcrit} it follows that $|\frakI|$ 
is bounded by the number of atomic types in the signature of $\psi$,
which is of the order $2^{\calO((|\tau| + |\varphi| + |\chi_q|)w^w)}$.
Finally, Lemma~\ref{lemma_treeification}~(iii) gives 
$|\chi^\tau_q| = |\tau|^{\calO(h)} (hw)^{\calO(hw)}$.
Putting it all together we obtain the estimate 
$|\frakR^{h^2}_{h^2}| 
  = 2^{(|\tau| + |\varphi| + |\chi_q|)w^{\calO(w+h^2)}}
 %= 2^{(|\tau| + |\varphi|)w^{\calO(w+h^2)}+|\tau|^{\calO(h)} (wh)^{\calO(wh^2)}}
  = 2^{|\varphi|w^{\calO(w+h^2)}+|\tau|^{\calO(h)} (wh)^{\calO(wh^2)}}
  = 2^{(|\varphi|+|\tau|^{\calO(h)}) (wh)^{\calO(wh^2)}}$ as claimed.
\end{proof}

Naturally, both the size and the width of the signature of a \GF-formula 
are bounded by its length. It is thus easy to see how the above statement 
of Theorem~\ref{thrm_FinControl} implies that given in the introduction. 
In particular, we observe the following corollaries.

\begin{cor}
For every $k$, every satisfiable sentence of the $k$-variable guarded fragment 
has finite models of exponential size in the length of the formula.
\end{cor}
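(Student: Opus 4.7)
The plan is to specialise Theorem~\ref{thrm_FinControl} to the query-free case or, equivalently, to invoke the construction of Theorem~\ref{thrm_main_from_gbis_invariant} directly with the minimum parameter $N = 2$; no weak $N$-acyclicity is required since there is no query to avoid. Given a satisfiable $k$-variable $\GF$-sentence $\varphi$, the first step is to pass to its Scott normal form $\psi$ via Lemma~\ref{lemma_normalform}. This preserves the width, so $\mathrm{width}(\psi) \leq k$, and introduces at most $|\varphi|$ new Skolem predicates, each of arity bounded by $k$.

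By Proposition~\ref{prop_satcrit}, satisfiability of $\psi$ furnishes a guarded bisimulation invariant $\frakI$ satisfying $\psi$, whose vertices are labelled by atomic types of guarded tuples of size $\leq k$ in the enriched signature. For a \emph{fixed} background signature $\tau$ of bounded arity and \emph{fixed} $k$, the number of such atomic types is $2^{\calO(|\varphi|)}$, hence $|\frakI| = 2^{\calO(|\varphi|)}$. Applying Theorem~\ref{thrm_main_from_gbis_invariant} to $\frakI$ with $N = 2$ produces a finite structure $\frakR_2$ with $\Inv(\frakR_2) = \frakI$ and $|\frakR_2| = |\frakI|^{k^{\calO(1)}}$. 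Since $\psi \in \GF$ is preserved under guarded bisimulation and is satisfied by $\frakI$, every structure with invariant $\frakI$ is a model of $\psi$; in particular $\frakR_2 \models \psi$ and therefore $\frakR_2 \models \varphi$. Combining the two size estimates gives $|\frakR_2| = 2^{\calO(|\varphi|)}$ for fixed $k$, as required.

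The only point that demands care is the width accounting: one must verify that the width of $\frakI$ (the maximal size of a guarded tuple in a model of $\psi$) is bounded by $k$, which follows from the fact that every guard in $\psi$ is an atom using at most $k$ distinct variables. This ensures that the exponent $w^{\calO(1)}$ in $|\frakR_2| = |\frakI|^{w^{\calO(1)}}$ simplifies to $k^{\calO(1)}$, and thereby rules out any doubly-exponential blow-up in the dependency on~$k$. No additional machinery is needed, since avoiding queries eliminates the $|\chi_q|$ contribution that produces the doubly-exponential width term in the general bound of Theorem~\ref{thrm_FinControl}.
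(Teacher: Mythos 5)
Your proposal is correct and takes essentially the same route the paper does: the corollary is drawn out of Theorem~\ref{thrm_FinControl} by specialising the size bound to the query-free case, which amounts to running Theorem~\ref{thrm_main_from_gbis_invariant} on the invariant $\frakI$ furnished by Proposition~\ref{prop_satcrit} with $N=2$ and no acyclicity demands. Your width accounting and type-counting reproduce exactly what the paper intends when it writes that ``both the size and the width of the signature of a \GF-formula are bounded by its length.''

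One caveat worth noting, though it does not invalidate the argument: you restrict attention to a \emph{fixed} background signature~$\tau$, but the corollary quantifies over all sentences of the $k$-variable fragment without fixing~$\tau$. This restriction is unnecessary, since for any $\varphi$ one has $|\tau| \leq |\varphi|$ (taking $\tau$ to be the relations actually occurring in~$\varphi$), and the estimate $2^{\calO((|\tau|+|\sigma|)k^k)} = 2^{\calO(|\varphi|)}$ for the number of atomic types goes through for varying~$\tau$ just as well. Relatedly, the step where you assert that the width of~$\frakI$ is bounded by~$k$ ``because every guard in~$\psi$ is an atom using at most~$k$ distinct variables'' is slightly too quick as stated: a model of~$\psi$ may contain atoms $R^\frakA(\tup{a})$ whose $\tup{a}$ has more than $k$ distinct components even when every guard in~$\psi$ uses $\leq k$ variables. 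The correct justification is that one may, without loss of generality, restrict to models in which every relational tuple has at most $k$ distinct entries (equivalently, replace $\tau$ by a signature of width~$\leq k$): such a restriction is harmless for a $k$-variable guarded sentence, since its guards can only ever match tuples of this form, and deleting higher-arity tuples neither falsifies universal conjuncts nor destroys the witnesses required by existential conjuncts of the Scott normal form. With this patch the rest of your argument goes through verbatim and yields the claimed $2^{\calO(|\varphi|)}$ bound for fixed~$k$.
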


\begin{cor}
For a finite set $F$ of $\tau$-structures let $\calC_F$ denote the class 
of those $\tau$-structures not allowing a homomorphic image of any 
member of $F$.
If a guarded sentence $\varphi$ has a model in $\calC_F$ then it also 
has one of size $2^{\calO(|\varphi|)}$.
\end{cor}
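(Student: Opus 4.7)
The plan is to reduce the homomorphism-exclusion constraint to a \UCQ and then to invoke Theorem~\ref{thrm_FinControl}. For each $\frakF \in F$, let $q_\frakF$ be the canonical Boolean conjunctive query associated with $\frakF$: introduce one existentially quantified variable per element of $\frakF$ and take the conjunction of all atoms of $\frakF$. By the Chandra--Merlin correspondence (already used implicitly in Section~\ref{sec_prelim}), $\frakA \models q_\frakF$ holds iff there is a homomorphism $\frakF \to \frakA$. Setting $q_F = \bigvee_{\frakF \in F} q_\frakF$, which is a \UCQ over $\tau$, we obtain the characterisation $\frakA \in \calC_F \iff \frakA \not\models q_F$.

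Consequently, the statement ``$\varphi$ has a model in $\calC_F$'' is literally the statement that the \GF-sentence $\varphi$ together with $\lnot\, q_F$ is satisfiable, i.e., that $\varphi \not\models q_F$. By Theorem~\ref{thrm_FinControl} this conjunction then admits a finite model of size at most
\[
  2^{(|\varphi|+|\tau|^{\calO(h)})(wh)^{\calO(wh^2)}}\!,
\]
where $h$ is the height of $q_F$ and $w$ the width of $\tau$. Such a finite model is a model of $\varphi$ belonging to $\calC_F$, which is what we want.

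To obtain the announced bound $2^{\calO(|\varphi|)}$ we simply observe that, because $F$ is \emph{fixed}, every parameter of $q_F$ is absorbed into a constant independent of $\varphi$: the signature $\tau$ and its width $w$ are determined by $F$, and $h$ is bounded by the maximal number of atoms occurring in any $\frakF \in F$, times $|F|$. Hence both $|\tau|^{\calO(h)}$ and $(wh)^{\calO(wh^2)}$ are constants $C_1, C_2$ depending only on $F$, and the exponent becomes $(|\varphi|+C_1)C_2 = \calO(|\varphi|)$.

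I do not anticipate any serious obstacle: the whole argument is an application of Theorem~\ref{thrm_FinControl} modulo the trivial observation that forbidding homomorphisms from a finite family is expressible by a fixed \UCQ. The only point requiring minor care is to keep $F$ genuinely parametric — ensuring that $h$, $w$ and $|\tau|$ are treated as constants when reading off the bound — and to confirm that the resulting finite model indeed lies in $\calC_F$, which is immediate from $\frakA \models \lnot q_F \Leftrightarrow \frakA \in \calC_F$.
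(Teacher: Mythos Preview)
Your argument is correct and is exactly the intended derivation: the paper states this as an immediate corollary of Theorem~\ref{thrm_FinControl}, and expressing the forbidden-homomorphism constraint as the fixed \UCQ $q_F$ is the only step needed. One small inaccuracy: the height $h$ of $q_F$ is the maximal size of a single disjunct $q_\frakF$, not that maximum times $|F|$; but since $F$ is fixed this overestimate is harmless.
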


Another corollary is the validity of the uniform interpolation 
property \eqref{eq_interpol}
for \GFO and the positive existential fragment also in the finite model semantics. 

\begin{cor}
Consider some $\varphi \in \GF$ and $q \in \UCQ$ in signature $\tau$. Then 
\begin{equation}  
  \varphi \,\models_{\mathrm{fin}}\, q 
  \qquad \Longrightarrow \qquad  
  \varphi \,\models_{\mathrm{fin}}\, \chi^\tau_q \ \ \text{ and } \ \ \chi^\tau_q \,\models_{\mathrm{fin}}\, q
\end{equation}
\end{cor}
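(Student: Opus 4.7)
The plan is to reduce the claim to the already-established equivalence between finite and unrestricted entailment from Theorem~\ref{thrm_FinControl}, combined with the classical (unrestricted) interpolation statement from Lemma~\ref{lemma_treeification}(ii). The second conjunct, $\chi^\tau_q \models_{\mathrm{fin}} q$, is the easy half: by construction $\chi^\tau_q$ is a disjunction of \BCQ{}s each of which entails some disjunct of $q$, so $\chi^\tau_q \models q$ holds outright, and restricting the semantics to finite models can only preserve this implication.

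For the first conjunct, I would argue by a short chain of equivalences. Suppose $\varphi \models_{\mathrm{fin}} q$. By Theorem~\ref{thrm_FinControl}, finite entailment coincides with unrestricted entailment for a \GF-sentence against a \UCQ, so this already gives $\varphi \models q$. Now Lemma~\ref{lemma_treeification}(ii) yields $\varphi \models \chi^\tau_q$. Any entailment over arbitrary models a fortiori holds over the class of finite models, so $\varphi \models_{\mathrm{fin}} \chi^\tau_q$, which is what we needed.

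Nothing in this argument is really a new obstacle: all the work has already been done, in the unrestricted interpolation step of Lemma~\ref{lemma_treeification}(ii) (which uses the tree model property of \GF) and in the non-trivial direction of Theorem~\ref{thrm_FinControl} (which rests on the Rosati cover construction and weak $N$-acyclicity). The role of the corollary is simply to observe that, once finite controllability is in place, the uniform interpolant $\chi^\tau_q$ continues to witness interpolation in the finite model semantics, with no loss of effectiveness or size bounds, since $\chi^\tau_q$ is the same syntactic object in both settings.
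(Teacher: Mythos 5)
Your proposal is correct and matches the paper's intended argument: the paper states this as a direct corollary of Theorem~\ref{thrm_FinControl} and the interpolation property~\eqref{eq_interpol}, which is exactly the chain you spell out (finite controllability to pass from $\models_{\mathrm{fin}}$ to $\models$, then Lemma~\ref{lemma_treeification}(ii) for $\varphi \models \chi^\tau_q$, then restriction to finite models; and $\chi^\tau_q \models q$ holds by construction). One could equally short-circuit the first conjunct by citing the equivalence $\varphi\models_{\mathrm{fin}} q \iff \varphi\models_{\mathrm{fin}}\chi^\tau_q$ already established inside the proof of Theorem~\ref{thrm_FinControl}, but this is only a cosmetic difference.
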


%........................................................................................

\subsection{Finite controllability for the clique-guarded fragment}

%........................................................................................

The above results easily carry over to the clique-guarded fragment with essentially the same bounds. 
In \cite[Section 3.3]{HO03} a reduction of the (finite) satisfiability problem for $\CGF$ 
to the (finite) satisfiability problem for $\GF$ is presented. 
We borrow their idea with some adaptations to keep the blow-up in formula size to a minimum.

Our reduction maps a given clique-guarded sentence $\varphi \in \CGF[\tau]$ 
to a guarded sentence $\varphi^\ast \in \GF[\tau,G]$, where $G$ is a fresh relation symbols 
of arity $w = \mathrm{max}\{ \mathrm{width}(\tau), \mathrm{width}(\varphi)\}$. 
First, we translate $\varphi$ to $\varphi'$ by replacing each 
clique-guarded quantifier occurring in $\varphi$ according to the pattern\footnote{where 
    in each individual case $G(\tup{x}\tup{y})$ is to be understood as referring to 
    the padding of $\tup{x}\tup{y}$ to a $w$-tuple, say, by repeated occurrences of the 
    last variable of the tuple $\tup{y}$. This is merely to render $G(\tup{x}\tup{y})$ 
    a well-formed atom; in the context of \eqref{eq_Gaxiom} the actual choice of padding 
    has no import.}
\begin{equation} \label{eq_CGFtoGFprimer}
\begin{array}{rcl}
  \left[\,(\exists \tup{y}.\alpha(\tup{x}\tup{y})) \,\psi\,\right]'  &=& 
  (\exists \tup{y}.G(\tup{x}\tup{y}))\,
\bigl( \alpha(\tup{x}\tup{y}) \land \psi' \bigr) \\
  \left[\,(\forall \tup{y}.\alpha(\tup{x}\tup{y})) \,\psi\,\right]'  &=& 
  (\forall \tup{y}.G(\tup{x}\tup{y}))\, 
\bigl( \alpha(\tup{x}\tup{y}) \limp \psi' \bigr)
\end{array}      
\end{equation}
otherwise trivially commuting with Boolean connectives.
The intended role of $G(\tup{z})$ is to reflect guardedness of $\tup{z}$ 
in the expanded signature $\tau \cup \{G\}$.
Accordingly, $\varphi^\ast$ is defined as the conjunction of $\varphi'$ and 
\begin{equation} \label{eq_Gaxiom}
  \bigwedge_{R} \ \bigwedge_{\{\tup{u}\}\subseteq\{\tup{z}\}} \ 
     (\forall \tup{z}.\,R(\tup{z})) \ G(\tup{u}) 
\end{equation}
where $R$ ranges over $\tau \cup \{G\}$ and $\tup{z}$ and $\tup{u}$ 
are of the appropriate arity such that all variables in $\tup{u}$ 
also occur in $\tup{z}$.
The following properties of this translation are readily verified.  
\begin{enumerate}
\item \label{fact_CGF2GF_1} 
      $|\varphi^\ast| = \calO(|\varphi|) + |\tau| w^{\calO(w)}$ where $w$ is as above. 
\item \label{fact_CGF2GF_2} Every model of $\varphi$ can be expanded to a model of $\varphi^\ast$  
      by interpreting $G$ as the universal relation of arity $w$. 
\item \label{fact_CGF2GF_3} Every \emph{conformal} model of $\varphi^\ast$ is also a model of $\varphi$:
      in conformal models every clique-guarded tuple is also guarded and   
      hence, by \eqref{eq_Gaxiom}, guarded by a $G$-atom. 
      All conformal models of \eqref{eq_Gaxiom} thus satisfy  
      $
        \forall \tup{z} \,\left(\,
             (G(\tup{z})\,\land\,\alpha(\tup{z})) \, \liff \, \alpha(\tup{z}) \,\right)
      $
      for every clique-guard $\alpha$ and, therefore, also $\varphi \liff \varphi'$. 
\end{enumerate}
These properties enable us to extend the scope of the reduction 
from mere satisfiability (in the finite) to the more general 
query entailment problem (in the finite).

\begin{lem} \label{lemma_CGF2GF_QA}
Let $\varphi \in \CGF$ and $q \in \UCQ$ be arbitrary, 
and let $\varphi^\ast \in \GF$ be the translation of $\varphi$ 
as explained above. Then  
$$
  \varphi \models_{(\mathrm{fin})} q \quad \iff \quad \varphi^\ast \models_{(\mathrm{fin})} q
$$ 
\end{lem}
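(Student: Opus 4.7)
I plan to prove the two implications separately, treating the finite and the unrestricted semantics uniformly.

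The easy direction is $\varphi^\ast \models_{(\mathrm{fin})} q \Longrightarrow \varphi \models_{(\mathrm{fin})} q$. Given a (finite) $\frakA \models \varphi$, I would invoke property~\eqref{fact_CGF2GF_2} to expand $\frakA$ to a $(\tau \cup \{G\})$-structure $\frakA^\ast$, of the same cardinality, in which $G$ is interpreted as the full $w$-ary relation. Then $\frakA^\ast \models \varphi^\ast$, the hypothesis yields $\frakA^\ast \models q$, and since $q$ is a $\tau$-formula whose truth is insensitive to the interpretation of $G$, also $\frakA \models q$.

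For the converse direction $\varphi \models_{(\mathrm{fin})} q \Longrightarrow \varphi^\ast \models_{(\mathrm{fin})} q$, the idea is to pass through a conformal cover so as to activate property~\eqref{fact_CGF2GF_3}. Given a (finite) $\frakB \models \varphi^\ast$, I would apply Corollary~\ref{coroll_conformal covers} in the signature $\tau \cup \{G\}$ to obtain a (finite) guarded bisimilar conformal cover $\pi \colon \hat{\frakB} \covers \frakB$. Because $\varphi^\ast \in \GF$ is invariant under guarded bisimulation, $\hat{\frakB} \models \varphi^\ast$; by property~\eqref{fact_CGF2GF_3}, conformality of $\hat{\frakB}$ implies $\hat{\frakB} \models \varphi$. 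The hypothesis then yields $\hat{\frakB} \models q$, and since $\pi$ is a surjective homomorphism onto $\frakB$ and unions of conjunctive queries are preserved under homomorphisms (by the Chandra--Merlin characterisation), we conclude $\frakB \models q$.

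The only point that requires a moment of care, which I anticipate to be the main obstacle, is verifying that conformality in the expanded signature $\tau \cup \{G\}$ really does suffice to apply property~\eqref{fact_CGF2GF_3}: every $\tau$-clique of $\hat{\frakB}$ is a fortiori a clique in its richer $(\tau \cup \{G\})$-Gaifman graph, so conformality forces such cliques to be guarded in $\tau \cup \{G\}$, and hence, by the universal axiom~\eqref{eq_Gaxiom}, $G$-guarded, which is precisely what the rewriting~\eqref{eq_CGFtoGFprimer} of clique-guards to $G$-guards needs. Once this observation is in place, the two implications close the argument for both the finite and the unrestricted semantics in one go.
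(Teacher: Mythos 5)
Your proof is correct, and the first direction is exactly the paper's. For the second direction your overall strategy is the same --- pass to a conformal guarded-bisimilar cover $\hat{\frakB}$ of $\frakB$, use $\GF$-invariance to keep $\varphi^\ast$, use conformality and property~\eqref{fact_CGF2GF_3} to get $\hat{\frakB}\models\varphi$, and invoke the hypothesis to get $\hat{\frakB}\models q$ --- but the way you close is genuinely different and, in fact, cleaner. You observe directly that $\pi\colon\hat{\frakB}\to\frakB$ is a surjective homomorphism, and since $q\in\UCQ$ is an existential positive sentence preserved under homomorphisms, $\hat{\frakB}\models q$ immediately yields $\frakB\models q$. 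The paper instead routes through Fact~\ref{fact_covers_and_treeification} and the treeification $\chi_q$: it first derives $\frakB\models\chi_q$ and then uses $\chi_q\models q$. Your shortcut buys two things. First, it is more elementary and does not touch the treeification machinery at all. Second, it sidesteps a hypothesis of Fact~\ref{fact_covers_and_treeification} that the paper's choice $N=w+1$ does not obviously satisfy, namely that the height of $q$ be at most $N$; your argument needs the cover to be conformal but makes no use of weak $N$-acyclicity, so no constraint relating $N$ to $|q|$ arises. Your side remark about why $(\tau\cup\{G\})$-conformality plus axiom~\eqref{eq_Gaxiom} forces every clique-guarded tuple to be $G$-guarded is exactly the content of property~\eqref{fact_CGF2GF_3}, so you did not need to re-derive it, but the reasoning is sound.
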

\begin{proof}
Assume first that $\varphi^\ast \models_{(\mathrm{fin})} q$ 
and let $\frakA$ be a (finite) model of $\varphi$.
Then, by property~(\ref{fact_CGF2GF_2}) of the translation, 
$\frakA$ has an expansion $\frakA^\ast \models \varphi^\ast$. 
Obviously, $\frakA^\ast$ is finite whenever $\frakA$ is finite.
So by assumption, $\frakA^\ast \models q$, which trivially 
implies $\frakA \models q$, 
given that the interpretation of $G$ has no bearing on $q$. 
%This proves the implication $\Longleftarrow$ both in the unrestricted case 
%and restricted to finite models.
 
Assume now that $\varphi \models_{(\mathrm{fin})} q$ 
and take any (finite) model $\frakB \models \varphi^\ast$.
Let $w = \mathrm{width}(\frakB) \leq \mathrm{width}(\varphi^\ast)$, 
and let $\frakB^{(N)}$ be the $N$-th Rosati cover of $\frakB$ 
as in Theorem~\ref{thrm_main} for $N = w + 1$. 
Then $\frakB^{(N)} \models \varphi^\ast$. 
Furthermore, $\frakB^{(N)}$ is $N$-conformal and hence also conformal, since $N > w$.  
Therefore, by property (\ref{fact_CGF2GF_3}) of the translation, we have $\frakB^{(N)} \models \varphi$,
and so, by assumption, $\frakB^{(N)} \models q$, since $\frakB^{(N)}$ is finite 
whenever $\frakB$ is finite. 
Then, according to Fact~\ref{fact_covers_and_treeification}, $\frakB \models \chi_q$, 
and via Lemma~\ref{lemma_treeification} item~(ii) we conclude that $\frakB \models q$.  
%This completes the proof of the implication $\Longrightarrow$ 
%both in the unrestricted case and restricted to finite models.
\end{proof}

Combining the above with Theorem~\ref{thrm_FinControl} 
yields its generalisation to \CGF as follows. 

\newtheorem*{theorem_FinControl_CGF}{Theorem~\ref{thrm_FinControl_CGF}}
\begin{theorem_FinControl_CGF}
For every $\varphi \in \CGF$ and every $q \in \UCQ$ we have  
$\; \varphi \models q   \iff   \varphi \models_{\mathrm{fin}} q$. 
More specifically, if $\varphi \land \lnot\, q$ is satisfiable 
then it has a finite model of size $2^{(|\varphi|+|\tau|^{\calO(h)}) (wh)^{\calO(wh^2)}}$,
where $h$ is the height of $q$, $\tau$ is the signature of $\varphi$, 
and $w = \mathrm{max}\{ \mathrm{width}(\varphi), \mathrm{width}(\tau) \}$.
\end{theorem_FinControl_CGF} 
\begin{proof}
Theorem~\ref{thrm_FinControl} together with Lemma~\ref{lemma_CGF2GF_QA} 
provide the following chain of equivalences 
$$
 \varphi \models q \iff \varphi^\ast \models q \iff 
 \varphi^\ast \models_{\mathrm{fin}} q \iff \varphi \models_{\mathrm{fin}} q 
$$
proving the first assertion. 
Towards the size bound, if $\varphi \land \lnot q$ is satisfiable then, 
by Lemma~\ref{lemma_CGF2GF_QA}, so is $\varphi^\ast \land \lnot q$.
Recall that $|\varphi^\ast| = \calO(|\varphi|) + |\tau| w^{\calO(w)}$ 
by property~\eqref{fact_CGF2GF_1} of the translation.
According to Theorem~\ref{thrm_FinControl}, 
there exists a model $\frakB$ of $\varphi^\ast \land \lnot q$ 
of size 
$$
    2^{ (|\varphi^\ast|+|\tau|^{\calO(h)}) (wh)^{\calO(wh^2)} }
  = 2^{ (|\varphi|+|\tau| w^{\calO(w)}+|\tau|^{\calO(h)}) (wh)^{\calO(wh^2)} }
  = 2^{ (|\varphi|+|\tau|^{\calO(h)}) (wh)^{\calO(wh^2)} }
$$
Finally, as in the proof of Lemma~\ref{lemma_CGF2GF_QA} 
we construct the model $\frakB^{(N)}$ of $\varphi \land \lnot q$ 
by taking the $N$-th Rosati cover of $\frakB$ with $N=w+1$. 
By Theorem~\ref{thrm_main}, $|\frakB^{(N)}| = |\frakB|^{w^{\calO(w)}}$, 
which is still of the same order of magnitude 
$2^{ (|\varphi|+|\tau|^{\calO(h)}) (wh)^{\calO(wh^2)} }$ 
as $|\frakB|$, as claimed. 
\end{proof}

Theorem~\ref{thrm_small_models} as announced in the introduction is a 
straightforward corollary of the above.

%//////////////////////////////////////////////////////////////////////////////////////////

\section{Complexity of query answering} \label{sec_Complexity}

%\\\\\\\\\\\\\\\\\\\\\\\\\\\\\\\\\\\\\\\\\\\\\\\\\\\\\\\\\\\\\\\\\\\\\\\\\\\\\\\\\\\\\\\\\\

% In the proof of Theorem~\ref{thrm_FinControl} we saw that $\varphi \models q$ 
% is equivalent to the unsatisfiability of the guarded sentence $\varphi \land \lnot \chi_q$.
% Gr\"adel has shown \cite{Gr99JSL} that the latter can be decided 
% in {\sc DTime}($2^{\calO((r+|\varphi|+|\chi_q|)w^w)}$),
% where $r$ is the size and $w$ the width of $\tau$. 
% By Lemma~\ref{lemma_treeification}~(iii), this yields a
% {\sc DTime}($2^{|\varphi|w^w + (|q|rw)^{\calO(|q|w)}}$) solution.
% The general problem is thus \twoExpTime-complete (already for fixed $q$).

In this paper \emph{query answering} is the problem of deciding $\varphi \models q$
for a given $\varphi \in \GFO$ and $q$ a \UCQ. 
By Lemma~\ref{lemma_treeification}~(ii) this amounts to testing unsatisfiability 
of the guarded sentence $\varphi \land \lnot \chi_q$, known to be \twoExpTime-complete 
and in {\sc DTime}($2^{\calO((r+|\varphi|+|\chi_q|)w^w)}$), 
where $r$ is the size and $w$ the width of $\tau$ \cite{Gr99JSL}.
With these parameters for $\tau$ recall from Lemma~\ref{lemma_treeification} 
that $|\chi^\tau_q| = r^{\calO(h)}(hw)^{\calO(hw)}$ and that $\chi^\tau_q$ is computable 
in time $|q|r^{\calO(h)}(hw)^{\calO(hw)}$ for any \UCQ $q$ of height $h$.  
Query answering is thus \twoExpTime-complete, even for a fixed query, and in 
{\sc DTime}($|q|r^{\calO(h)}(hw)^{\calO(hw)} + 2^{(r+|\varphi|)w^w+ r^{\calO(h)}(hw)^{\calO(hw)}}$).
Notice that there is a double-exponential dependence only in terms of the height $h$ of queries and 
the width $w$ of the signature.
Under increasing constraints on the variability of signatures we can break down 
and simplify the time complexity as follows:
\begin{itemize}
\item $2^{(|q||\varphi|)^{\calO(|q||\varphi|)}}$ 
      under no restrictions on $q$ nor on $\varphi$ nor on $\tau$; 
\item $|q|2^{(h|\varphi|)^{\calO(h|\varphi|)}}$ 
      without restrictions but highlighting the influence of query height;
\item $|q|(h|\varphi|)^{\calO(h)} + 2^{(h|\varphi|)^{\calO(h)}} 
       \leq |q| 2^{(h|\varphi|)^{\calO(h)}}$ 
      when the width of $\tau$ is bounded (a matching double-exponential 
      lower bound in this case follows from the work of Lutz \cite{Lutz07});
\item $|q| h^{O(h)} + 2^{\calO(|\varphi|) + h^{\calO(h)}} 
      \leq |q| 2^{\calO(|\varphi|) + h^{\calO(h)}}$ 
      for any fixed signature $\tau$ (for this case we provide a 
      single-exponential lower bound as stated in Theorem~\ref{thrm_complexity} 
      and proved in Proposition~\ref{prop_exptime} below);
\item $|q||\varphi|^{O(1)} + 2^{|\varphi|^{\calO(1)}} 
      \leq |q| 2^{|\varphi|^{\calO(1)}}$
      for queries of bounded height and over signatures of bounded width;
\end{itemize}

\noindent In \cite{Lutz07,Lutz08} Lutz considered the query answering problem against 
specifications in various description logics, among them a certain $\mathcal{ALCI}$, 
which can be naturally seen as a fragment of \GF. Lutz proved that answering \BCQ 
against $\mathcal{ALCI}$ specifications is \twoExpTime-complete. 
Given that description logics are interpreted over relational structures 
involving unary and binary predicates only, this implies that query answering 
against \GF is \twoExpTime-complete already for \BCQ and on signatures of width two. 

A further important particular case is that of \emph{acyclic queries}. 
Below \ACQ are unions of acylcic Boolean conjunctive queries. 
Observe that for $q$ an \ACQ the exponential blow-up in passing from $q$ 
to $\chi^\tau_q$ can be avoided by rewriting $q$ as a guarded existential 
sentence $q^\ast$ of essentially the same length as $q$. 
Query answering for \ACQ reduces in polynomial time to \GFO-satisfiability.
Regarding query answering against a fixed $\varphi \in \GFO$ we thus find 
that for \ACQ the complexity reduces to \ExpTime. 
In fact, it can also be shown to be \ExpTime-complete for certain $\varphi$,
cf. Proposition~\ref{prop_exptime} below. \\

For a fixed $\psi \in \GFO[\tau \cup \sigma]$ \emph{target query answering} 
is the problem of deciding $D \land \psi \models q$ on input $q$ a \UCQ 
and $D$ a $\tau$-structure (given as a conjunction of ground atoms with 
elements of $D$ as individual constants).
The next theorem summarises our observations on query answering and 
some results on subproblems of target query answering. \\

\begin{thm} ~ \label{thrm_complexity}
\begin{enumerate}
\setlength{\itemsep}{0.7ex}
\item Deciding $\varphi \models q$, on input $\varphi \in \GFO$ and $q$ a \UCQ, 
      is \twoExpTime-complete already for a fixed query $q$~\cite{Gr99JSL}, 
      or with the width of $\varphi$ bounded and $q$ a \BCQ~\cite{Lutz07}.
\item For each $\varphi \in \GFO$, deciding $\varphi \models q$ 
      on input $q$ an \ACQ is in {\sc ExpTime};  
      and it is \ExpTime-complete for certain $\varphi$.
\item There is a $\GF$-sentence $\psi$ such that deciding $D \land \psi \models Q$, 
      on input $Q$ a \BCQ and $D$ a conjunction of atoms \emph{of bounded width}, 
      is {\sc PSpace}-hard.
\item For all \emph{universal} $\psi \in \GFO$, deciding $D \land \psi \models q$, 
      on input $q$ a \UCQ and $D$ a conjunction of atoms, is in $\Pi^P_2$;  
      and for certain \emph{universal} $\psi$ it is $\Pi^P_2$-complete 
      already for \CQ $q$. %conjunctive queries $q$.
\item For all $\psi \in \GFO$ and $q$ a \UCQ, deciding $D \land \psi \models q$ 
      on input $D$, is in {\em co-NP} and {\em co-NP}-complete already for $q=\bot$ 
      and certain \emph{universal} $\psi$. 
      Hence, satisfiability of $D \land \psi$ on input $D$ is in {\sc NP}, 
      and is {\sc NP}-complete for certain \emph{universal} $\psi \in \GFO$.
\end{enumerate}
\end{thm}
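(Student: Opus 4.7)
The unifying strategy is to reduce query answering to (un)satisfiability of a guarded sentence via Lemma~\ref{lemma_treeification}~(ii): $\varphi \models q$ iff $\varphi \land \lnot\,\chi^\tau_q$ is unsatisfiable. Item~(1) then follows from the time bounds displayed preceding the theorem: computing $\chi^\tau_q$ costs time $|q|\, r^{\calO(h)}(hw)^{\calO(hw)}$ by Lemma~\ref{lemma_treeification}~(iii), and running Gr\"adel's \GFO-satisfiability procedure \cite{Gr99JSL} on the resulting guarded sentence fits inside \twoExpTime. The matching lower bounds are cited wholesale: \twoExpTime-hardness for fixed $q$ is the satisfiability lower bound of \cite{Gr99JSL}, and hardness for bounded-width \BCQ\ is Lutz's \cite{Lutz07} lower bound for $\mathcal{ALCI}$ combined with its natural embedding into \GFO\ of width two noted just above the theorem.

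For item~(2) the essential observation is that the exponential blow-up of treeification is avoidable when $q$ is an \ACQ: an acyclic \BCQ\ is semantically equivalent to a \emph{guarded} existential sentence $q^\ast$ of polynomial size, obtained by relativising the quantifiers of $q$ along any tree decomposition of its associated structure $\calQ$ (\cite{GLS03} and Section~\ref{sec_prelim}). Hence $\varphi \land \lnot\, q^\ast$ is a guarded sentence of size $\calO(|\varphi| + |q|)$ whose satisfiability is in \ExpTime\ for fixed $\varphi$ by Gr\"adel's width-dependent bound. For \ExpTime-hardness for a specific $\varphi$, the plan is to encode a generic \ExpTime-complete problem --- for instance acceptance by a fixed exponential-time Turing machine whose configurations are packed into guarded tuples --- into the input \ACQ\ against that $\varphi$.

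Items~(3)--(5) are data-complexity statements that exploit preservation of universal \GFO\ under substructures. The key enabling observation for (4) and (5) is that if $\psi$ is universal and $N \supseteq D$ is a model of $\psi$ then the restriction $N|_{\mathrm{univ}(D)}$ is again a model of $\psi$ containing $D$, while \UCQ\ gain no new witnesses when passing to substructures, so $N \not\models q$ implies $N|_{\mathrm{univ}(D)} \not\models q$; any counterexample to $D \land \psi \models q$ therefore lives on the universe of $D$ and is described by polynomially many extra atoms over $D$. Item~(5) follows by guessing these extra atoms ({\sc NP}) and verifying $\psi$ and $\lnot q$ in polynomial time for fixed $q$ and $\psi$; matching {\sc NP}-hardness for $q=\bot$ is by direct encoding of \textsc{3-Sat} via a fixed universal $\psi$ forbidding clause violations. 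Item~(4) adds one quantifier layer: verifying $\lnot q$ for a \UCQ\ $q$ given as input is co-{\sc NP}, placing non-entailment in $\Sigma^P_2$, with matching $\Pi^P_2$-hardness by reduction from $\Pi^P_2$-\textsc{Sat}. Item~(3), {\sc PSpace}-hardness, uses a fixed guarded $\psi$ encoding the transition relation of a polynomial-space Turing machine and a \BCQ\ $Q$ asking for reachability of an accepting configuration.

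The principal technical obstacle will be the explicit hardness constructions rather than the upper bounds. For (3) one must design a guarded $\psi$ that chains polynomially many transitions while guardedness forces each transition to touch only a single configuration at a time, which calls for careful auxiliary time-step predicates; analogous care is needed for the $\Pi^P_2$-hardness in~(4) and for threading a machine computation through an \emph{acyclic} query in~(2). The upper bound steps, by contrast, follow directly from Lemma~\ref{lemma_treeification}, Gr\"adel's algorithm, and the substructure argument above.
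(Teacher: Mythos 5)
Your upper-bound reasoning for items (1), (2), and (4), and your use of preservation under substructures for universal $\psi$, all agree with the paper's arguments; the paper likewise reduces to $\GFO$ unsatisfiability via Lemma~\ref{lemma_treeification}, and its $\Pi^P_2$-hardness (Proposition~\ref{prop_piptwo}) is indeed by reduction from a $\Pi^P_2$-complete QBF validity problem. For item~(3) the paper reduces from QBF (Proposition~\ref{prop_pspace}) rather than from polynomial-space Turing-machine reachability, but that is a stylistic difference only.

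There is, however, a genuine gap in your argument for item~(5). You derive $\mathrm{co}$-$\mathrm{NP}$ membership of $D \land \psi \models q$ from the claim that ``any counterexample \ldots lives on the universe of $D$'', which relies on $\psi$ being preserved under substructures, i.e.\ universal. But item~(5) is asserted \emph{for all} $\psi \in \GFO$, not just universal ones; a general guarded $\psi$ can force witnesses outside $\mathrm{univ}(D)$, so the substructure argument does not apply. The paper instead (Proposition~\ref{prop_np}) pre-computes, for the fixed $\psi$, the set of atomic types \emph{admissible} with respect to a Scott normal form $\Psi$, i.e.\ realisable in some model of $\Psi$; one then guesses only an assignment of admissible types to the atoms of $D$ and checks local consistency, with the rest of the model implicitly provided by the tree-model property. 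This type-assignment abstraction is precisely what lets the argument cover arbitrary $\psi$.

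Your sketch for the $\ExpTime$-hardness in item~(2) also does not work as stated. Configurations of an exponential-time machine are of exponential size and cannot be ``packed into guarded tuples'' of width bounded by the (fixed) signature of $\varphi$. The paper's Proposition~\ref{prop_exptime} instead uses a polynomial-space alternating machine (${\rm APSPACE}=\ExpTime$): a fixed $\varphi$ generates generic labelled tree-like models that can encode any existential-player strategy, with configurations laid out as polynomial-length $S$-successor chains of bit-carrying vertices rather than as tuples; the input query $q_{M,w}$ is then a union of acyclic conjunctive queries, each one screening for a distinct flaw in the encoding of an accepting run of $M$ on $w$. The crucial design decision --- $\varphi$ knows nothing about $M$ or $w$, and all machine-specific content is pushed into the (polynomially large) query --- is absent from your sketch and is what makes the reduction go through with $\varphi$ fixed.
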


\noindent Observe that item (2) implies that satisfiability for $\GF$ can be 
\ExpTime-complete already for a fixed signature, see 
Proposition~\ref{prop_exptime} below. 
This strengthens a result of \cite{Gr99JSL}, where \ExpTime-completeness 
of satisfiability was shown for \GFO formulas over bounded arity but 
variable signatures.

\begin{cor} \label{coroll_exptime} 
For some relational signature $\tau$ satisfiability for $\GF[\tau]$ 
is \ExpTime-complete.
\end{cor}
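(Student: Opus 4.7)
The plan is to combine Gr\"adel's \ExpTime upper bound for $\GF$ of bounded arity with the \ExpTime lower bound for \ACQ entailment against a fixed guarded sentence provided by item~(2) of Theorem~\ref{thrm_complexity}. Since any fixed relational signature $\tau$ has bounded width, the upper bound is immediate: satisfiability for $\GF[\tau]$ is in \ExpTime by~\cite{Gr99JSL}. The real content is the matching lower bound.

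For the hardness direction, I would fix a signature $\tau$ together with a sentence $\varphi_0 \in \GF[\tau]$ as furnished by Theorem~\ref{thrm_complexity}~(2), witnessing that deciding $\varphi_0 \models q$ on input $q \in \ACQ[\tau]$ is \ExpTime-hard. The key auxiliary step is the observation already recorded in the remarks preceding Theorem~\ref{thrm_complexity}: any acyclic Boolean conjunctive query $Q$ admits, via a guarded tree decomposition of its associated structure $\calQ$, an equivalent guarded existential $\tau$-sentence $Q^{\ast} \in \GF[\tau]$ of length $\calO(|Q|)$; taking disjunctions, the same holds for $q \in \ACQ[\tau]$, yielding $q^{\ast} \in \GF[\tau]$ with $|q^{\ast}| = \calO(|q|)$ computable in polynomial time. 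Since $\lnot q^{\ast}$ remains in $\GF[\tau]$ (the guarded fragment is closed under negation), the sentence $\varphi_0 \wedge \lnot q^{\ast}$ lies in $\GF[\tau]$, and $\varphi_0 \models q \iff \varphi_0 \wedge \lnot q^{\ast}$ is unsatisfiable.

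This gives a polynomial-time reduction from the \ExpTime-hard problem of \ACQ entailment against $\varphi_0$ to unsatisfiability in $\GF[\tau]$, and hence, by complementation, to satisfiability in $\GF[\tau]$. Together with the \ExpTime upper bound, this establishes \ExpTime-completeness for the fixed signature $\tau$.

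The only delicate point is ensuring that the rewriting $q \mapsto q^{\ast}$ introduces no blow-up that would spoil the polynomial-time reduction, and that $q^{\ast}$ is genuinely a guarded sentence in the \emph{same} signature $\tau$ (no auxiliary symbols are needed). This is handled by building $q^{\ast}$ directly from a guarded tree decomposition of each disjunct's associated structure, using the atoms of that structure as guards for the nested quantifiers, which is linear in $|q|$ and stays within $\tau$. All other ingredients are standard invocations of the cited results.
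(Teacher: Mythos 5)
Your proposal is correct and follows essentially the same route the paper takes: the upper bound is Gr\"adel's \ExpTime bound for bounded width, and the lower bound comes from the fixed sentence of Proposition~\ref{prop_exptime} (via Theorem~\ref{thrm_complexity}~(2)), using the polynomial rewriting of an acyclic query $q$ into a guarded existential sentence $q^\ast$ over the same signature so that $\varphi_0 \models q$ iff $\varphi_0 \land \lnot q^\ast$ is unsatisfiable. The paper leaves this argument implicit in the remarks surrounding Theorem~\ref{thrm_complexity}, and you have filled in exactly the steps it gestures at.
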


Note that item (1) is but a restatement of results of Gr\"adel \cite{Gr99JSL} 
and Lutz \cite{Lutz07}, listed here for the sake of completeness. 
Similarly, the upper bound in item (2) is a consequence of \cite{Gr99JSL} 
as remarked at the beginning of this section. 
We prove each of the remaining claims separately in the propositions to follow.

\begin{prop} \label{prop_exptime}
There is a $\GFO$-sentence $\varphi$ such that deciding 
$\varphi \models Q$ for \ACQ{} $Q$ is \ExpTime-hard.
\end{prop}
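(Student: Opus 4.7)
The plan is to reduce from the word problem for a fixed polynomial-space alternating Turing machine $M$, which is \ExpTime-complete since alternating polynomial space equals \ExpTime. Fix such an $M$. I would construct a fixed $\GFO$-sentence $\varphi$ over a finite relational signature $\tau$ axiomatising, in each of its models: (i) a universal ``input-generating'' structure --- a root-rooted binary branching tree in which a root-to-node path of length $n$ encodes an $n$-bit tape prefix; (ii) an initial configuration of $M$ attached to every such node, whose tape matches the encoded prefix; (iii) the computation tree of $M$ over configurations, with configurations labelled existential or universal via unary sort predicates; and (iv) forward closure of a distinguished unary predicate $\mathit{Acc}$ under the alternating acceptance-propagation rules. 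All quantifiers in these axioms are relativised either by unary sort predicates (for outer quantifications) or by binary atoms (for the inner ones), so $\varphi \in \GFO$.

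Then, for each input $w = w_1 \cdots w_n$, I would take
\[
  Q_w \;=\; \exists x_0 \cdots x_n\, c.\ \mathit{Root}(x_0) \land \mathit{Succ}_{w_1}(x_0,x_1) \land \cdots \land \mathit{Succ}_{w_n}(x_{n-1},x_n) \land \mathit{InitConf}(x_n,c) \land \mathit{Acc}(c),
\]
whose Gaifman graph is a simple path with a single dangling edge to $c$ and is hence acyclic. Correctness --- namely $\varphi \models Q_w$ iff $M$ accepts $w$ --- then follows from the semantics of the forward-closure axioms: in every model of $\varphi$ the interpretation of $\mathit{Acc}$ must contain the least fixed-point of the acceptance-propagation operator, which in turn contains the initial configuration for $w$ exactly when $M$ accepts $w$; conversely, $\varphi$ admits a canonical model in which $\mathit{Acc}$ is interpreted as exactly that least fixed-point, providing a witness of $\varphi \not\models Q_w$ whenever $M$ rejects $w$. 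The \ExpTime lower bound on the word problem for $M$ then transfers verbatim.

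The main technical hurdle is the guarded axiomatisation of (i)--(iii) in such a way that the intended least-fixed-point semantics for $\mathit{Acc}$ is preserved even in models with ``extra'' elements. For the input-generating tree this amounts to a standard copy-and-extend of the tape at each $\mathit{Succ}_b$-edge, realised by auxiliary binary relations linking corresponding cells in parent and child tapes. For the universal-alternation rule, the successors of a configuration must be arranged so that parasitic extra children cannot spuriously falsify the rule; this can be done by laying out successors as a linked list via $\mathit{FirstChild}$ and $\mathit{SibNext}$ relations and introducing a companion ``all-siblings-accept'' predicate closed under a guarded forward rule. Each individual axiom thus involves only guarded quantifiers over bounded-arity tuples.
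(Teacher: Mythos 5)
Your proposal is correct, but it takes a genuinely different route from the paper's. The paper keeps $\varphi$ minimal and \emph{machine-independent} --- the ``alternation skeleton'' of equation~\eqref{eq_phi_for_ATM}, whose minimal models enumerate candidate existential strategies --- and places all the $M$- and $w$-specific work in the query $q_{M,w}$, a union of acyclic conjunctive queries each detecting a flaw in a candidate strategy encoding; the reduction arranges $\varphi \models q_{M,w}$ iff $M$ does \emph{not} accept $w$, which suffices since \ExpTime is closed under complement. You dualise this: your $\varphi$ encodes $M$ in full (input-generating tree, computation tree, Horn forward-closure of $\mathit{Acc}$), and $Q_w$ is a single path-shaped acyclic CQ that merely reads off $\mathit{Acc}$ at the initial configuration, with $\varphi \models Q_w$ iff $M$ accepts $w$. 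Both work; the paper's $\varphi$ is trivially guarded and the burden sits in crafting polynomially many acyclic flaw-detecting disjuncts, whereas your burden sits entirely in the guarded axiomatisation of $\varphi$. You correctly identify the two hard spots there (cell-linking for tape propagation between configurations, and the linked-sibling/$\mathit{AllSibsAcc}$ device so that spurious children cannot falsify the $\forall$-alternation rule), but to close the argument you should make explicit that $\varphi$ is a guarded \emph{existential Horn} (TGD-style) theory on the $\mathit{Acc}$-cluster, so that (i) a chase-style homomorphism into an arbitrary model pushes the $\mathit{Acc}$-derivations through, forcing $\mathit{Acc}(c_w)$ in every model when $M$ accepts, and (ii) the chase model with $\mathit{Acc}$ set to the exact least fixed point is genuinely a model of $\varphi$, supplying a countermodel when $M$ rejects. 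These are standard manoeuvres but nontrivial to carry out; the paper's version sidesteps them at the price of the more elaborate query construction.
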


\begin{proof}
It is well known that deterministic exponential time equals alternating 
polynomial space. We show how to encode the behaviour of polynomial-space 
alternating Turing machines into the query answering problem over a fixed 
formula. The formula $\varphi$ in question will merely provide the means 
of alternation blindly generating all trees potentially suitable for 
encoding any strategy of the existential player in any alternating run 
of any ATM on any input. 
Then, for any given ATM $M$ and input $w$ we craft an appropriate \UCQ $q_{M,w}$ 
comprising as disjuncts various conjunctive queries, each matching a 
different source of error in the encoding of the behaviour or acceptance 
of $M$ on input $w$. 
Ultimately the goal is to have $\varphi \models q_{M,w}$ if, and only if, 
$M$ has no accepting run on $w$, i.e., if the existential player has no 
winning strategy in the game corresponding to the computation of $M$ on input $w$. 

By standard arguments we may restrict attention to normalised ATM 
in which universal and existential states alternate in any run, 
which have a single universal initial state, disjoint sets of accepting 
and rejecting states, and every configuration of which has precisely two 
successor configurations (including accepting and rejecting configurations, 
which have only accepting or rejecting successor configurations, respectively).
Moreover we may assume that on each input of length $n$ the run of the 
normalised ATM uses precisely $p(n)$ amount of space for a polynomial $p$.

\noindent
Let $\varphi$ be the conjunction of the following guarded formulas, where $\oplus$ denotes exclusive or.
\begin{equation}\label{eq_phi_for_ATM}
\begin{array}{rl}
  (\ \exists x.\,R(x)\ )    &  B(x) \land A(x) \\
  (\,\forall x.\,B(x)\ )    &  E(x) \oplus A(x) \\
  (\,\forall x.\,B(x)\ )    &  \top(x) \oplus \bot(x)  \\
  (\,\forall x.\,B(x)\ )    &  T(x) \, \oplus \, \exists y. S(x,y)  \\
  (\ \forall xy.\,S(x,y)\ ) &  B(y) \, \land \, (\, E(x) \liff E(y)\,) \\
  (\ \forall x.\,T(x)\ )    &  E(x) \limp \exists y. F(x,y)  \\
  (\ \forall x.\,T(x)\ )    &  A(x) \limp \exists y_1. A_1(x,y_1)  \\ 
  (\ \forall x.\,T(x)\ )    &  A(x) \limp \exists y_2. A_2(x,y_2) \\
  (\ \forall xy.\,F(x,y)\ )   & B(y) \land A(y) \\
  (\ \forall xy.\,A_1(x,y)\ ) & B(y) \land E(y) \\
  (\ \forall xy.\,A_2(x,y)\ ) & B(y) \land E(y) \\
\end{array}
\end{equation}
Intuitively speaking, every model of $\varphi$ (or rather its guarded unravelling) 
represents a tree whose vertices correspond to instances of the variables $x$ and $y$ 
in the above formulation. There are three kinds of vertices: 
plain $B$-vertices, the root $R$-vertex and $T$-vertices. 
Each vertex represents a bit, hence the letter $B$, 
whose value is either $\top$ or $\bot$ as witnessed by the predicates of the same name.
There are four kinds of successor edges in every such tree corresponding to a 
model of $\varphi$: $S$,$F$,$A_1$ and $A_2$-successors. 
Every vertex is either a $T$-vertex or it has an $S$-successor.
The intention is that maximal $S$-successor chains connecting $T$-vertices 
encode individual configurations. 
%(The precise manner in which configurations are encoded is of little importance at this point.) 
In this sense a $T$-vertex is a terminal vertex of the configuration it belongs to 
and encodes its last bit. The predicates $E$ and $A$ mark whether the state 
of a given configuration is existential or universal, respectively. 
All vertices belonging to the same configuration (in between consecutive $T$-vertices) 
carry the same $E$ or $A$ marking, which is passed down along $S$-edges.
A $T$-vertex terminating a configuration in an existential state 
has an $F$-successor vertex beginning a new configuration.
A $T$-vertex belonging to a configuration with a universal state 
has both an $A_1$- and an $A_2$-successor vertex, 
each starting a successor configurations.

Note that for query answering one can always restrict attention to \emph{minimal models} 
of $\varphi$, which have only one $R$-vertex and at any vertex have at most one successor 
of whatever kind required and no other kind of successor vertices, and every vertex of which 
is reachable from the $R$-vertex via a sequence of overlapping $S$-, $F$-, $A_1$- and $A_2$-atoms. 
Minimal models of $\varphi$ are well suited to encode strategies of the existential player
in any game determined by a normalised ATM and an input word.

Using the framework provided by (minimal) models of $\varphi$ the power of unions 
of conjunctive queries suffices to filter out those models that do \emph{not} represent 
a winning strategy for the existential player in the game defined by a given ATM $M$ 
on a given input $w$. 
To demonstrate this we must first choose an appropriate encoding of Turing machine 
configurations. As is customary we write $\alpha q \beta$ for the configuration with 
tape contents $\alpha \beta$ when the machine is in state $q$ and its head is positioned 
on the first letter of $\beta$. 
Wlog.~the tape alphabet is binary, i.e.~$\alpha,\beta \in \{0,1\}^\ast$.

A configuration $\alpha q \beta$ will be encoded as a bit string 
$\tilde{\alpha} \tilde{q} \tilde{\beta}$, where $\tilde{\alpha}[2i] = \alpha[i]$ 
and $\tilde{\alpha}[2i-1] = 0$ for all $1 \leq i \leq |\alpha|$, and similarly 
for $\beta$ and $\tilde{\beta}$, and where $\tilde{q} = (11)^{j}(10)^{r-j}$ 
if $q$ is the $j$-th of the $r$ many states of $M$ in some fixed enumeration. 
In brief: the state is encoded in unary interleaved with $1$ digits at the point of the 
head position and around it the tape contents are interleaved with $0$ digits to 
clearly identify the position where the state is encoded. 

Let $M$ have $r$ states and use precisely $n=p(|w|)$-space on inputs of size $|w|$. 
The query $q_{M,w}$ will then consist of a disjunction of acyclic conjunctive queries 
(each of size $\calO(p(|w|))$ exhausting the reasons a model of $\varphi$ could fail 
to encode a winning strategy for the existential player in the game of $M$ on $w$:
\begin{itemize}
\item \CQ asserting the existence of an $S$-successor chain connecting $T$-vertices
      too short to represent a configuration, 
      or the existence of too long an $S$-successor chain:\\
      \begin{align*}
        \exists x_0,\ldots,x_{m} \ F(x_0,x_1) \,\land \bigwedge_{i<m} S(x_i,x_{i+1}) \ \land \ T(x_m)
      \end{align*} 
      for $m<2r+2n$ and similarly with $A_1(x_0,x_1)$ and $A_2(x_0,x_1)$ or $R(x_1)$ 
      in place of $F(x_0,x_1)$, and
      \begin{align*} 
         \exists x_1,\ldots,x_{2r+2n+1} \bigwedge_{i<2r+2n+1} S(x_i,x_{i+1})
      \end{align*}
\item \CQ asserting that on odd positions of a maximal $S$-successor chain the bit values 
      do not constitute a word in $0^\ast 1^r 0^\ast$:
      \begin{align*} 
        \exists x_1,\ldots,x_{2r+2n} \bigwedge_{i<2r+2n} S(x_i,x_{i+1}) \ \land 
            \, \top(x_{2i-1}) \, \land \, \bot(x_{2j-1}) \, \land \, \top(x_{2l-1})
      \end{align*}
      for $1\leq i<j<l\leq r+n$, along with
      \begin{align*} 
        \exists x_1,\ldots,x_{2r+1} \bigwedge_{i<2r+2n} S(x_i,x_{i+1}) \ \land 
            \, \top(x_{1}) \, \land \, \top(x_{2r+1})
      \end{align*}
      etc.
\item \CQ asserting that the configuration beginning with the root vertex is not
      the initial configuration for the given input word: one \CQ for each bit 
      in the encoding of the initial configuration looking for a wrong bit value, such as 
      \begin{align*}
        \exists x_1,\ldots,x_{m} \ R(x_1) \land \bigwedge_{i<m} S(x_{i},x_{i+1}) \land \ \bot(x_{m})
      \end{align*}
      for $m<2r$ odd or 
      \begin{align*}
        \exists x_1,\ldots,x_{m} \ R(x_1) \land \bigwedge_{i<m} S(x_{i},x_{i+1}) \land \ \top(x_{m})
      \end{align*}
      for $m>2r$ odd, and 
      (assuming for simplicity that the initial state is the $1$st one) for $2<m\leq2r$ even; 
      further
      \begin{align*}
        \exists x_1,\ldots,x_{2r+2k} \ R(x_1) \land \bigwedge_{i<2r+2k} S(x_i,x_{i+1}) \land \ \bot(x_{2r+2k})
      \end{align*} 
      for $w[k]=1$ and similarly for $w[k]=0$;
\item \CQ checking that consecutive $S$-successor chains do not represent successor configurations, e.g., 
      by asserting that in a given tape position not under the head of the ATM 
      the bit values in the two configurations are not identical: for instance as in 
      \begin{align*}
         \exists x_{-1},x_0,x_1,\ldots,x_{2r+2n},x_{2r+2n+1},x_{2r+2n+2} 
           \bigwedge_{-1\leq i<2r+2n, i\neq 2l} S(x_i,x_{i+1}) \ \land \quad \\ 
           \quad F(x_{2l},x_{2l+1}) \, \land \, \bot(x_{-1}) \, \land \, \bot(x_1) \, \land \, \bot(x_3) \, 
           \land \, \top(x_2) \, \land \, \bot(x_{2r+2n+2})
      \end{align*}
      and, similarly, with $A_1(x_{2l},x_{2l+1})$ or $A_2(x_{2l},x_{2l+1})$ in place of $F(x_{2l},x_{2l+1})$
      and with the bit values at corresponding positions $x_2$ and $x_{2r+2n+2}$ swapped, 
      and all this for each $l \leq r+n$;
% \item CQs asserting that there is a sequence of bits encoding the state of a configuration 
%      which is universal/existential while the vertices on which it is encoded are marked 
%      by the $E$/$A$ predicate indicating otherwise. 
%% [not really needed if we assume that the ATMs alternate between univ./ex. states]
\item \CQ asserting that the $A_i$-successor configuration of a universal configuration
      was not derived by the $i$-th of the two applicable transitions; 
\item \CQ asserting the existence of a configuration in reject state:
      \begin{align*}
         \exists x_1,\ldots,x_{2r} \bigwedge_{1\leq i<r} S(x_{2i-1},x_{2i}) \land S(x_{2i},x_{2i+1}) \ 
         \land \, \top(x_{2i-1}) \, \land \\
         \bigwedge_{i\leq q} \top(x_{2i}) \land \bigwedge_{q<i\leq r} \bot(x_{2i})
      \end{align*} 
      for each rejecting state $q$.
\end{itemize}
Much as those examples illustrated above, all of the flaws in the encoding 
of an accepting run can likewise be expressed using acyclic conjunctive queries, 
polynomially many in total, and each of size $\calO(|M|+p(|w|))$.
\end{proof}

\begin{prop} \label{prop_pspace}
There is a $\GFO$-sentence $\psi$ (using constants) such that deciding $D \land \psi \models Q$ 
on input $Q$ a \BCQ and $D$ a conjunction of atoms (of bounded width) is {\sc PSpace}-hard.
\end{prop}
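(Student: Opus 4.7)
The plan is to reduce from the acceptance problem for a fixed universal deterministic polynomially-space-bounded Turing machine $U$, a canonical {\sc PSpace}-hard problem. Given an input word $w$ for $U$, we construct in polynomial time a bounded-width database $D_w$, take the fixed single-atom $\BCQ$ $Q := \exists c\,.\,\mathrm{Accept}(c)$, and use a single fixed guarded sentence $\psi$ such that $D_w \land \psi \models Q$ iff $U$ accepts $w$. The database $D_w$ encodes $U$'s initial configuration on $w$ as a chain of $p(|w|)$ cell-individuals linked by a binary $\mathrm{Right}$-atom, together with unary atoms for the initial tape content, head position and state; crucially, $D_w$ also contains bounded-arity ``local context'' atoms that, for each cell, package it together with a constant-sized neighbourhood of surrounding cells into a single atom, to be used as a guard by the rules of $\psi$.

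The fixed $\psi$ is a conjunction of guarded TGDs simulating $U$: (i) one transition rule per transition of $U$, whose body is the local-context atom at the head's position together with the relevant unary $\mathrm{Head}$ and $\mathrm{Sym}$ atoms (guarded by the context atom), and whose head existentially introduces the cells of the successor configuration near the head, together with fresh context atoms linking them and binary $\mathrm{Succ}$-atoms to the old cells; (ii) guarded TGDs that extend the fresh successor configuration's chain leftward and rightward along the tape, each guarded by a previously emitted context atom; (iii) guarded TGDs that propagate unchanged tape content from each old cell to its successor, using $\mathrm{Succ}$ as guard plus a tag that the transition rule refrains from placing on the transitioning cell; and (iv) the axiom marking accepting cells, $\forall c\,.\,\mathrm{Head}_{q_{\mathrm{acc}}}(c) \to \mathrm{Accept}(c)$.

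Correctness follows routinely. Since $\psi$ consists of guarded TGDs, its chase over $D_w$ is a universal model, and the deterministic character of $U$ ensures that this chase traces $U$'s computation on $w$ configuration by configuration; hence $U$ accepts $w$ iff the chase eventually produces an $\mathrm{Accept}$-atom, iff $Q$ holds in every model of $D_w \land \psi$, iff $D_w \land \psi \models Q$. The reduction $w \mapsto D_w$ is polynomial time and produces a bounded-arity database, so {\sc PSpace}-hardness of $U$-acceptance transfers to $Q$-answering over $D \land \psi$.

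The principal obstacle is keeping \emph{every} rule of $\psi$ guarded by a single body-atom while staying within a signature of bounded arity. A textbook transition rule would form the unguarded three-way join $\mathrm{Head}(c) \land \mathrm{Sym}(c) \land \mathrm{Right}(c,c')$; the key workaround is to pre-compute in $D_w$ the ternary (or a slightly larger but still constant-arity) local-context atoms that package each cell with its immediate neighbours into a single atom, and to emit analogous fresh context atoms in the heads of the transition and extension rules so that all subsequent rule bodies remain guarded by such a single atom. Choosing the context-atom arity $k$ once and for all --- large enough to cover every variable appearing in the body of any rule that simulates one step of $U$ --- gives a fixed bounded-width signature, completing the reduction within $\GF$.
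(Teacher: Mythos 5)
Your reduction cannot work as stated, because it keeps \emph{both} $\psi$ and $Q$ fixed and lets only $D$ vary. But item~(5) of Theorem~\ref{thrm_complexity} (Proposition~\ref{prop_np}) shows that for every fixed $\psi \in \GFO$ and fixed \UCQ{} $q$, deciding $D \land \psi \models q$ on input $D$ is in \emph{co-NP}. So a polynomial-time reduction from a \textsc{PSpace}-hard problem with $\psi$ and $Q$ both fixed would give co-NP $=$ \textsc{PSpace}. The statement being proved deliberately leaves $Q$ as part of the input precisely because the query has to carry the polynomial-space ``cursor''; you cannot throw that away.

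The concrete place where your simulation breaks is the chain-extension step. After the transition rule fires at the head position $c_i$, introducing fresh cells $d_{i-1}, d_i, d_{i+1}$ and a context atom $\mathrm{ExtCtx}(c_{i+1}, d_{i+1}, \ldots)$ guarding the next extension rule, that rule must produce $d_{i+2}$ together with $\mathrm{Succ}(c_{i+2}, d_{i+2})$ --- but $c_{i+2}$ is a constant of $D_w$ that does \emph{not} occur in the guard $\mathrm{ExtCtx}$. In a guarded TGD of bounded arity every universally quantified variable must lie inside the single guard atom, so you cannot add a side-atom $\mathrm{Right}(c_{i+1}, c_{i+2})$ with a fresh universal variable $c_{i+2}$. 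The guard would have to carry an unbounded prefix of the old chain; with bounded arity it cannot, so the fresh successor configuration is cut off from $D_w$ after boundedly many cells and never inherits the whole tape. This is exactly the structural reason the co-NP bound holds. (The paper's proof instead reduces from QBF: the fixed $\psi$ of~\eqref{eq_psi_for_QBF} spawns Boolean-assignment trees, $D_\vartheta$ encodes the clauses as constant-labelled atoms of bounded width, and the query $Q_m$ has $\Theta(m)$ conjuncts which ``walk'' in lockstep along a branch of the model and along the variable chain $v_0,\ldots,v_{2m}$ in $D_\vartheta$; the polynomial space of the QBF lives in the length of $Q_m$, not in $\psi$ or in a chase.)
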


\begin{proof}
The proof is by reduction from QBF. 
Wlog.~we may assume that the QBF instances are sentences in prenex normal form 
\begin{equation} \label{eq_qcnf}
  \exists X_0 \,\forall X_1 \,\exists X_2 \ldots \forall X_{2m-1} \,\exists X_{2m} \ \vartheta
\end{equation}
with $\vartheta$ a 3CNF-formula with free variables among $\{X_0,X_1,\ldots,X_{2m}\}$.
To represent ``valuation strategies'' for the existentially quantified variables 
we use a variant of the formula~\eqref{eq_phi_for_ATM}. %Proposition~\ref{prop_exptime}. 
On the one hand, the encoding is greatly simplified to single bit ``configurations'' 
representing Boolean values. On the other hand, Boolean values need to be encoded 
as elements of the domain of models to allow for a stronger form of pattern matching. 
To that end we rely on constants $0$ and $1$ and encode bit values using a binary 
predicate $V(x,b)$ where $b$ is either $0$ or $1$.
Let $\psi$ be the conjunction of the following (where, again, $\oplus$ stands for exclusive or).
\begin{equation} \label{eq_psi_for_QBF}
\begin{array}{rl}
                                &  R(r, 1, 0) \\
  (\ \forall xtf.\,R(x,t,f)\ )  &  B(x,t,f) \land A(x) \\
  (\,\forall xtf.\,B(x,t,f)\ )  &  E(x) \oplus A(x)  \\
  (\,\forall xtf.\,B(x,t,f)\ )  &  V(x,t) \oplus V(x,f)  \\
  (\ \forall xtf.\,B(x,t,f)\ )  &  E(x) \limp \exists y. S(x,y,t,f) \\
  (\ \forall xtf.\,B(x,t,f)\ )  &  A(x) \limp (\,\exists y_1. S(x,y_1,t,f)\,) V(y_1,t) \\  
  (\ \forall xtf.\,B(x,t,f)\ )  &  A(x) \limp (\,\exists y_2. S(x,y_2,t,f)\,) V(y_2,f) \\
  (\ \forall xytf.\,S(x,y,t,f)\ ) &  B(y,t,f) \land E(x) \oplus E(y) \\
\end{array}
\end{equation}
We may think of models of $\psi$ (more precisely their guarded unravelling) 
as representing infinite Boolean valuation trees with all possible assignments 
encoded at every odd level and encoding a choice of a Boolean value at each node 
at an even distance from the root. 
In minimal models of $\psi$ nodes at even levels (those marked with $A$) 
have precisely two successors representing the two Boolean values, 
whereas all nodes at odd levels (marked with $E$) have precisely one successor 
representing an existential choice of a Boolean value. 
For the encoding of Boolean valuations of the variables of~\eqref{eq_qcnf} 
only the first $2m$ levels of these trees will play a role. 

For each Q3CNF formula \eqref{eq_qcnf} with matrix $\vartheta$ consisting of $k$ clauses 
we assign an input data structure $D_\vartheta$ defined as the conjunction 
\[
  X(v_0,v_1) \land X(v_1,v_2) \land \ldots \land X(v_{2m-1},v_{2m}) 
\]
together with the conjunction of atoms $C(c_i, v_j, b)$ for all $1\leq i\leq k$ 
and $0 \leq j \leq 2m$ and $b\in\{0,1\}$ such that 
setting $X_j$ to $b$ does \emph{not} satisfy the $i$-th clause.
The elements $v_0,\ldots,v_{2m}$ and $c_1,\ldots,c_k$ of the ``database'' $D_\vartheta$ 
are perceived as constants, however, unlike $0$ and $1$, in $\psi$ they are not 
accessible by (constant) names. 

As in the proof of Proposition~\ref{prop_exptime} we design queries $Q_\vartheta$ 
to hold true in a model precisely when it contains a branch corresponding to 
an assignment of the variables falsifying (one of the clauses in) $\vartheta$. 
In fact, we use $Q_\vartheta$ merely as a general clause checking pattern, 
hence our reduction will use the same query $Q_\vartheta=Q_m$ for every Q3CNF 
formula with $2m$ alternately quantified variables:
\begin{align*}
   \exists \tup{x},\tup{y},\tup{z},t,f,c  \ R(y_0,t,f) \ \land \, 
     \bigwedge_{i<2m} \left(\, 
       X(x_i,x_{i+1}) \, \land \, S(y_i,y_{i+1},t,f) \, \land \, V(y_i,z_i) \, \land C(c,x_i,z_i) \, \right)
   \ .
\end{align*}
In other words, the query $Q_m$ matches the variables $\tup{x}$ to the corresponding constants $\tup{v}$
of $D$ and the variables $\tup{y}$ to a single branch in any given model of $D \land \psi$ 
and $\tup{z}$ to the sequence of Boolean values assigned to the variables $X_0, \ldots, X_{2m}$ 
on that branch. 
Then $Q_m$ is satisfied in a given model (which encodes a particular choice of Skolem functions 
for the existentially quantified variables $X_0,X_2,\ldots$) iff there is a clause $c$ that is 
falsified by an assignment to the universally quantified variables corresponding to some path 
within the model. Therefore, $Q_m$ is true in all models of $D_\vartheta \land \psi$ 
iff \eqref{eq_qcnf} is false.
\end{proof}

\begin{prop} \label{prop_piptwo}
The problem of deciding $D \land \psi \models Q$ for a fixed universal $\GFO$ sentence $\psi$ 
on input consisting of a conjunction of atoms $D$ and a \UCQ $Q$ is in $\Pi^P_2$ for each $\psi$, 
and for some $\psi$ is $\Pi^P_2$-complete already for \BCQ{} $Q$. 
\end{prop}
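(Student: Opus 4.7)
My plan splits into an upper-bound part and a hardness part. For the upper bound I will use preservation properties: universal $\GFO$-sentences are preserved under substructures (\L{}o\'s--Tarski) and UCQs, being positive existential, are preserved under extensions, hence their negations descend to substructures. Consequently, if $\frakM \models D \land \psi \land \lnot q$ then the substructure $\frakM^\ast$ of $\frakM$ induced on the elements of $D$ also witnesses $D \land \psi \land \lnot q$: the atoms of $D$ are already in $\frakM^\ast$, $\psi$ descends, and $q$ cannot jump to true in a smaller structure. Counter-models may thus be searched within the universe $|D|$, over which the number of possible atoms is polynomial for fixed signature. A $\Sigma^P_2$ algorithm for the complement then guesses a structure $D' \supseteq D$ on $|D|$, verifies $D' \models \psi$ in polynomial time (since $\psi$ is fixed), and uses an NP oracle to certify $D' \not\models q$ (the co-NP complement of UCQ evaluation). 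This places $D \land \psi \models q$ in $\Pi^P_2$.

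For $\Pi^P_2$-hardness I reduce from $\Pi^P_2$-SAT, i.e., deciding $\forall \tup{X} \exists \tup{Y}\, \phi(\tup{X}, \tup{Y})$ for a 3CNF formula $\phi$. I fix $\psi$ to be a universal guarded sentence comprising a functional dependency $\forall x, v, v'\,(U(x) \land V(x,v) \land V(x,v') \to v = v')$ making $V$ functional on $U$-marked elements, together with a small set of guarded universal Horn rules that propagate an auxiliary ``satisfied'' predicate onto literal-slot elements whenever an appropriate witness is present. The database $D$ carries constants $0, 1$; one element per QBF variable (with the $X_i$'s marked $U$); one element per clause; and one literal-slot element per literal, connected by atoms in $D$ to its clause, its variable, its satisfying value, and its type ($X$ or $Y$). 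Since $D$ contains no $V$-atoms for the $X_i$'s, extensions $D' \supseteq D$ satisfying $\psi$ correspond to $\tup{X}$-assignments. The BCQ $q$ uses a shared existential variable $y_j$ for each $Y_j$ (the homomorphism's image of $y_j$ fixing its value), an existential $s_l$ per clause ranging over the three literal slots of that clause, and one uniform ``satisfied'' atom applied to each $s_l$. The Horn rules in $\psi$ derive this atom in $D'$ from $V(X_i, v)$ when the slot is on $X_i$, and from $y_j$'s image matching the slot's required value when the slot is on $Y_j$.

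The hard part will be making a single conjunctive query simultaneously express the disjunction inherent in clause satisfaction (any one of the three literals suffices), treat $X$- and $Y$-literals through the same query atom, and enforce consistency of the $\tup{Y}$-assignment across clauses mentioning a common $Y_j$. My strategy outsources the disjunction to $D$ (literal slots become database elements) and the $X$-vs-$Y$ case split to $\psi$ (Horn rules produce a single uniform ``satisfied'' predicate), while the third is enforced automatically since the shared variable $y_j$ in $q$ must have a unique homomorphic image. A residual subtlety I will have to address is partial extensions that leave some $X_i$ unassigned; I plan to neutralise these by arranging the literal-slot machinery so that any unassigned variable makes every literal on it trivially satisfiable via an alternative Horn rule in $\psi$, ensuring that the only extensions that can falsify $q$ are those corresponding to full $\tup{X}$-assignments, which then yields the equivalence $D \land \psi \models q \iff \forall \tup{X} \exists \tup{Y}\, \phi$.
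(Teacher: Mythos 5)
Your upper bound coincides with the paper's: by universality of $\psi$ a counter-model can be restricted to the universe of $D$ (universal sentences are preserved under substructures; negations of positive existential sentences likewise), so one universally quantifies over polynomial-size structures on $|D|$ extending $D$ that satisfy $\psi$, and existentially guesses a homomorphism from $Q$; this is a $\forall\exists$-polynomial check, hence $\Pi^P_2$. (A small side remark: preservation of universal sentences under substructures is elementary and does not require \L{}o\'s--Tarski, which is the converse characterisation.)

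Your hardness construction, however, has genuine gaps. First, the ``functional dependency'' $\forall x\, v\, v'\,\bigl(U(x) \land V(x,v) \land V(x,v') \to v = v'\bigr)$ is not a $\GFO$-sentence: the antecedent is a conjunction of three atoms, none of which contains all of $x, v, v'$, and the pair $\{v,v'\}$ is not even clique-guarded by them. Since the proposition demands a fixed universal \emph{guarded} $\psi$, this is not a cosmetic issue. Second, and more fundamentally, you propose Horn rules in $\psi$ that derive a ``satisfied'' fact ``from $y_j$'s image matching the slot's required value''. But $\psi$ is a sentence evaluated in the model $D'$; the image of $y_j$ under a homomorphism witnessing $Q$ is a property of the match, not a fact of $D'$, so a universal rule cannot fire on it. The model-side axioms have no way to see the query's existential choices, and the mechanism you sketch for the $Y$-variables cannot work as described. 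Third, you correctly observe the partial-assignment problem --- a functional dependency enforces at most one $V$-value, so extensions can leave some $X_i$ unassigned, producing spurious counter-models --- but your proposed fix via unspecified ``alternative Horn rules'' is not worked out.

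The paper's reduction avoids all three pitfalls at once with a single tiny guarded universal sentence $\psi = (\forall v\, t\, f.\, X(v,t,f))\,(V(v,t) \oplus V(v,f))$, which forces \emph{exactly} one $V$-value per variable, so no partial assignments arise. The database $D_n$ is an $S$-successor chain over $v_1,\dots,v_n$ carrying $X(v_i,1,0)$-atoms together with the complete truth table of $3$-clauses, encoded as $R_{(p,q,r)}$-atoms over the constants $0,1$. The query $Q_\vartheta$ reads off the universally chosen $X$-values through $V$-atoms along the $S$-chain, leaves the $Y$-values as unconstrained existentials $x_{n+1},\dots,x_{n+m}$, and checks each clause by one atom $R_{(p,q,r)}(x_i,x_j,x_k)$; since $R_{(p,q,r)}$ is a database relation excluding exactly the unique falsifying triple, the disjunction inside a clause is absorbed into the data and no auxiliary ``satisfied'' predicate or Horn derivation is needed. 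I would drop the Horn-rule machinery and re-derive the reduction along those lines.
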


\begin{proof}
As in the previous propositions we may restrict attention to minimal models of $D \land \psi$
as regards query answering. Note that now, due to the universality of $\psi$, 
all minimal models of $D \land \psi$ have the same universe as $D$.
Hence, to check $D \land \psi \models Q$, one merely has to universally choose a model 
of $\psi$ on the universe of $D$ and existentially guess elements realising $Q$ 
in the model chosen.
This involves first universally choosing then existentially guessing a polynomial number 
of bits in terms of $|D|$, whence membership in $\Pi^P_2$. 

We show $\Pi^P_2$-hardness by reduction from the validity problem 
for quantified propositional formulas of the form 
\begin{align*}
  \forall X_1,\ldots,X_n \ \exists X_{n+1},\ldots,X_{n+m} \ \vartheta \, ,
\end{align*}
with $\vartheta$ a 3CNF formula with free variables $X_1,\ldots,X_{n+m}$.

In the input structure $D=D_n$ the universally quantified variables $X_1,\ldots,X_n$ 
are encoded as a successor chain
\begin{align*}
  S(v_1,v_2) \land S(v_2,v_3) \land \ldots \land S(v_{n-1},v_{n}) \ \land \ 
  X(v_1,1,0) \land \ldots \land X(v_n,1,0) \, ,
\end{align*}
along with the entire table of satisfying assignments 
of each of the three-literal clauses 
\begin{align*}
  \bigwedge_{(x,y,z)\in\{0,1\}^3}\bigwedge_{(i,j,k)\in\{0,1\}^3\setminus\{(x,y,z)\}} R_{(x,y,z)}(i,j,k) \ .
\end{align*}
The formula $\psi$ is devised so that (minimal) models of $D \land \psi$ 
will correspond to all possible assignments of the variables $X_1,\ldots,X_n$, for whatever $n$. 
We set 
\begin{align*}
  \psi \ = \ (\,\forall v.\,X(v,t,f)\,)\  V(v,t) \oplus V(v,f) \, .
\end{align*}
Finally, we use the query $Q_\vartheta$ to guess Boolean values 
for the existentially quantified variables $X_{n+1}, \ldots,X_{n+m}$ and 
to test satisfaction of $\vartheta$ by checking all triplets of Boolean values 
for variables occurring together in a clause 
against the truth table of the type of that clause.~Let 
\begin{align*}
   Q_\vartheta = 
   \exists z_1,\ldots,z_n, x_1,\ldots,x_{n+m} \ 
    \bigwedge_{1\leq i<n} \left( S(z_i,z_{i+1}) \land \, V(z_i,x_i) \right) \ \land \ V(z_{n},x_{n}) \ \ \land \\
    \bigwedge_{\tiny \begin{array}{c}(-1)^p X_i \lor (-1)^q X_j \lor (-1)^r X_k \\ 
                     \text{ a clause in } \vartheta\end{array}} 
               R_{(p,q,r)}(x_i,x_j,x_k) \ .
\end{align*}
It is now easy to verify that $D_n \land \psi \models Q_\vartheta$ \! iff \! 
$\forall X_1,\ldots,X_n \ \exists X_{n+1},\ldots,X_{n+m} \ \vartheta$ is valid. 
\end{proof}

\begin{prop} \label{prop_np}
For any fixed $\psi \in \GFO$ and input $D$ a conjunction of atoms, satisfiability 
of $D \land \psi$ is in NP and is NP-complete for certain universal $\psi$.
Deciding $D \land \psi \models Q$ on input $D$ for any fixed $\psi$ as before 
and fixed \UCQ $Q$ is in co-NP; and it is co-NP-complete already for $Q=\bot$ 
and certain universal $\psi$.
\end{prop}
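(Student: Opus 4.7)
The plan is to reduce both upper bounds to a single NP-procedure based on guarded bisimulation invariants, and to derive both hardness claims from one 3-SAT encoding.

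For the NP upper bound on satisfiability of $D \land \psi$, I would first bring the fixed $\psi$ into Scott normal form via Lemma~\ref{lemma_normalform}. Since $\psi$ is fixed, so are its signature $\tau$ and its width $w$, and hence there are only constantly many atomic $\tau$-types of arity at most $w$; consequently the guarded bisimulation invariants $\frakI$ satisfying $\psi$ in the sense of Proposition~\ref{prop_satcrit} form a fixed finite set. The algorithm guesses such an $\frakI$ together with a type-assignment $\mu$ sending each maximal guarded tuple of $D$ to a vertex of $\frakI$, and verifies in polynomial time in $|D|$ that (i) the atomic type labelling $\mu(\tup{a})$ agrees with the atoms asserted on $\tup{a}$ in $D$, and (ii) whenever two guarded tuples $\tup{a},\tup{b}$ of $D$ overlap, the vertices $\mu(\tup{a})$ and $\mu(\tup{b})$ are joined in $\frakI$ by an edge labelled by that overlap. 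The main correctness obligation, which I expect to be the principal obstacle, is to show that any such consistent pair $(\frakI,\mu)$ certifies a bona fide model of $D \land \psi$; following the construction underlying Proposition~\ref{prop_satcrit} one glues $D$ to a tree-like unravelling of $\frakI$, and the universal part of $\psi$ is preserved precisely because all atomic types involved are induced from the $\psi$-satisfying vertices of $\frakI$.

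For the NP lower bound I would reduce from 3-SAT via a fixed universal $\psi$. Using unary predicates $P,N$ to encode truth values and, for each of the eight polarity patterns $\ell\in\{+,-\}^3$, a ternary predicate $C_\ell$, take $\psi$ as the conjunction of $(\forall x.\, x=x)\,(P(x)\oplus N(x))$ and, for each $\ell$, the guarded sentence
\[
  (\forall xyz.\,C_\ell(x,y,z))\; \lnot\bigl(L^\ell_1(x)\land L^\ell_2(y)\land L^\ell_3(z)\bigr),
\]
where $L^\ell_i$ is $N$ if the $i$-th literal of a clause with pattern $\ell$ is positive and $P$ otherwise, so that the unique falsifying assignment of such a clause is outlawed. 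Given a 3-CNF formula over variables $X_1,\ldots,X_n$ with $m$ clauses, encode it as $D$ comprising $n$ distinct constants $v_1,\ldots,v_n$ and one atom $C_\ell(v_{i_1},v_{i_2},v_{i_3})$ per clause recording its variables and polarity pattern. Straightforwardly, $D\land\psi$ is satisfiable iff the 3-CNF formula is.

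Both co-NP claims then follow. For the upper bound, invoke Lemma~\ref{lemma_treeification}~(ii) to replace the fixed $Q$ by its treeification $\chi^\tau_Q$, a union of acyclic CQs; since each such CQ is equivalent to a guarded existential sentence, $\lnot\chi^\tau_Q$ is equivalent to a fixed guarded sentence $\psi'$, and hence $D\land\psi\models Q$ iff $D\land\psi\land\psi'$ is unsatisfiable, placing the problem in co-NP by the NP procedure above applied to the fixed $\GFO$-sentence $\psi\land\psi'$. The co-NP lower bound with $Q=\bot$ is immediate, as $D\land\psi\models\bot$ holds iff $D\land\psi$ is unsatisfiable, the complement of the NP-hard problem just constructed.
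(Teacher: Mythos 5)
Your proof is correct and, at the level of detail given, is as rigorous as the paper's; it differs in a few inessential choices. For the NP upper bound the paper pre-computes the finite set of \emph{admissible} guarded atomic types (those realisable in some model of the Scott normal form $\Psi$) and nondeterministically assigns such types to the atoms of $D$, checking that (i) each assigned type contains its atom, (ii) overlapping atoms get types that restrict consistently, and (iii) the resulting partial structure respects $\Psi$; you instead guess a whole $\psi$-satisfying invariant $\frakI$ from the fixed set sanctioned by Proposition~\ref{prop_satcrit} together with a labelling $\mu$ of $D$'s guarded tuples. These are two packagings of the same idea, since an admissible type is precisely a vertex label of some such $\frakI$, and both leave the ``glue an extension onto $D$'' step at the same level of informality. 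One point you should make explicit in the completeness direction: the invariant of an arbitrary model of $D\land\psi$ need not have distinct type labels, so you have to pass to the quotient by equality of atomic types (which still satisfies $\psi$ and preserves the overlap-witnessing edges) before it lands in the fixed finite set you quantify over. For NP-hardness the paper reduces from graph $3$-colourability using a binary edge predicate and three unary colour predicates and the universal sentence \eqref{eq_3CNF}, whereas you reduce from 3-SAT with eight polarity-tagged ternary clause predicates; both give a fixed universal guarded $\psi$ and are interchangeable. Your co-NP argument --- precompute the treeification $\chi^\tau_Q$ of the fixed query, fold $\lnot\chi^\tau_Q$ into the fixed guarded sentence, and invoke the NP machine --- is exactly the paper's, as is the observation that $Q=\bot$ reduces the lower bound to unsatisfiability.
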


\begin{proof}
To test satisfiability of $D \land \psi$ for a fixed $\psi$ one can pre-compute 
the Scott normal form $\Psi$ of $\psi$ as in \eqref{eq_Scott_nf}, %Lemma~\ref{lemma_normalform}
as well as the set of admissible guarded atomic types in its signature, 
i.e.~those atomic types that can be realised in some model of $\Psi$. 
Having done that, for each given input $D$ it remains to be verified 
that its atoms can be assigned admissible atomic types wrt.~$\Psi$ 
such that (i) the type assigned to each atom actually contains that atom;  
(ii) overlapping atoms are assigned consistent types restricted to their overlap;
and (iii) the resulting structure satisfies $\Psi$. 
Such an assignment can be guessed and verified in NP.

To show NP-hardness of satisfiability of $D \land \psi$ for an appropriate $\psi$ 
observe that $3$-colourability of graphs can be directly formalised within this problem. 
Every simple graph $G=(V,E)$ can be identified with the conjunction $\bigwedge_{(u,v)\in E} E(u,v)$ 
with vertices $v\in V$ perceived as distinct constants. 
Then $G$ is $3$-colourable iff $G \land \psi$ is satisfiable, where  
\begin{equation}\label{eq_3CNF}
  \psi = \forall v  \bigvee_{i<3} \big(\, C_i(v) \land \bigwedge_{i\neq j <3} \lnot C_j(v) \,\big)
         \ \land \ 
         \left(\,\forall u v. E(u,v)\,\right) \ 
             \bigwedge_{i} \lnot \big(\, C_i(u) \land C_i(v) \, \big) \ .
\end{equation}

Turning to the problem $D \land \psi \models Q$ for fixed $\psi$ and $Q$ and input $D$, 
note that this is equivalent to the unsatisfiability of $D \land \psi \land \lnot \chi_Q$,
where $\chi_Q$ is the ``treeification'' of $Q$, that can now be precomputed since $Q$ is fixed. 
Thus, by the above argument, $D \land \psi \models Q$ on input $D$ can be verified in co-NP 
for any fixed $\psi$ and $Q$; and it is co-NP-complete for $\psi$ of \eqref{eq_3CNF} and $Q=\bot$. 
\end{proof}

%//////////////////////////////////////////////////////////////////////////////////////////

\section{Canonisation and capturing} \label{sec_Capture}

%\\\\\\\\\\\\\\\\\\\\\\\\\\\\\\\\\\\\\\\\\\\\\\\\\\\\\\\\\\\\\\\\\\\\\\\\\\\\\\\\\\\\\\\\\\

The abstract version of the \emph{capturing Ptime problem} asks for 
an effective (recursive, syntactic) representation of all polynomial-time 
computable Boolean queries over finite relational structures. 
The core problem is not so much the effective representation of the class 
of all polynomial-time algorithms, which is easy via polynomially clocked  
Turing machines, say. Rather it lies in the requirement that these machines 
or algorithms must represent \emph{queries} on finite structures, i.e., 
they need to respect isomorphism in the sense of producing the same answer 
on isomorphic structures (or on inputs that represent isomorphic structures). 
In notation to be used below, we indicate this constraint explicitly in 
writing $\PTime/{\simeq}$ for the set of those \PTime algorithms that respect 
isomorphism. 

This undecidable semantic constraint is almost trivially enforcible and 
therefore mostly goes unnoticed when dealing with queries on \emph{linearly ordered}  
finite structures. This is because there is an obvious canonisation procedure 
on the class of all linearly ordered $\tau_<$-structures (we let $< \in \tau_<$ 
be the distinguished binary relation that is interpreted as a linear ordering 
in the class $\calC^<$ of all linearly ordered finite $\tau_<$-structures). 

By \emph{canonisation} (w.r.t.\ isomorphism over $\calC$) we here mean a map 
$\canon \colon \calC \rightarrow \calC$ such that $\canon(\frakA) \simeq \frakA$ 
and $\canon(\frakA) = \canon(\frakA')$ whenever $\frakA \simeq \frakA'$. 
Indeed, for $\frakA \in \calC^<$ we may just identify the linearly ordered 
universe of $\frakA$, $(A,<^\frakA)$, with an initial segment of $(\NN,<)$ 
to obtain such a canonical representative of the isomorphism type of $\frakA$.
Then the application of arbitrary polynomial time decision procedures to 
$\canon(\frakA)$ for $\frakA \in \calC^<$ -- i.e., the application of 
semantically unconstrained algorithms after pre-processing with $\canon$ -- 
provides an effective representation of the class of all polynomial time 
computable Boolean queries on $\calC$. 
It is well known from the fundamental results of Immerman~\cite{Imm86} and 
Vardi~\cite{Var82} that this abstract capturing result finds a concrete 
logical counterpart in the logics \LFP (least fixpoint logic) and %(equivalently) 
\IFP (inductive fixpoint logic). 
The open question whether \PTime may also be captured, abstractly or by some 
suitable logic, over all not necessarily ordered finite structures has driven 
much of the development of descriptive complexity in finite model theory. 
Interesting variations of this question concern
\begin{enumerate}[label=(\alph*)]
\item restricted classes of finite structures other than $\calC^<$; and 
\item rougher equivalence relations than $\simeq$.
\end{enumerate}

We point to the work of Grohe and his survey~\cite{Grohesurvey} 
for successes with larger and larger natural classes of structures
in the sense of~(a), and to~\cite{Otto99TCS} for a very simple but 
interesting capturing result in the sense of~(b) concerning 
bisimulation-invariant \PTime ($\PTime/{\sim}$).
The class $\PTime/{\sim}$ consists of those \PTime Boolean queries 
that respect bisimulation equivalence; it can be regarded as the 
class of \PTime queries in the modal world. 
In that case, canonisation is obtained through passage to a definably 
ordered version of the bisimulation quotient of the given structure, 
$\Inv^< (\frakA) := (\Inv(\frakA),<)$ where $\Inv(\frakA) = \frakA/{\sim}$
such that $\Inv(\frakA) \sim \frakA$ is trivially satisfied.
For reasons indicated above, the distinction between (canonical) standard 
representations and definably linearly ordered versions of structures 
is often blurred, and in fact immaterial for our concerns. 
Hence we may avoid explicit passage to a standard representation of a 
linearly ordered structure like $\Inv^< (\frakA)$ and seemingly weaken 
the requirement that $\Inv^< (\frakA) = \Inv^< (\frakA')$ for $\frakA \sim \frakA'$ 
to $\Inv^< (\frakA) \simeq \Inv^< (\frakA')$.

A polynomial time canonisation procedure $\canon \colon \calC \to \calC$ 
w.r.t.\ some equivalence $\approx$ (on~$\calC$) will always yield 
an abstract capturing result for the class of all \PTime computable 
Boolean queries on finite structures from $\calC$ that respect $\approx$, 
which we denote by $\PTime/{\approx}$ (over~$\calC$). 
Since pre-processing with the canonisation procedure can be performed
in \PTime and enforces $\approx$-invariance, it can be coupled with 
any effective representation of otherwise unconstrained 
\PTime decision algorithms to capture $\PTime/{\approx}$:
\[
  \PTime/{\approx} \;\equiv\; \PTime \circ \canon,
\]
in a notation that suggests how canonisation acts as a filter to 
guarantee the required semantic invariance. 
If canonisation produces linearly ordered output structures, which we 
indicate notationally as in $\canon^<(\frakA) = (\canon(\frakA),<)$, 
then $\PTime/{\approx}$ is in fact captured by \LFP or \IFP over the 
canonisation results by the Immerman--Vardi Theorem:
\[
  \PTime/{\approx} \;\equiv\; \LFP \circ \canon^<
  \;\equiv\; 
  \IFP \circ \canon^<.
\]

In the case of $\PTime/{\sim}$, \cite{Otto99TCS} correspondingly translates 
the abstract capturing result into capturing by a suitable extension of 
the modal $\mu$-calculus, which exactly matches the expressive power of \LFP 
over the (internally interpretable) linearly ordered canonisations 
$\canon^<(\frakA) := \Inv^<(\frakA)$ indicated above. 

Here we primarily want to provide an abstract capturing result 
for $\gbisim$-invariant \PTime, $\PTime/{\gbisim}$, which
corresponds to \PTime in the guarded world. This is achieved through a
polynomial canonisation w.r.t.\ guarded bisimulation equivalence
which produces (definably) linearly ordered representatives of the 
complete guarded bisimulation types of given finite relational structures.
This canonisation may be of interest beyond our application to 
the capturing issue. The proposed canonisation produces linearly ordered
output structures $\canon^<(\frakA)$ that are uniformly 
interpretable over powers of the original structures $\frakA$ in
\IFP and \LFP in a $\gbisim$-invariant manner. 
An adaptation of the approach of~\cite{Otto99TCS} therefore also entails 
a concrete logical capturing result by means of some higher-dimensional 
guarded fixpoint logic, but we do not pursue this here. 

We fix some terminology, similar to the one discussed e.g.\ in~\cite{OttoLNL}, 
which makes sense for arbitrary equivalences $\approx$ between structures; 
we are going to use these notions solely with reference to guarded bisimulation 
equivalence.

\begin{defi} \label{invcandef} ~
\begin{enumerate}
\item A \emph{complete invariant $\Inv$} for $\approx$ on $\calC$ 
      (with values in some set $\calD$) is a map 
      \[\begin{array}{rcl}
        \Inv \colon \calC &\longrightarrow& \calD \\
        \frakA &\longmapsto & \Inv(\frakA) 
      \end{array}\]  
      such that $\Inv(\frakA) = \Inv(\frakA')$ 
      for all $\frakA \approx \frakA' \in \calC$.%
      \footnote{As above, we use notation $\Inv^<(\frakA) := (\Inv(\frakA),<)$ 
         or $\canon^<(\frakA) := (\canon(\frakA),<)$ 
         to indicate an invariant or canonisation that produces as output 
         some relational structure, which is (definably) linearly ordered. 
         In this context we relax e.g.\ the condition $\Inv(\frakA) = \Inv(\frakA')$ 
         to $\Inv^<(\frakA) \simeq \Inv^<(\frakA')$ without any loss.}
\item An \emph{inversion of the invariant $\Inv$} is then a map 
      $F \colon \calD \rightarrow \calC$ that acts as a right inverse to $\Inv$: 
      $\Inv \circ F = \mathrm{id}$, or, equivalently, 
      $F(\Inv(\frakA)) \approx \frakA$ for all $\frakA \in \calC$.
\item \emph{Canonisation} w.r.t.\ $\approx$ over $\calC$ is a map 
      \[\begin{array}{rcl}
        \canon \colon \calC &\longrightarrow& \calC \\
        \frakA &\longmapsto & \canon(\frakA) 
      \end{array}\]
      such that $\canon(\frakA) \approx \frakA$ 
      and $\canon(\frakA) = \canon(\frakA')$ 
      for all $\frakA \approx \frakA' \in \calC$.%
        \addtocounter{footnote}{-1}\footnotemark
\end{enumerate}
\end{defi}

\noindent Note that canonisations are complete invariants whose values are 
structures of the original kind while an invariant in general may 
produce values of a different format. 
Clearly an inversion of an invariant always yields a canonisation
of the form $\canon := F \circ \Inv$.
Note also that (1) says that $\approx$ is induced by equality of $\Inv$-images. 
%We recall that in those cases where linearly ordered structures 
%are produced as values, we may relax the equality condition on 
%values on $\approx$-equivalent structures to an isomorphism requirement.

\newtheorem*{theorem_canon}{Theorem (Canonisation) \ref{thrm_canon}}
\begin{theorem_canon}
Guarded bisimulation equivalence on finite relational structures 
admits \PTime canonisation. 
More specifically, definably linearly ordered versions of the 
guarded bisimulation game invariants $\Inv^<(\frakA) := (\Inv(\frakA),<)$ 
discussed above are \PTime computable complete invariants w.r.t.\ $\gbisim$ 
and admit \PTime inversions $F$ such that $\canon^< := F^< \circ \Inv^<$ 
produces linearly ordered representatives from the $\gbisim$-class 
of every finite relational structure $\frakA$. 
The values of both maps, $\Inv^<(\frakA)$ and $\canon^<(\frakA)$ 
are uniformly \IFP- and \LFP-interpretable as ordered quotients 
over $\frakA^w$ in a $\gbisim$-invariant manner, for fixed relational 
vocabulary $\tau$ of width $w$.
\end{theorem_canon}

\begin{cor}[Capturing]
\label{capcor}
Guarded-bisimulation-invariant \PTime can be captured (admits an effective, 
syntactic representation) in the form
\[\begin{array}{rcl}
  \PTime/{\gbisim} &\equiv&  
      \PTime \circ \canon, \mbox{ or } \\
  \PTime/{\gbisim} &\equiv&  
      \LFP \circ \canon^< \;\;\equiv\;\; \IFP \circ \canon^<. 
\end{array}\]
\end{cor}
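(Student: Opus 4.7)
The plan is to derive Corollary~\ref{capcor} directly from Theorem~\ref{thrm_canon}, along the standard template that links \PTime canonisation to capturing of an invariant fragment of \PTime. The proof splits into two symmetric inclusions at the abstract level, which then lift to the logical versions via the Immerman--Vardi theorem.

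For the inclusion $\PTime \circ \canon \subseteq \PTime/{\gbisim}$, I would observe that $\canon$ is itself polynomial time by Theorem~\ref{thrm_canon}, and that, being a canonisation w.r.t.\ $\gbisim$, it satisfies $\canon(\frakA) = \canon(\frakA')$ whenever $\frakA \gbisim \frakA'$. Consequently, for any (otherwise unconstrained) \PTime algorithm $M$, the composition $M \circ \canon$ is polynomial time and depends only on the $\gbisim$-class of its input, hence defines a member of $\PTime/{\gbisim}$. For the converse inclusion $\PTime/{\gbisim} \subseteq \PTime \circ \canon$, let $Q$ be a $\gbisim$-invariant Boolean query computed by a \PTime algorithm $M$. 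Since $\canon(\frakA) \gbisim \frakA$, invariance of $Q$ yields $Q(\frakA) = Q(\canon(\frakA)) = M(\canon(\frakA))$, so $Q$ is realised as $M \circ \canon$. Together these two inclusions establish the first abstract equivalence.

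For the logical equivalences, I would invoke the Immerman--Vardi theorem: on the class of finite linearly ordered structures, both \LFP\ and \IFP\ capture \PTime. Since $\canon^<(\frakA)$ is linearly ordered and serves as a $\gbisim$-canonical representative of $\frakA$, every polynomial-time property of $\canon^<(\frakA)$ is equivalently \LFP- (and \IFP-) definable on $\canon^<(\frakA)$. Composing with the $\gbisim$-invariant canonisation step, exactly as in the abstract case above, yields $\PTime/{\gbisim} \equiv \LFP \circ \canon^< \equiv \IFP \circ \canon^<$.

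The main obstacle---producing a \PTime canonisation w.r.t.\ $\gbisim$ together with the $\gbisim$-invariant \IFP/\LFP-interpretation of $\canon^<(\frakA)$ over a fixed power $\frakA^w$---has already been dispatched by Theorem~\ref{thrm_canon}; what remains here is essentially bookkeeping to verify that composing an arbitrary \PTime (respectively \LFP- or \IFP-) procedure with the canonisation preprocessor gives an effective syntactic representation of precisely the $\gbisim$-invariant \PTime queries. Any attempt to strengthen the corollary to a concrete logical capturing result in the spirit of~\cite{Otto99TCS}---say, by a higher-dimensional guarded fixpoint formalism whose formulas are evaluated internally on $\frakA$ rather than on the externally canonised $\canon^<(\frakA)$---would additionally rely on the uniform guarded-bisimulation invariance of the \IFP/\LFP-interpretation of $\canon^<(\frakA)$ inside $\frakA^w$, but this is exactly the content of the final clause of Theorem~\ref{thrm_canon} and is not pursued here.
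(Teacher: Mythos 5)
Your proposal is correct and follows the same route as the paper: the paper establishes the corollary in the discussion preceding Theorem~\ref{thrm_canon} (the passage explaining how a \PTime canonisation always yields $\PTime/{\approx} \equiv \PTime \circ \canon$, upgraded to $\LFP \circ \canon^<$ and $\IFP \circ \canon^<$ by Immerman--Vardi once the output is linearly ordered), and you have simply unfolded the two inclusions and the Immerman--Vardi step explicitly. Your closing remark about what a concrete logical capturing in the style of~\cite{Otto99TCS} would additionally require matches the paper's own deferral of that strengthening.
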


We fix a finite relational vocabulary $\tau$ of width $w$. 
For a finite $\tau$-structure $\frakA$, we let $\GamDom(\frakA) \subseteq A^w$ 
be the set of all maximal guarded tuples of $\frakA$ (with repetitions 
of components where appropriate, to uniformly pad tuples to arity $w$). 
The guarded bisimulation game graph  $\Gam(\frakA)$ introduced in Section~\ref{sec_prelim}
has $\GamDom(\frakA)$ as its set of vertices, unary predicates for atomic types, 
and edge relations $(E_\rho)_{\rho \in \Sigma}$ for the set $\Sigma$ of all 
partial bijections $\rho \subseteq \{1,\ldots, w\} \times \{1,\ldots, w\}$, 
where 
\[
  (\bar{a},\bar{b}) \in E_\rho
  \quad \mbox{ iff } \quad 
  a_i = b_j \mbox{ for all $(i,j) \in \rho$.}
\]

The guarded bisimulation invariant $\Inv(\frakA) := \Gam(\frakA)/{\sim}$, 
as discussed in Section~\ref{sec_prelim}, is obtained as the quotient of 
this game graph w.r.t.\ \emph{modal} bisimulation equivalence.%
  \footnote{Note that the $\sim$-equivalence classes of vertices in 
    $\Gam(\frakA)$ are precisely the $\gbisim$-equivalence classes 
    of the maximal guarded tuples in $\GamDom(\frakA)$.}
Passage from $\frakA$ to this quotient $\Inv(\frakA)$ almost provides a complete 
invariant for $\gbisim$ in the sense of Definition~\ref{invcandef}, but not quite:
clearly $\frakA \gbisim \frakA'$ implies $\Gam(\frakA) \sim \Gam(\frakA')$ and 
hence $\Inv(\frakA) = \Gam(\frakA)/{\sim} \simeq \Gam(\frakA')/{\sim} = \Inv(\frakA')$, 
but not $\Inv(\frakA) = \Inv(\frakA')$ as required. 
As discussed above, this defect is overcome as soon as we provide a definable 
linear ordering of the universes $\GamDom(\frakA)/{\gbisim}$ of $\Inv(\frakA)$ and thus 
turn them into linearly ordered complete invariants $\Inv^<(\frakA) := (\Inv(\frakA),<)$.

Such a definable linear ordering can be obtained in an inductive refinement 
process, which produces a sequence of pre-orderings $\preceq^i$ on $\GamDom(\frakA)$. 
This refinement process starts from an arbitrary but fixed ordering of the 
finite set of atomic types of $w$-tuples, which is uniformly imposed as a 
global pre-order so that  
\[
  (\bar{a} \preceq^0 \bar{b}
  \mbox{ and }
  \bar{a} \preceq^0 \bar{b}) 
  \;\; \Leftrightarrow \;\;
  \frakA,\bar{a} \;\;\gbisimn{0}\;\; \frakA,\bar{b},
\]
i.e., the equivalence relation induced by $\preceq^0$ is atomic equivalence, 
which is $\gbisimn{0}$-equivalence, of maximal guarded tuples. In other words, 
$\prec^0$ uniformly defines a linear ordering of the quotient $\GamDom(\frakA)/{\gbisimn{0}}$ 
(or, equivalently, of $\Gam(\frakA)/{\sim^0}$ in terms of modal bisimulation).
The refinement proceeds in such a manner that each level $\preceq^i$ induces 
a linear ordering of the quotient 
$\GamDom(\frakA)/{\gbisimn{i}}$ (or $\Gam(\frakA)/{\sim^i}$).
Then the inductive fixpoint of this refinement sequence produces a pre-ordering $\preceq$ 
that linearly orders the quotient $\GamDom(\frakA)/{\gbisim}$ (or $\Gam(\frakA)/{\sim}$) 
and thus yields the desired $\Inv^<(\frakA)$. 
 
%A simple analysis of the guarded bisimulation game shows that 
%the following stipulations implement such a refinement. 

Besides $\preceq^0$ we fix an arbitrary ordering on the set $\Sigma$ of edge labels 
in the game graphs $\Gam(\frakA)$. For $\bar{a} \in \GamDom(\frakA)$ we define 
Boolean incidence functions $\iota_{\rho,\beta}$ for $\rho \in \Sigma$ and  
$\beta \in G(\frakA)/{\gbisimn{i}}$ according to
\[
  \iota_{\rho,\beta}(\bar{a}) := 1 \;\;
  \mbox{ iff }
  \;\;
  \bigl\{ \bar{b} \in G(\frakA) \colon (\bar{a},\bar{b}) \in \rho \bigr\} 
  \cap \beta 
  \not= \emptyset.
\]

Note that the $\iota_{\rho,\beta}$-value precisely describes the 
existence or non-existence of a move along a $\rho$-edge to a position 
in the $\gbisimn{i}$-class $\beta$. A simple analysis of one round 
in the guarded bisimulation game shows that 
\[
  \frakA,\bar{a} \;\gbisimn{i+1}\; \frakA,\bar{a}'
  \quad \mbox{  iff } \quad  
  \iota_{\rho,\beta}(\bar{a}) = \iota_{\rho,\beta}(\bar{a}')
  \mbox{ for all $\rho \in \Sigma$ and $\beta \in \GamDom(\frakA)/{\gbisimn{i}}$.}
\]

It follows that the pre-ordering $\preceq^{i+1}$ defined by a lexicographic 
ordering of tuples w.r.t.\ $\iota_{\rho,\beta}$-values is as desired. 

\begin{defi}
Let $\Inv^< \colon \frakA \mapsto (\Inv(\frakA),<) := (\Gam(\frakA),\prec)/{\sim}$ 
be the linearly ordered version of the quotient of the guarded bisimulation 
game graph $\Gam(\frakA)$ described above. We now refer to this linearly 
ordered structure as the \emph{ordered guarded-bisimulation invariant of $\frakA$.}
\end{defi}

The following is then immediate from the preceding discussion and the fact 
that the inductive refinement process outlined above is naturally captured 
as an inductive fixpoint (in the sense of inductive fixpoint logic \IFP), 
and hence, buy the Gurevich--Shelah Theorem also by a least fixpoint process 
(in the sense of least fixpoint logic \LFP).

\begin{lem}
$\Inv^<$ provides a complete invariant w.r.t.\ guarded bisimulation equivalence 
on the class of all finite $\tau$-structures. 
Moreover, $\Inv^<(\frakA)$ is \PTime computable from $\frakA$ and uniformly 
interpretable in a $\gbisim$-invariant manner as a quotient over $\frakA^w$, 
where $w$ is the width of $\tau$, in \IFP and \LFP.
\end{lem}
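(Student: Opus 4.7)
The plan is to verify the three assertions of the lemma in turn, leveraging the inductive construction of $\preceq^i$ described above.

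For the \emph{completeness of $\Inv^<$ as an invariant}, I would argue that the construction of the preorders $\preceq^i$ yields at each stage a linear ordering of the quotient $\GamDom(\frakA)/{\gbisimn{i}}$, as already noted in the excerpt. Since the finiteness of $\GamDom(\frakA)$ forces the refinement chain $\preceq^0, \preceq^1, \ldots$ to stabilise, its fixpoint $\preceq$ induces the equivalence $\gbisim$ restricted to maximal guarded tuples and linearly orders the quotient by $\gbisim$. The upshot is that $\Inv^<(\frakA)$ is, up to isomorphism, uniquely determined by the bisimulation type of $\Gam(\frakA)$, so $\frakA \gbisim \frakA'$ implies $\Inv^<(\frakA) \simeq \Inv^<(\frakA')$; conversely, any isomorphism between $\Inv^<(\frakA)$ and $\Inv^<(\frakA')$ pulls back to a bisimulation between the game graphs, hence to $\frakA \gbisim \frakA'$ by the correspondence stated in Section~\ref{sec_prelim}. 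Canonical enumeration of the linearly ordered quotient turns the output into a genuine equality-invariant.

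For \emph{\PTime computability}, I would first observe that $|\GamDom(\frakA)| \leq |\frakA|^w$ and that the number of edge labels $|\Sigma|$ is a constant depending only on $w$. The initial preorder $\preceq^0$ is obtained by sorting tuples by the (constantly many) atomic types. Each refinement step computes, for every tuple $\bar{a}$ and every pair $(\rho,\beta)$ with $\rho \in \Sigma$ and $\beta$ a current $\preceq^i$-class, the Boolean incidence value $\iota_{\rho,\beta}(\bar{a})$, and then sorts tuples lexicographically with respect to these values. Both the incidence computation and the lex sort are clearly polynomial in $|\frakA|^w$, and the refinement strictly coarsens only finitely many times --- at most $|\GamDom(\frakA)|$ times --- before stabilising. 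The total running time is therefore polynomial in $|\frakA|$ for fixed $\tau$.

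For the \emph{uniform \IFP/\LFP interpretation} over $\frakA^w$, I would interpret vertices of $\Inv^<(\frakA)$ as $w$-tuples from $\frakA$, using a first-order formula $\mathtt{maxguarded}(\bar{x})$ to carve out $\GamDom(\frakA)$; the atomic-type labels and the edge relations $E_\rho$ are themselves quantifier-free definable over $\frakA^w$. The preorder $\preceq$ is then defined by a single simultaneous induction: I would introduce a binary relation symbol $R$ on $w$-tuples and write a positive first-order formula $\varphi(R,\bar{x},\bar{y})$ whose intended meaning is that $\bar{x} \preceq^{i+1} \bar{y}$ given that $R$ represents $\preceq^i$ --- this formula expresses the lexicographic comparison of the $\iota_{\rho,\beta}$-profiles, which is first-order in $R$ since the classes $\beta$ are the equivalence classes of the symmetrisation of $R$ and $\Sigma$ is fixed. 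The stage-comparison property of \IFP delivers the fixpoint $\preceq$ in polynomial time; passage to the quotient and translation to \LFP is then automatic by the Gurevich--Shelah theorem, since the interpreted structure comes equipped with a definable linear order. The main technical point to verify carefully is that $\varphi(R,\bar{x},\bar{y})$ really can be written as a positive FO formula in $R$ and in the atomic diagram of $\frakA$; but since lex comparison of vectors of Boolean values indexed by the $R$-classes only needs existential quantification over classes and positive occurrences of $R$ to isolate each class, this poses no difficulty.
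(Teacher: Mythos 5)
Your overall approach matches the paper's: the paper's own ``proof'' consists essentially of observing that the lemma is immediate from the preceding construction, that the refinement process is naturally an inductive fixpoint (\IFP), and that \LFP-definability then follows via the Gurevich--Shelah theorem. Your elaborations of the completeness argument (finiteness of $\GamDom(\frakA)$ forces stabilisation; the limit preorder induces $\gbisim$) and of \PTime computability ($|\GamDom(\frakA)| \leq |\frakA|^w$, constantly many edge labels, polynomially many refinement rounds, each round polynomial) are both correct and in the spirit of the paper.

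There is, however, a genuine flaw in your treatment of the \IFP/\LFP interpretation, and it sits exactly at the point you yourself flagged as ``the main technical point to verify carefully''. The one-step refinement formula $\varphi(R,\bar{x},\bar{y})$ is \emph{not} positive in $R$ under any natural encoding. To carry out the lexicographic comparison one has to express both $\iota_{\rho,\beta}(\bar{x}) = 1$ (there exists a $\rho$-successor of $\bar{x}$ lying in the class $\beta$) and $\iota_{\rho,\beta}(\bar{x}) = 0$ (no $\rho$-successor of $\bar{x}$ lies in $\beta$). The latter is a universal condition of the form $\forall \bar{z}\,(E_\rho(\bar{x},\bar{z}) \limp \bar{z} \not\approx_R \bar{c})$ for a representative $\bar{c}$ of $\beta$, and since $\approx_R$ is the symmetrisation $R \cap R^{-1}$, negating it introduces a negative occurrence of $R$. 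So your final sentence --- that establishing positivity ``poses no difficulty'' --- is incorrect. It is also unnecessary: \IFP (inflationary fixpoint) is defined for arbitrary first-order formulas and does not require monotonicity, and \LFP-definability then follows from the Gurevich--Shelah theorem, which you already invoke. Drop the positivity claim and let the argument rest on those two facts; this is precisely what the paper does.

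A second, smaller slip: you let $R$ represent $\preceq^i$, but the refinement stages satisfy $\preceq^0 \supseteq \preceq^1 \supseteq \cdots$, so this is \emph{deflationary} and not directly an \IFP computation. Have $R$ instead accumulate the strict part $\prec^i$ (equivalently, the set of ``already separated'' pairs), which grows monotonically under refinement; $\preceq$ and $\approx$ are then recovered from the fixpoint by a single first-order post-processing step.
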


In order to prove Theorem~\ref{thrm_canon} and, as our main goal, the abstract 
capturing result of Corollary~\ref{capcor}, it therefore suffices to provide a 
\PTime computable (and hence also \IFP- and \LFP-interpretable) inversion for 
the complete invariant $\Inv^<$. This, in combination with $\Inv^<$, produces a 
\PTime computable (and automatically \IFP- and \LFP-interpretable) canonisation 
as follows: 
\[\xymatrix{
  *++{\frakA} \ar@<-3pt>@{|->}@/_2pc/[rrrr]^{\canon^<} \ar@{|->}[rr]^{\Inv^<} && 
  *++{\Inv^<(\frakA)} \ar@{|->}[rr]^{F^<} &&
  *++{\makebox(1,1)[l]{$\canon^<(\frakA) = F^<(\Inv^<(\frakA)) =(\canon(\frakA),<)$}}
}\rule{6cm}{0pt}
%\canon^< \colon \frakA \longmapsto (\canon(\frakA),<)
\]\vspace{4 pt}

\noindent yields a linearly ordered representative of the $\gbisim$-equivalence 
class of $\frakA$. 

In fact, we obtain $\canon^<(\frakA)$ as an ordered version of %the outcome of 
the Rosati cover $\frakR_2(\Inv^<(\frakA))$ obtained from an ordered version 
of $\Inv(\frakA)$.
Indeed, from Theorem~\ref{thrm_main_from_gbis_invariant} we already know that $\frakR_2(\frakI)$ 
is a suitable candidate for inverting an ordered guarded bisimulation invariant 
$(\frakI,<)$, because $\Inv^<(\frakR_2(\frakI)) = (\frakI,<)$. 
It remains to define a canonical linear ordering of the Rosati cover. 
This is a trivial exercise given the term structure of the elements 
of $\frakR_2$. First, using the linear order of a given invariant $(\frakI,<)$ 
and the standard ordering of natural numbers we define a linear ordering of 
constants $c^j_{e,i}$, say, in a lexicographic manner applied to the 
corresponding tuples $(e,i,j)$. We also fix a similarly defined ordering 
of all function symbols $f^j_{\rho,i}$ arising from $\frakI$ as introduced 
in Section~\ref{sec_RosatiDef}. 
It is then straighforward to extend these to a linear ordering of all 
terms (of depth $2$ in the case of $\frakR_2(\frakI)$) by stipulating that 
constants, i.e.~terms of height zero precede all terms of height one, 
which in turn precede all terms of height two in the ordering; and that 
terms of the same height are ordered first according to their root 
function symbols, then according to their sets of subterms inductively. 
It is apparent that such an ordering of $\frakR_2(\frakI)$ can be computed 
in polynomial time given the ordered invariant $(\frakI,<)$.
Hence $\canon^<(\frakA) = (\frakR_2(\Inv^<(\frakA)),<)$ is well defined, 
polynomial-time computable and fulfills the claims of Theorem~\ref{thrm_canon}.

%//////////////////////////////////////////////////////////////////////////////////////////
%
%                              B I B L I O G R A P H Y 
%
%\\\\\\\\\\\\\\\\\\\\\\\\\\\\\\\\\\\\\\\\\\\\\\\\\\\\\\\\\\\\\\\\\\\\\\\\\\\\\\\\\\\\\\\\\\

%\nocite{ex1,ex2}

\bibliographystyle{latex8}
%\bibliography{latex8}

%\bibliographystyle{alpha}

%////////////////////////////////////////////////////////////////////////////////////
%
%                                 T H E   E N D 
%
%////////////////////////////////////////////////////////////////////////////////////

\end{document}